\theoremstyle{plain}
\newtheorem{assumption}{Assumption}
\newtheorem{theorem}{Theorem}
\newtheorem{proposition}{Proposition}
\newtheorem{lemma}{Lemma}
\theoremstyle{remark}
\newtheorem{example}{Example}
\newtheorem{definition}{Definition}
\newtheorem{remark}{Remark}
\pgfplotsset{compat=1.16}
\newcommand{\indicator}{\mathbbm{1}}
\newcommand{\norm}[1]{\left\Vert #1\right\Vert }
\newcommand{\real}{\mathbb{R}}
\newcommand{\perturb}[2]{#1_{#2}}
\DeclareMathOperator*{\argmax}{arg\,max}
\begin{document}

\begin{frontmatter}

\title{Discordant Relaxations of Misspecified Models}
\runtitle{Discordant Relaxations of Misspecified Models}

\begin{aug}
%
%
%
\author[id=au1,addressref={add1}]{\fnms{Lixiong }~\snm{Li}\ead[label=e1]{lixiong.li@jhu.edu}}
\author[id=au2,addressref={add2}]{\fnms{D\'esir\'e}~\snm{K\'edagni}\ead[label=e2]{dkedagni@unc.edu}}
\author[id=au3,addressref={add31,add32,add33}]{\fnms{Isma\"el}~\snm{Mourifi\'e}\ead[label=e3]{ismaelm@wustl.edu}}
\address[id=add1]{%
\orgdiv{Department of Economics},
\orgname{Johns Hopkins University}}

\address[id=add2]{%
\orgdiv{Department of Economics},
\orgname{The University of North Carolina at Chapel Hill}}

\address[id=add31]{%
\orgdiv{Department of Economics},
\orgname{University of Toronto}}

\address[id=add32]{%
\orgdiv{Department of Economics},
\orgname{Washington University in St. Louis}}

\address[id=add33]{%
\orgname{NBER}}

\end{aug}

\support{The authors thank three referees for very detailed and insightful reports, that were instrumental to significant improvements in the paper (the usual disclaimer applies). The authors also acknowledge the  helpful comments from Marc Henry, Hiro Kaido, Francesca Molinari, and Aureo de Paula, and seminar audiences at Boston University, and UCL. Mourifi\'e thanks the support from Connaught  and SSHRC Insight Grants \# 435-2018-1273.
All errors are ours.}

\coeditor{\fnm{[Name} \snm{Surname}; will be inserted later]}

\begin{abstract}
In many set-identified models, it is difficult to obtain a tractable characterization of the identified set. Therefore, researchers often rely on non-sharp identification conditions, and empirical results are often based on an outer set of the identified set. This practice is often viewed as conservative yet valid because an outer set is always a superset of the identified set. However, this paper shows that when the model is refuted by the data, two sets of non-sharp identification conditions derived from the same model could lead to disjoint outer sets and conflicting empirical results. We provide a sufficient condition for the existence of such discordancy, which covers models characterized by conditional moment inequalities and the \cite{Artstein1983} inequalities. We also derive sufficient conditions for the non-existence of discordant submodels, therefore providing a class of models for which constructing outer sets cannot lead to misleading interpretations. In the case of discordancy, we follow \cite{Masten2020} by developing a method to salvage misspecified models, but unlike them, we focus on discrete relaxations. We consider all minimum relaxations of a refuted model that restores data-consistency. We find that the union of the identified sets of these minimum relaxations is robust to detectable misspecifications and has an intuitive empirical interpretation.
\end{abstract}

\begin{keyword}
\kwd{Partial identification}
\kwd{identified/outer set}
\kwd{misspecification} 
\kwd{nonconflicting hypothesis}
\end{keyword}

\begin{keyword}[class=JEL] 
\kwd{C12}
\kwd{C21}
\kwd{C26}
\end{keyword}

\end{frontmatter}

\section{Introduction}

A central challenge in the structural estimation of economic models is that the hypothesized structure often fails to identify a single generating process for the data, either due to multiple equilibria or data observability constraints. In such a context, the econometrics of partially identified models have been trying to obtain a tractable characterization of parameters compatible with the available data and maintained assumptions (hereafter \textit{identified set}). A question of particular relevance in applied work is that it is often very difficult to find a tractable characterization of the identified set and then to obtain a valid confidence region for it. To avoid this difficulty, a large part of the literature has been trying to provide a confidence region for an \textit{outer set}, i.e., a collection of values for the parameter of interest that contains the identified set but may also contain additional values.\footnote{See \cite{Molinari2020} for a detailed  discussion.} 
Because of its tractability, constructing a confidence region for an outer set has been entertained in various topics of studies where the parameters of interest are only partially identified, see for instance \cite{Blundell2007}, \cite{Ciliberto2009}, \cite{Aucejo2017}, \cite{Sheng2020}, \cite{dePaula2018}, \cite{Dickstein2018}, \cite{Honore2020}, \cite{Chesher2020}, \cite{Gualdani2021}, and \cite{BerryCompiani2022},  among many others. 

In most empirical studies, obtaining a tight outer set is very often interpreted as evidence for a small and informative identified set.\footnote{A tight outer set here refers to an outer set that is very small and informative.} This is because, under correct specification, any outer set contains the identified set. In this paper, we examine the implications of using outer sets for models that could be misspecified. We say a model is misspecified if the identified set of the model parameters is empty. We use refutation and misspecification interchangeably in this paper.

The first main contribution of this paper is to characterize a class of models for which outer sets based on non-sharp identification conditions may be discordant. For this class of models, as long as the model is misspecified, there always exist two sets of non-sharp identification conditions that fail to detect the violation of the model and at the same time yield outer sets that are disjoint with each other. Our result covers a large class of models studied in the partial identification literature, including models whose identified set is characterized by intersection bounds, conditional moment inequalities, or the \cite{Artstein1983} inequalities. The discordancy that we find is a negative property because the result provided by an outer set could entirely be driven by the set of non-sharp identification conditions chosen by the researcher, and that we could always consider an alternative choice that provides a result that conflicts with the initial one.

Discordant outer sets only exist when the model is misspecified. In theory, a researcher could run a model specification test before using an outer set. However, in practice, although it is possible to construct a \emph{non-sharp} specification test that only checks the sufficient conditions for model misspecification, constructing a \emph{sharp} specification test, which checks the necessary and sufficient conditions for the emptiness of the identified set, is as challenging as obtaining a sharp characterization of the identified set. The non-tractability of the latter is often the motivation to use outer sets in the first place.\footnote{See for instance, \cite{Sheng2020}, \cite{Gualdani2021}, and the empirical application in \citet[Section 6, footnote 42]{BerryCompiani2022}.} 

Therefore, our result shows that the usage of outer sets based on non-sharp identification conditions is an unreliable compromise. It suggests that looking for the sharp characterization of the identified set must not only be viewed as a theoretical exercise but also has important empirical relevance. The identified set not only exhausts all the identification restrictions in the model structure and assumptions but is also immune to the possible misleading conclusions of discordant submodels.

Our warning against the usage of outer sets based on non-sharp identification conditions should not discourage researchers from relaxing stringent primitive model assumptions and replacing them with weaker ones. Although the identified set of a weaker set of primitive model assumptions is necessarily an outer set of the stringent original model, it is different from an outer set based on non-sharp identification conditions of the original model. The identified set, derived from weaker primitive assumptions, has a clear and precise interpretation akin to the empirical content inherent in weaker primitive assumptions due to its sharpness. In contrast, the choice of non-sharp identification conditions is often driven by analytical or computational tractability. Researchers often lack primitive interpretations for the outer set based on these non-sharp identification conditions. These outer sets are only relevant to the empirical analysis when viewed as conservative bounds for the sharp results. Yet, this view of conservative bounds could be misleading because, as we show in this paper, the conclusion drawn from these outer sets can be driven entirely by the choice of non-sharp identification conditions instead of the empirical content of the model.

However, discordant outer sets do not exist in all refuted models, especially when they are based on weaker primitive assumptions instead of non-sharp identification conditions. We then derive sufficient conditions for the non-existence of discordant submodels. This second result characterizes a class of models for which constructing outer sets cannot lead to misleading interpretations. In this case, outer sets would be conservative but always robust.

Prior to our work, various papers have been concerned about misspecification in partially identified models. An important focus has been dedicated to analyzing the impact of model misspecification on standard confidence regions used for set-identified models.  \cite{Bugni_etal2012} analyze the behavior of usual inferential methods for moment inequality models under local model misspecification.  \cite{Ponomareva2011}  and \cite{Kaido2013}  consider the impact of misspecification on semiparametric partially identified models, respectively, in the linear regression model with an interval-valued outcome and in a framework where some nonparametric moment inequalities are correctly specified and misspecification is due to a parametric functional form. See also \cite{Allen2020} who propose a method for statistical inference on the minimum approximation error needed to explain aggregate data in quasilinear utility models.
It is worth noting that if one tries to find a confidence region for an outer set, none of the inference methods, including those developed in the previously cited papers, can resolve the specific issue we are raising here. This is because two non-empty outer sets derived from the same underlying model structure can lead to discordant results. Adopting one of these outer sets without checking the validity of the underlying model could lead to misleading conclusions. Therefore, we need to suggest a more primitive approach to deal with these discordant results in this paper.

This objective leads to our second main contribution, which consists of providing a method to salvage models that are possibly misspecified because of the existence of discordant misspecified submodels or discordant nonempty outer sets. The main intuition is to construct some minimum relaxation of the full model by removing discordant submodels until all remaining submodels are compatible. Because there could be multiple ways to relax a model to restore data consistency, we take the union of the identified set of all these relaxed models. By doing so, we construct what we call the \textit{misspecification robust bound}. We provide general sufficient conditions under which our misspecification robust bound exists and also provide an intuitive empirical interpretation for it. Intuitively, we will say that a hypothesis is robust to misspecification if the hypothesis is compatible with all relaxed models that are data consistent and is implied by at least one of those data-consistent relaxed models.

The  misspecification robust bound concept is related to the minimally relaxed identified set introduced in  \cite{Andrews2019}, and to the falsification adaptive set concept introduced in \cite{Masten2020}. The primary departure from \cite{Masten2020} lies in our emphasis on discrete relaxations, while \cite{Masten2020} focused exclusively on relaxing assumptions in a continuous manner. In general, the use of discrete or continuous relaxation depends on the empirical application under scrutiny. We explore various features of discrete relaxations beyond its formal definition.

It is worth noting that discrete relaxations of misspecified models have been entertained in various existing papers, see for instance, \cite{Manski2000, Manski2009}, \cite{Blundell2007}, \cite{Kreideral2012}, \cite{Chen2018}, \cite{Kedagni2021}, \cite{Mourifie2020},  among many others. In these papers, when the initial model is too stringent, they suggested alternative weaker assumptions that are believed to be more compatible with the empirical application under scrutiny and for which the identified/outer set is not empty. However, some alternative reasonable relaxations may generate results that are discordant with what they suggested. To mitigate this issue, our misspecification robust bound approach suggests to collect information from all reasonable discordant minimum relaxations of the initial model.

We organize the rest of the paper as follows. Section \ref{sec:intro} introduces  two simple leading  examples that will illustrate our main contributions. Section \ref{sec:misleading_general} presents our general setting and main results on the characterization of discordant submodels. Section \ref{sec:robust_submodel} discusses a class of models for which constructing outer sets do not lead to misleading interpretations. Section \ref{sec:MRB} introduces the misspecification robust bound used to salvage misspecified models. Section \ref{App} provides a numerical illustration of the discordancy issue by visiting  the widely used entry game model studied in \cite{Ciliberto2009}, and also illustrates our  misspecification robust bounds in a return to education example. 
The last section concludes, and additional results and proofs are relegated to the appendix. 

 \section{Introductory examples}\label{sec:intro}
Although the main idea of this paper can be applied to general models, we begin with these two straightforward examples to illustrate our main contributions.

 \subsection{First leading  example: Intersection bounds}\label{Lead1}
Let us consider a special case of the intersection bounds in \cite{CLR2013} in which a parameter $\theta$ is bounded by the conditional mean of
an upper and lower bounds, 
\begin{equation}\label{eq:intersection_bounds}
  E[\underline{Y}|Z=z] \le \theta \le E[\overline{Y}|Z=z]\quad \text{ almost surely,}
\end{equation}
where $\underline{Y}$ and $\overline{Y}$ are two observable random bounds and  $Z$  is a vector of instrumental variables. Let $\mathcal{Z}$ be the support of $Z$, and define\footnote{The $\sup$ and $\inf$ operators in \eqref{eq:definition_of_inf_sup} should be understood as essential supremum and essential infimum respectively. } 
\begin{equation}\label{eq:definition_of_inf_sup}
  \underline{\gamma} \equiv \sup_{z\in \mathcal{Z}}E[\underline{Y}|Z=z] \quad\text{and} \quad \overline{\gamma} \equiv
\inf_{z\in \mathcal{Z}}E[\overline{Y}|Z=z].
\end{equation}
The identified set of $\theta$ is the interval $[\underline{\gamma}, \overline{\gamma}]$ when $\underline{\gamma} \leq \overline{\gamma}$. 
We assume the following regularity condition holds in this example. 

\begin{assumption}\label{assu:reg}
Assume $E|\underline{Y}| < \infty$ and $E|\overline{Y}|<\infty$. In addition,
assume that the conditional expectations $E[\underline{Y}|Z]$, and $E[\overline{Y}|Z]$ exist and $E[\underline{Y}|Z] \le
E[\overline{Y}|Z]$ almost surely.
\end{assumption}
 
This simple framework encompasses some important treatment effect models, including discrete and continuous treatment models, see for instance \cite{Manski1990, Manski1994} and \cite{Kim2018} among many others.\footnote{For the sake of conciseness those examples are discussed in more details in Appendix \ref{intersec}.} In practice, model \eqref{eq:intersection_bounds} is sometimes implemented by solving its unconditional version,
\begin{equation}\label{eq:uncond_implementation}
  E\big[h(Z)(\theta - \underline{Y}) \big] \ge 0 \hspace{0.2em} \text{ and }\hspace{0.2em}
  E\big[h(Z)(\overline{Y} - \theta) \big] \ge 0,
\end{equation}
where $h$ is some nonnegative function mapping its input to $\mathbb{R}_+^m$ with $m < \infty$, and the inequalities in 
\eqref{eq:uncond_implementation} are vector inequalities.  The inference for
\eqref{eq:uncond_implementation} is typically much simpler than the inference for the original model
\eqref{eq:intersection_bounds}, especially when $Z$ is multi-dimensional. 
Let $\widetilde{\Theta}(h)$ be the identified set for $\theta$ in model \eqref{eq:uncond_implementation}. As made
explicit in the notation, $\widetilde{\Theta}(h)$ depends on the choice of instrumental function $h$. 
However, since \eqref{eq:intersection_bounds} implies \eqref{eq:uncond_implementation}, we know that for every choice of
$h$, $\widetilde{\Theta}(h)$ is always an outer set  of the interval $[\underline{\gamma}, \overline{\gamma}]$,
the identified set for $\theta$ in model \eqref{eq:intersection_bounds}, i.e., $[\underline{\gamma}, \overline{\gamma}] \subseteq \widetilde{\Theta}(h)$. This inclusion relation is often used as a justification for using model \eqref{eq:uncond_implementation}. Its identification result $\widetilde{\Theta}(h)$ is often viewed as a conservative bound for $[\underline{\gamma}, \overline{\gamma}]$, the identified set for model  \eqref{eq:intersection_bounds}.

Our first observation is that the result based on $\widetilde{\Theta}(h)$ is not always reliable. Later in Section \ref{sec:intersectio_IV}, we show that when the identified set of \eqref{eq:intersection_bounds} is empty, i.e., $\underline{\gamma} > \overline{\gamma}$, there always exist two $h$ and $h'$ such that both $\widetilde{\Theta}(h)$ and $\widetilde{\Theta}(h')$ are nonempty but $\widetilde{\Theta}(h)\cap \widetilde{\Theta}(h')$ is empty.  Thus, two researchers could apply the same model on the same data set and yet draw completely different conclusions from the outer sets by choosing different $h$ functions. For example, if one observes $\widetilde{\Theta}(h) \subseteq (0, +\infty)$ for some $h$, one should not jump directly to the conclusion that the sign of $\theta$ is positive without verifying the non-emptiness of the identified set, since in the case of emptiness, there are circumstances under which  another researcher may choose an alternative $h'$ such that  $\widetilde{\Theta}(h) \subseteq (-\infty,0)$. 

This caveat of outer sets is somewhat overlooked in the literature. As we listed some papers in the introduction, it is common for researchers to construct a confidence interval for an outer set and draw conclusions based solely on its result. If the model is refutable and a researcher only studies an outer set in the empirical analysis without knowing whether the identified set is empty or not, results based on an outer set could be misleading in the intersection-bound model. In section \ref{sec:misleading_general}, we show that this caveat is indeed a concern for some widely used partial identification models, and in Section \ref{App}, we illustrate this discordancy issue using the entry game model studied in \cite{Ciliberto2009}. 

On the other hand, there exist models for which constructing outer sets does not lead to misleading interpretations. In the following, we will introduce a model that belongs to this category as our second leading example. We study this class of models in more detail later in Section \ref{sec:robust_submodel}.

 \subsection{Second leading  example: Adaptive Monotone IV (AMIV)}\label{Lead2}
  Consider the following potential outcome model: $Y = \sum_{z\in \mathcal{Z}}\indicator(Z = z)[Y_{1z}D + Y_{0z}(1-D)]$,
where the treatment $D$ is binary and the support $\mathcal{Z}$ of instrument $Z$ is discrete and finite.  $Y_{dz}$ is
the potential outcome when the treatment and the  instrument are externally set to $d$ and $z$, respectively. Assume $Z$ is one-dimensional and assume, without loss of
generality that $\mathcal{Z} = \{1, ..., k\}$.  We are interested in the average potential outcomes $\theta_d
= \sum_{z}P(Z = z)E Y_{dz}$ for $d \in \{0,1\}$, and then average treatment effect (ATE), i.e., $\theta_1-\theta_0$.
In this framework, the seminal work of \cite{Manski1990}  derived sharp bounds on the ATE under three main assumptions: \ref{enu:support} the bounded support of the potential outcomes,  \ref{enu:MI} the exclusion restriction, i.e., $E Y_{dz}=E Y_{dz'}$ for $z\neq z'$, and \ref{enu:monotone} the mean independence assumption, i.e., $E[Y_{dz}|Z] = E[Y_{dz}]$ for all $z \in \mathcal Z$.
However, there are some empirical evidences ---for instance \cite{Ginther2000}, where the identified for the ATE  proposed by Manski is empty ---suggesting a violation of the set of these assumptions. To be able to say something meaningful on the ATE in such a context, we introduce the following assumptions, which is one way to relax  \citeauthor{Manski1990}'s (\citeyear{Manski1990}) assumptions. 

For any $z\in \{1, ..., k\}$, define assumption $a_z$ to be the collection of the following assumptions:

\begin{enumerate}[label=\emph{E.\arabic*}]
  \item\label{enu:support} for each $d\in\{0,1\}$ and any $t\in \{1, ..., k\}$, $P(Y_{dt}\in [\ \underline{y}_d\ ,\  \overline{y}_d\ ]) = 1$. 
\item\label{enu:MI} for each $d\in \{0,1\}$ and any $t\in \{1, ..., k\}, E[Y_{dt}|Z] = E[Y_{dt}]$ almost surely.
\item\label{enu:monotone}for each $d\in\{0,1\}$,  $Y_{dt} \le Y_{dt'}$ for all $t\le t'$, and $Y_{dt} = Y_{dz}$ for
  all $t \ge z$.
\end{enumerate}

Each assumption $a_z$ has three parts. $\ref{enu:support}$ requires the potential outcomes to have a bounded support. \ref{enu:MI} is a mean independence assumption associated to the potential outcome $Y_{dz}$. The novelty here is \ref{enu:monotone}, which is an adaptive  relaxation of the exclusion restriction. 
Indeed, in the  extreme case when $z=1$, \ref{enu:monotone}  is equivalent to the full exclusion restriction, i.e., 
$Y_{dz} = Y_{dz'}$ for all $d$, $z$ and $z'$, then \ref{enu:MI} and \ref{enu:monotone} are equivalent to $E[Y_{d}|Z] = E[Y_{d}]$, which is the restriction under which \cite{Manski1990} derived bounds on the ATE.
On the other extreme, when $z=k$,
\ref{enu:MI}  and \ref{enu:monotone} imply the MIV assumption introduced in \cite{Manski2000}, i.e., $z_1 < z_2 \Rightarrow E[Y_d\vert Z=z_1] \leq E[Y_d\vert Z=z_2]$.
However, when  $1<z< k$, we are in a middle-ground situation where the exclusion restriction is relaxed in such a way that $Y_{dz'}$ is monotone in $z'$, but remains flat for $z'\ge z$.  See Figure \ref{fig:illustration_Ydz} for an illustration of how $Y_{dz}$ depends on $z$ under \ref{enu:monotone}.

Because the cut-off point $z$ would be set based on the data through techniques elaborated in Section \ref{sec:MRB}, we refer to this assumption as the Adaptive Monotone IV (AMIV) assumption. The economic rationality of the AMIV is that, even if $Z$ is not a valid IV because it could positively affect the potential outcome, in some empirical contexts, it could be reasonable to consider that the marginal effect of the IV on the potential outcome becomes null after a certain cut-off point.

For example, let us consider the well-studied topic of measuring the returns to a college degree. In this empirical context, researchers have often used parental education as an instrumental variable for college education. Nonetheless, some argue that this instrument may not be valid. They point out that parents with higher education tend to have children with better unobserved skills that significantly impact potential earnings, despite the possibility that the marginal benefit of one extra year of a parent's education may be diminishing.
The AMIV assumption introduced in this subsection combines these two features. In terms of our notation, $Y$ is observed wage, $D$ is an indicator for college attendance, $Z$ is parental education, and $Y_{dz}$ is the potential wage when college attendance $D$ and parental education $Z$ are externally set to $d$ and $z$, respectively. In the context here, the AMIV assumption acknowledges that parental education may positively influence children, while also accommodating the potential for its marginal impact to vanish beyond a certain threshold. We elaborate on this application in Section \ref{Emp}.

\begin{figure}[!ht]
  \begin{center}
\scalebox{0.9}{\input{"illustate_Ydz.pgf"}}
  \end{center}
  \caption{Illustration of restriction \ref{enu:monotone} when $z=3$ and $k=5$.}\label{fig:illustration_Ydz}
\end{figure}
Notice that by construction, for all $z=1,...,k-1$, $a_{z}$ implies $a_{z+1}$. Therefore, we have $\Theta\left(\{a_{z}\}\right) \subseteq \Theta\left(\{a_{z+1}\}\right)$ for all $z \in \{1,...,k-1\}$. In section \ref{sec:robust_submodel}, we will show that this nested structure of the assumptions may lead to a situation where we cannot have discordant submodels. More generally, we will derive  sufficient conditions for the non-existence of discordant submodels. 

When comparing these two leading examples, it is important to recognize that the motivation that leads to the construction of the outer sets or the relaxation of the initial model is different. In the first example, the outer sets are generated from non-sharp restrictions and are only used as a device aiming to provide useful information about the identified set. This situation usually occurs when one does not know how to conduct inference directly for the identified set or when the inference for the identified set is computationally intractable to implement.
In Section \ref{sec:misleading_general} below, we will explore why such an approach could lead to unreliable results. In the second example, the outer sets are introduced in a more constructive manner aiming to relax a model refuted by the data. They are based on more primitive assumptions on the latent variables. In such a case, it is often possible to provide a series of outer sets that are not discordant with others. We will discuss this case in Section~\ref{sec:robust_submodel}.


%
%

\section{Misleading Submodels}\label{sec:misleading_general}

We view a model as a collection of constraints on the latent, observable variables, and the parameters. Throughout the paper, the parameter space $\Theta$ is assumed to be some subset in a metric space, which can be of finite or infinite dimensions. Let $A$ be some nonempty collection of these constraints. We consider $A$ as the \textit{full model} (or simply model when there is no confusion) and, $A' \subsetneq A$ as a \textit{submodel}. For any nonempty subset $A'\subseteq A$, we use $\Theta_I(A')$ to denote the set of parameter values that satisfy  all the constraints in $A'$. For each $a\in A$, we abbreviate $\Theta_I(\{a\})$ as $\Theta_I(a)$. Let $\emptyset$ denote the empty set. For our purpose, $\Theta_I(\emptyset)$ can be an arbitrary nonempty subset of $\Theta$ such that $\Theta_I(a)\subseteq \Theta_I(\emptyset)$ for any $a \in A$. For the sake of simplicity, when there is no confusion, we will just use $\Theta$ to refer to $\Theta_I(\emptyset)$.

By definition, $\Theta_I(A') \subseteq \Theta_I(A'')$ if  $A'' \subseteq A'$. As a result, for any $A' \subseteq A$, $\Theta_I(A')$ is an outer set of $\Theta_I(A)$.  Moreover, we say a submodel $A'$ is \emph{data-consistent} if $\Theta_I(A')$ is nonempty, and call it \emph{refuted} or \emph{misspecified} if the reverse is true. Since $\Theta_I(A) \subseteq \Theta_I(A')$ for any
$A'\subseteq A$, if the model $A$ is data-consistent, we know that each
$A'\subseteq A$ is also data consistent, and $\Theta_I(A')\cap \Theta_I(A'')$ is
nonempty for any two submodels $A'$ and $A''$.

In this section, we focus on $A$ that consists of constraints that could be  written in terms of the observable variables and model parameters only. This includes models where the identified sets are entirely characterized by a set of moment (in)equalities, including generalized method of moments (GMM) models,  or \cite{Artstein1983} inequalities involving only observables and the parameter of interest. We refer to this type of constraints as identification conditions. 
These identification conditions are often derived from primitive assumptions involving the latent variables.  In this restricted framework, we view $A$ as the sharp identification conditions, and we call $A'$ nonsharp identification conditions if  $\Theta_I(A) \subsetneq \Theta_I(A')$.  
For instance, in the first leading  example, the role of $A$ is played by the set of moment inequalities 
\eqref{eq:uncond_implementation} indexed by all instrumental functions $h$, with  $\Theta_I(A)=[\underline{\gamma},\overline{\gamma}]$. A submodel  $A' \subseteq A$ could refer to \eqref{eq:uncond_implementation} for a specific function $h$, with $\Theta_I(A')=\widetilde{\Theta}(h)$. If  $[\underline{\gamma}, \overline{\gamma}] \subsetneq \widetilde{\Theta}(h)$, then $A'$ is a nonsharp identification condition. 

When identification conditions are written in terms of the observable variables and model parameters only, they satisfy the following assumption.

\begin{assumption}\label{assu:mis_2}
For any $A'\subseteq A$, $\Theta_I(A') = \cap_{a\in A'}\Theta_I(a)$.
\end{assumption}

For example, all moment (in)equality models satisfy this assumption. This assumption may not hold when some $a \in A$ involves primitive restrictions on latent variables, as is the case in our second leading example.

In empirical works with partially identified models, researchers often use nonsharp identification conditions instead of the sharp ones. This is often motivated by two reasons: (i) sometimes the researchers may not even know the sharp characterization of the identified set, and (ii) the sharp identification conditions might be computationally intractable given the existing inferential methods.\footnote{See, for example, Berry and Compiani (2022, Section 6, footnote 42).}  When empirical results are based on an outer set obtained from nonsharp identification conditions, they are traditionally viewed as conservative yet valid because these outer sets are always supersets of the sharp identified set obtained using the sharp identification conditions.
However, we are going to present a theorem that shows that outer sets obtained from nonsharp identification conditions are not always reliable and could potentially be misleading in the presence of model misspecification. We need the following assumption for this formal result.

\begin{assumption}\label{assu:mis_1}
There exists a collection $\mathscr{C}$ of subsets of $A$ such that 
    \begin{enumerate}
    \item [(1)] $\forall A'\in \mathscr{C}$, $A'$ is data-consistent and consists of finite elements in $A$,
    \item [(2)] $\Theta_I(\cup_{A'\in \mathscr{C}}A') = \Theta_I(A)$,
    \item [(3)] either $\mathscr{C}$ is finite or, for each $A'\in \mathscr{C}$, $\Theta_I(A')$ is compact.  
    \end{enumerate}
\end{assumption}

To clarify this assumption, we start with the simple case where all the conditions can be verified for $\mathscr{C} = \{\{a\}: a\in A\}$. In such a scenario, Assumption \ref{assu:mis_1} breaks down into the following two parts: (\emph{i}) every $a\in A$ is data-consistent, and (\emph{ii}) either $A$ is finite or $\Theta_I(a)$ is compact for each $a\in A$. Part (\emph{i}) ensures that $A$ will not be refuted because a specific $a\in A$ is refuted. When $A$ is finite, this is enough to ensure that identification condtions are not mutually compatible when $A$ is refuted. This reasoning can also apply to cases when $A$ has infinite elements, given the compactness condition in part (\emph{ii}) is true. In Assumption \ref{assu:mis_1}, we permit $\mathscr{C}$ to be formulated in ways beyond $\mathscr{C} = \{\{a\}: a\in A\}$. By doing so, $\Theta_I(a)$ need not necessarily be compact for each $a\in A$ in infinite cases. We only require the identified set for some finite combinations of $a\in A$ is compact. This flexibility is helpful, for example, when some identification conditions determine the lower bound for parameters, while others define the upper bound. 

We would verify Assumption \ref{assu:mis_1} with $\mathscr{C} = \{\{a\}: a\in A\}$ for the first leading example in the next subsection. For an example of $\mathscr{C}$ being constructed in other ways, see  Appendix~\ref{sec:cond_moment_ineq} where Assumption \ref{assu:mis_1} is verified for conditional moment inequalities models. We are now ready to state the formal result.

\begin{theorem}\label{thm:misleading}
	Under Assumptions \ref{assu:mis_2} and \ref{assu:mis_1}, $\Theta_I(A)= \emptyset$ if and only if there exist two finite subsets $A', A''\subseteq A$ such that both $A'$ and $A''$ are data-consistent and $\Theta_I(A')\cap \Theta_I(A'') = \emptyset$.

Moreover, when $\Theta_I(A) = \emptyset$, for any data-consistent $B\subseteq A$, there exists two finite subset $B', B''\subseteq A$ such that both $B\cup B'$ and $B''$ are data-consistent and $\Theta_I(B\cup B')\cap \Theta_I(B'') = \emptyset$.
\end{theorem}

Theorem \ref{thm:misleading} tells us that when the model is refuted, an outer set derived from one set of nonsharp identification conditions could be completely different from an outer set obtained from a different set of nonsharp identification conditions. In such a case, the information delivered by an outer set depends mainly on which nonsharp identification conditions the researcher decides to use. Therefore, for this class of models, applied researchers must be very careful in interpreting outer sets based on nonsharp identification conditions. 
We will illustrate this point further with the first leading example in the next subsection. Later, in Section \ref{App:CT}, we provide a numerical illustration of Theorem \ref{thm:misleading} when applied to an entry game example.

Moreover, Theorem \ref{thm:misleading} shows that, for any data-consistent nonsharp identification condition, there always exists another set of nonsharp identification conditions that are discordant with some of its strengthened versions. This suggests that the issue of discordancy is not confined to certain pairs of outer sets. When the model is refuted, we can start with any data-consistent outer set and further tighten its bounds by applying more restrictions. After incorporating just a finite number of additional restrictions, it will inevitably be in conflict with another data-consistent outer set, even if the cardinalities of $A$ and $\mathscr{C}$ are uncountably infinite.

Although Theorem \ref{thm:misleading} seems to focus essentially on partially identified models, it also applies to point-identified models. Suppose that each outer set derived from each single identification condition is a singleton, i.e., $\Theta_I(a)$ is a singleton for all $a \in A$. Then, Theorem \ref{thm:misleading} says that the model is misspecified if and only if there exist two different identification conditions $a, a' \in A$ such that $\Theta_I(a) \neq \Theta_I(a')$. This is related to a long-existing observation in point-identified models: whenever a model is over-identified, one can test the model specification by comparing the point-estimates obtained from different identification conditions.
In point-identified models, the issue of over-identification or misspecification is a direct concern for the researchers. In partially identified models, however, applied researchers tend to believe that their results are more credible and thus less sensitive to misspecification. Therefore, they often interpret the tightness of an outer set as a signal of an informative identified set, as discussed in \cite{Molinari2020} regarding the ``usefulness" of outer sets. 
Theorem \ref{thm:misleading} provides a different perspective that, for a certain class of models, outer sets could be misleading. Therefore, applied researchers should be very careful in interpreting outer sets based on nonsharp identification conditions, even if the model is only partially identified. In the following, we analyze the implications of Theorem \ref{thm:misleading} on our first leading example.


\subsection{Intersection bounds example continued}\label{sec:intersectio_IV}
To begin, let us construct $A$ in this example. Define $\mathcal{H}^+_m$ to be the space of all nonnegative instrumental functions with dimension $m$. More formally, let $\mathcal{H}^+_m \equiv \{h: \mathcal{Z}\mapsto \mathbb R^m_+ \text{ such that } E\norm{h(Z)} < \infty$, $E\norm{\underline{Y}h(Z)} < \infty$, $E\norm{\overline{Y}h(Z)} < \infty \text{ and } E[h_i(Z)] > 0, \; \forall \; i=1,...,m \}$.
 Let $A$ be the set of all identification conditions \eqref{eq:uncond_implementation}  indexed by $h\in \mathcal{H}^+_1$.     
   The set $A$ constructed here represents sharp identification conditions, because \eqref{eq:intersection_bounds} holds if and only if \eqref{eq:uncond_implementation} holds for all $h\in \mathcal{H}^+_1$. 

Next, we verify Assumptions \ref{assu:mis_2} and \ref{assu:mis_1} hold. Note that, for each $a\in A$, $\Theta_I(a)$ is nonempty and compact. This is because each $a\in A$ corresponds to an $h\in \mathcal{H}^+_1$, and because
  for all $h\in \mathcal{H}^+_1$, the identified set $\widetilde{\Theta}(h)$ for model \eqref{eq:uncond_implementation} is equal to the following interval
  \begin{equation*}
    \left[ \frac{E[h(Z)\underline{Y}]}{E[h(Z)]},\  \frac{E[h(Z)\overline{Y}]}{E[h(Z)]}\right],
  \end{equation*}
  which is nonempty and compact under Assumption \ref{assu:reg}. We can let $\mathscr{C} = \{\{a\}: a\in A\}$. Then, $\Theta_I(A')$ is nonempty and compact for all $A'\in \mathscr{C}$. Moreover, $\Theta_I(A) = \Theta_I(\cup_{A'\in \mathscr{C}}A')$ by the construction of $\mathscr{C}$. Thus, Assumption \ref{assu:mis_1} is satisfied. Since \eqref{eq:uncond_implementation} are moment inequalities which only
  depend on observables and the parameter, Assumption \ref{assu:mis_2} is also satisfied.

  Note that, in this example, if $B\subseteq A$ and $B$ consists of $m$ assumptions, then $B$ refers to the submodel that
  \eqref{eq:uncond_implementation} holds for some $h\in \mathcal{H}^+_m$. 
  As a result, Theorem \ref{thm:misleading} implies that when \eqref{eq:intersection_bounds} is refuted, there must exist
  some $h_1 \in \mathcal{H}^+_{m_1}$ and $h_2\in \mathcal{H}^+_{m_2}$, such that $\widetilde{\Theta}(h_1)\neq \emptyset$
  and $\widetilde{\Theta}(h_2)\neq \emptyset$ but $\widetilde{\Theta}(h_1)\cap \widetilde{\Theta}(h_2)$ is empty. 
  In fact, because of the specific structures of  this example, we can even obtain the following  stronger result.
  
%
%
%
%
\begin{proposition}\label{thm:uncond_minothold}
  Suppose Assumption \ref{assu:reg} holds.  If the restriction in \eqref{eq:intersection_bounds} is
  refuted, i.e., $\underline{\gamma} > \overline{\gamma}$, then, for any $\theta$ in 
  $(\overline{\gamma}, \underline{\gamma})$, there exists some $h\in \mathcal{H}^+_{2}$ such that
  $\widetilde{\Theta}(h) = \{\theta\}$. Conversely, if there exists some integer $m$ and some $h\in \mathcal{H}^+_m$
  such that $\widetilde{\Theta}(h) = \{\theta\}$, then $\theta\in [\overline{\gamma}, \underline{\gamma}]$. 
\end{proposition}

When \eqref{eq:intersection_bounds} is refuted, Proposition \ref{thm:uncond_minothold} shows that the
unconditional moment restrictions can point identify any element in the crossed bound $(\overline{\gamma},
\underline{\gamma})$ with a properly chosen instrumental function. 
The width of $(\overline{\gamma}, \underline{\gamma})$ depends on the extent of the model violation: the worse the violation is, the wider this interval would be. In the extreme case where the mean independence condition is significantly violated such that $[E\underline{Y}, E\overline{Y}]\subseteq (\overline{\gamma}, \underline{\gamma})$, it implies that any point in the Manski worst-case bounds can be selected as the point identification result by a suitable choice of $h$. 

Proposition \ref{thm:uncond_minothold} also sheds some light on the implementation of the inference procedure in \cite{AS2013}, which is one of the most popular inference procedures used for conditional moment inequalities. \cite{AS2013} converts the conditional moment inequalities into unconditional moment inequalities in the same way as we transformed, in the first leading example, \eqref{eq:intersection_bounds} into \eqref{eq:uncond_implementation}. A notable distinction in \citeauthor{AS2013}'s~(\citeyear{AS2013})  approach is that the number of instruments, i.e., the dimension of $h$ in our notation, increases to infinity as the sample size grows. Our results show that the usage of infinite number of instruments in the limit is crucial to achieve reliable inference results. In contrast, if researchers implement their inference but choose an instrumental function with a finite and fixed dimension, their results could be spuriously informative and misleading, as underscored in Proposition \ref{thm:uncond_minothold}. See also the formal results in Appendix \ref{sec:cond_moment_ineq}.

It is worth noting that Theorem \ref{thm:misleading} applies to much more general frameworks. More precisely, in Appendices \ref{sec:cond_moment_ineq} and \ref{sec:random_set_capacity}, we provide sufficient conditions under which Theorem \ref{thm:misleading} applies to two widely used classes of partially identified models. The first is a class of models where the identified set is characterized by the following type of conditional moment inequalities:
\begin{equation}\label{GMI}
E[m(X,Z;\theta)|Z] \le 0 \text{ almost surely}.
\end{equation}
Here, $X \in \real^{k_1}$ and $Z\in \real^{k_2}$ are observable random variables, and $m(\cdot, \cdot\ ;\theta) \in \real$ is some known integrable function for each $\theta$. If researchers construct outer sets by transforming (\ref{GMI}) into finite-dimensional unconditional moment inequalities, then a similar discordancy issue may happen, as the one we have seen in the first leading example. See Appendix \ref{sec:cond_moment_ineq} for more details.

The second class of models that Theorem \ref{thm:misleading} could be applied to is the class for which the identified set can be characterized by \citeauthor{Artstein1983}'s (\citeyear{Artstein1983}) inequalities. In her recent survey, \cite{Molinari2020} shows that this class of models includes simultaneous-move finite games with multiple equilibria, auction models with independent private values, network formation models, treatment effect models, etc. In many of those cases, the number of \citeauthor{Artstein1983}'s (\citeyear{Artstein1983}) inequalities that characterize the (sharp) identified set is extremely high (very often much higher than the sample size of the data under use). In practice, for the sake of computational feasibility, empirical researchers often pre-select a finite collection of Artstein's inequalities to obtain an outer set. As examples, we could cite \cite{Ciliberto2009}, \cite{Haile2003}, \cite{Sheng2020}, \cite{Chesher2020}, and \cite{BerryCompiani2022}, among many others. In Appendix \ref{sec:random_set_capacity}, we show that those pre-selected nonsharp moment inequalities suffer the same issue pointed out in Theorem \ref{thm:misleading}.

In Section \ref{App}, we provide a numerical illustration of the discordancy issue by visiting the widely used entry game model studied in \cite{Ciliberto2009}. We will explore in more detail the consequences of this pre-selection procedure when the original model might be refuted and illustrate the implication of our Theorem \ref{thm:misleading} in this widely used framework.

\section{Compatible Submodels and Minimum Data-Consistent Relaxation}\label{sec:robust_submodel}

As discussed in the previous section, there could be discordant submodels when the full model is refuted. However, the falsification of the full model does not necessarily lead to discordance of the submodels. Unlike in the previous section, we now consider $A$ that consists of primitive assumptions on latent variables in addition to those on observable variables and the parameter of interest. In this wider class of models, we will present a sufficient condition that ensures that all data-consistent submodels are always compatible with each other. In this section, for the sake of simplicity, we focus on the case where $A$ is finite. For the case where $A$ is infinite, similar results could be derived under additional conditions. We elaborate on those conditions and results in Appendix \ref{sec:existence_of_minimum_relaxation}.



To state our result, we need to introduce a new concept. When the full model is refuted, we can obtain a data-consistent submodel by dropping or relaxing some of the assumptions. We say that a data-consistent submodel is a minimum relaxation if we relax the minimum number of assumptions needed to restore data consistency.

\begin{definition}\label{def:min_relax}
Let $\widetilde{{A}}$ be a subset of ${A}$. We say $\widetilde{{A}}$ is a \emph{minimum
data-consistent relaxation} of ${A}$ if $\Theta_I(\widetilde{{A}})$ is nonempty and for any $a\in
{A}\backslash \widetilde{{A}}$, $\Theta_I(\widetilde{A} \cup {a})$ is empty.
\end{definition}

It is worth noting that the concept of a minimum data-consistent relaxation depends on how the researcher defines each of the simple assumptions $a$ that constitute $A$. Therefore, all the subsequent results relying on the minimum data-consistent relaxation depend on the way the researcher constructs $A$. We will return to this point in Section \ref{sec:nonconflicting_statement}. Furthermore, a minimum data-consistent relaxation always exists when $A$ is finite; its existence requires additional conditions when $A$ is infinite. See Appendix \ref{sec:existence_of_minimum_relaxation}.
 To illustrate this concept, let us consider a simple example
where $A = \{a_1, a_2, a_3\}$. The identified sets of each $a_i$ are all closed intervals in $\real$ as shown in Figure \ref{fig:three_interval} with $\Theta_I(a_1) = [b,c]$, $\Theta_I(a_2) = [d, e]$, $\Theta_I(a_3) = [f, g]$ and $f\le b \le c < d \le e \le g$. Assume also for the purpose of illustration that $\Theta_I(\{a, a'\}) = \Theta_I(a)\cap\Theta_I(a')$ for $a,a'\in \{a_1, a_2, a_3\}$.

\begin{figure}[!ht]
  \caption{The three-interval example}
  \label{fig:three_interval}
\centering
\begin{tikzpicture}[line cap=round,line join=round,>=triangle 45,x=3.5cm,y=2.1cm, scale=1, every node/.style={scale=1.2}]
\clip(0.75,-0.4) rectangle (3.75,0.8);
\draw [line width=2.pt] (1.,0.5)-- (2.,0.5);
\draw [line width=2.pt] (2.5,0.5)-- (3.5,0.5);
\draw [line width=2.pt] (0.8,0.15)-- (3.6,0.15);

\begin{scriptsize}
\draw [fill=black] (1.,0.5) circle (1.0pt);
\draw[color=black] (1.0,0.4) node {$b$};
\draw [fill=black] (2.,0.5) circle (1.0pt);
\draw[color=black] (2.0,0.4) node {$c$};
\draw[color=black] (1.5,0.60) node {$\Theta_I(a_1)$};
\draw [fill=black] (2.5,0.5) circle (1.0pt);
\draw[color=black] (2.5,0.4) node {$d$};
\draw [fill=black] (3.5,0.5) circle (1.0pt);
\draw[color=black] (3.5,0.4) node {$e$};
\draw[color=black] (3, 0.60) node {$\Theta_I(a_2)$};
\draw [fill=black] (2.3,0.15) circle (1.0pt);
\draw[color=black] (0.8,0.05) node {$f$};
\draw [fill=black] (3.8,0.15) circle (1.0pt);
\draw[color=black] (3.6,0.05) node {$g$};
\draw[color=black] (2.2,0.25) node {$\Theta_I(a_3)$};
\end{scriptsize}
\end{tikzpicture}
\end{figure}

  In this example, both $\{a_1, a_3\}$ and $\{a_2, a_3\}$ are minimum data-consistent relaxations. And, $\{a_3\}$ is not a minimum data-consistent relaxation, since it will remain data-consistent after including $a_1$ or $a_2$. 
In general, minimum data-consistent relaxations may or may not be unique. We will defer the discussion of multiple minimum data-consistent relaxations to the next section. In this section, we focus on the situation where there exists a unique minimum data-consistent relaxation. In fact, the uniqueness of the minimum data-consistent relaxation ensures the absence of discordancy issues discussed in the previous section. More precisely, in Appendix \ref{DMDCR}, we show that the existence of two discordant submodels is equivalent to the existence of two minimum data-consistent relaxations that are also discordant each other under some regularity conditions. Therefore, the uniqueness of the minimum data-consistent relaxation ensures the absence of discordant sub-models.

The following result describes the conditions under which there exists a unique minimum data-consistent relaxation.

\begin{theorem}\label{thm:compatible_submodel} Suppose $A$ is finite. Then the following statements are equivalent:
\begin{enumerate}[label={(T\ref{thm:compatible_submodel}.C\arabic*)}]
\item\label{enu:ind_assumptions} for any $A'\subseteq A$, $A'$ is data-consistent if and only if all $a\in A'$ are data-consistent.
\item \label{enu:unique_MDR} There exists a unique minimum data-consistent relaxation $A^*$.
\end{enumerate}
\end{theorem}

Theorem \ref{thm:compatible_submodel} through its condition \ref{enu:ind_assumptions} provides a way to check whether there exists a unique minimum data-consistent relaxation $A^*$. 
Note that Condition \ref{enu:ind_assumptions} does not hold in every model: in a general model, each $a\in A'$ being data-consistent is necessary but not sufficient for $A'$ to be data-consistent, because individual data-consistent assumptions might not be mutually compatible when combined.
Condition \ref{enu:ind_assumptions} can be verified by investigating when each $A' \subseteq A$ is data-consistent, which could be done even without seeing the data. In terms of interpretation, condition \ref{enu:ind_assumptions} implies that all data-consistent submodels are compatible with each other. It also implies that the set of data-consistent submodels, i.e., $\{A'\subseteq A: \Theta_I(A')\neq \emptyset\}$, is closed under the union operation: the union of data-consistent submodels remains data-consistent.

When the full model $A$ is data-consistent, the unique minimum data-consistent relaxation $A^*$ is just equal to $A$. When $A$ is refuted, $A^*$ can be viewed as the model learned from the data by removing all refuted assumptions in $A$ while keeping all the data-consistent ones. Indeed, condition \ref{enu:ind_assumptions} suggests that $A^*=\{a \in A:\Theta_I(a)\neq \emptyset\}$. The interpretation of $A^*$ and its role as the unique minimum data-consistent relaxation will be studied further in Appendix \ref{nc}. One way to illustrate condition \ref{enu:ind_assumptions} is to consider our second leading example.

\subsection{AMIV  example continued} \label{sec:AMIV}

Recall that by construction, for all $z=1,...,k-1$, $a_{z}$ implies $a_{z+1}$. In addition, define $a^\dagger$ as the collection of
\ref{enu:support} and \ref{enu:MI}. 
Let $A = \{a_1,...,a_k, a^\dagger\}$ be the collection of all assumptions. Then, the full model $A$ is the classic mean independence assumption considered in \cite{Manski1990}.
Within this second  leading  example, all assumptions are nested. That is, for any two $a,a'\in A$, either $a$ implies $a'$ or $a'$ implies $a$. Therefore, the data-consistency of a set of assumptions is equal to the data-consistency of the strongest assumption in that set, which implies the validity of \ref{enu:ind_assumptions}.\footnote{However, it is worth noting this nested structure is not necessary for  condition \ref{enu:ind_assumptions} to hold.}
 Therefore, \ref{enu:ind_assumptions} holds in this example. Theorem~\ref{thm:compatible_submodel} then implies that all data-consistent submodels will be compatible with each other and there exists a unique minimum data-consistent relaxation $A^*$.
The following result characterizes the identified set of $A^*$. To state the result, we use the following notations: $\underline{Y}_d = Y\indicator(D = d) + \underline{y}_d \indicator(D \neq d) $, 
$\overline{Y}_d = Y\indicator(D = d) + \overline{y}_d \indicator(D \neq d)$,
$\underline{q}_{dz}=E[\underline{Y}_d|Z = z]$, and $\overline{q}_{dz} = E[\overline{Y}_{d}|Z = z]$. 

\begin{proposition}\label{prop:Monotone_Exclusion_ATE}
  Assume that $P(Y\in [\underline{y}_d, \overline{y}_d]\big|D = d) = 1$ for any $d\in \{0, 1\}$. 
   Let $\theta = (\theta_1,\theta_0)$ be the parameter of interest. 
  Then, model $A$ always has a unique minimum data-consistent relaxation $A^*$, and $A^*$ always contains $a^\dagger$. 
  In addition, for any $z=1,...,k$,  $a_z\in A^*$ if and only if the following two conditions hold for each $d\in \{0,
  1\}$:
  \begin{equation}\label{Test1}
    \forall z' < z,\quad \max(\underline{q}_{dt}: t \le z') \le \min(\overline{q}_{dt}: t \ge z')
  \end{equation}
  and 
  \begin{equation} \label{Test2}
  \max(\underline{q}_{dt}: t = 1,...,k ) \le \min(\overline{q}_{dt}: t \ge z)
  \end{equation}
  Hence, $a_z\in A^*$ implies that $a_{z'} \in A^*$ for all $z' > z$. Moreover, if $\{z: a_z \in
  A^*\}$ is nonempty, define $z^* = \min\{z: a_z \in A^*\}$ and
  \begin{multline}\label{eq:idset_AMIV}
    \Gamma_{d,z^*} = \left[\sum_{z < z^*} P(Z
    = z)\max(\underline{q}_{dt}, t \le z)  + \sum_{z\ge z^*} P(Z
    = z) \max(\underline{q}_{dt}: t = 1,...,k),\right. \\
    \left. \sum_{z < z^*}P(Z = z)\min(\overline{q}_{dt}:t \ge z) + \sum_{z\ge z^*} P(Z = z)
    \min(\overline{q}_{dt}:  t\ge z^*) \right].
  \end{multline}
  Then, $\Theta_I(A^*) = \Gamma_{1,z^*}\times \Gamma_{0, z^*}$.
  If $\{z: a_z \in A^*\}$ is empty, then $\Theta_I(A^*) = \left[E[\underline{Y}_1], E[\overline{Y}_1]\right] \times
  \left[E[\underline{Y}_0], E[\overline{Y}_0]\right] $. 
\end{proposition}

\begin{remark}\label{remark:AMIV}
It is worth noting that, for simplicity, we impose the cut-off $z^*$ to be the same for all potential outcomes in \ref{enu:monotone}; however, we do not need to do so. We can let the data determine the cut-offs for each potential outcome separately.
\end{remark}


%
%

\section{Misspecification Robust Bounds}\label{sec:MRB}

In this section, we consider cases where there are multiple data-consistent relaxations. According to Theorem \ref{thm:compatible_submodel}, the multiplicity of minimum data-consistent relaxations is a necessary condition for the existence of discordant submodels. Indeed, whenever there are two mutually incompatible data-consistent submodels, there are at least two minimum data-consistent relaxations. If there is no reason to favor one submodel over another ex-ante, it is reasonable to consider all of these relaxations.

\begin{definition}
  Let $\mathscr{A}_R$ be the set of all minimum data-consistent relaxations. The \emph{misspecification robust
  bound} $\Theta^*_I$ is defined as $\Theta^*_I \equiv \cup_{\widetilde{A}\in \mathscr{A}_R}\Theta_I(
  \widetilde{A})$.
\end{definition}

The misspecification robust bound concept is similar to the falsification adaptive set concept introduced in \cite{Masten2020}. However, a distinctive feature of this section is that we focus on discrete relaxations, where an assumption is either dropped or kept, while \cite{Masten2020} focuses exclusively on relaxing assumptions in a continuous way. In general, the type of relaxation depends on the empirical question under study. In the following, we derive the misspecification robust bound for our two leading examples.

\subsection{Intersection bounds example continued}
For  the model \eqref{eq:intersection_bounds}, the misspecification robust bound is given in the following result.


  \begin{proposition}\label{prop:intersection_bound}
  Suppose Assumption \ref{assu:reg} holds, then:
  \begin{equation}\label{eq:introductory_example_mfaoiw}
  \Theta^*_I = \begin{cases}
    [\underline{\gamma}, \overline{\gamma}]& \text{if }   \underline{\gamma} \leq \overline{\gamma},\\
    [\overline{\gamma}, \underline{\gamma}]& \text{if } \overline{\gamma} < \underline{\gamma}, \;\;\; P(E[\underline{Y}|Z]\le \overline{\gamma}]) > 0 \text{ and }
    P(E[\overline{Y}|Z]\ge \underline{\gamma}]) > 0,  \\
    (\overline{\gamma}, \underline{\gamma}]& \text{if } \overline{\gamma} < \underline{\gamma}, \;\;\;  P(E[\underline{Y}|Z]\le \overline{\gamma}]) = 0 \text{ and }
    P(E[\overline{Y}|Z]\ge \underline{\gamma}]) > 0,\\
    [\overline{\gamma}, \underline{\gamma})& \text{if } \overline{\gamma} < \underline{\gamma}, \;\;\;  P(E[\underline{Y}|Z]\le \overline{\gamma}]) > 0 \text{ and }
    P(E[\overline{Y}|Z]\ge \underline{\gamma}]) = 0,\\
    (\overline{\gamma}, \underline{\gamma})& \text{if } \overline{\gamma} < \underline{\gamma}, \;\;\;  P(E[\underline{Y}|Z]\le \overline{\gamma}]) = 0 \text{ and }
    P(E[\overline{Y}|Z]\ge \underline{\gamma}]) = 0.\\
  \end{cases}
  \end{equation}
\end{proposition}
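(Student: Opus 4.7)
The plan is to parametrize each $h \in \mathcal{H}_1^+$ by $(M(h), N(h))$ where $M(h) \coloneqq E[h(Z)\underline{Y}]/E[h(Z)]$ and $N(h) \coloneqq E[h(Z)\overline{Y}]/E[h(Z)]$, so that $\widetilde{\Theta}(h) = [M(h), N(h)]$ and, for any nonempty $B \subseteq A$, $\Theta_I(B) = [\sup_{h \in B} M(h),\, \inf_{h \in B} N(h)]$. Assumption \ref{assu:reg} guarantees $M(h) \le N(h)$, so every singleton submodel is data-consistent. When $\underline{\gamma} \le \overline{\gamma}$, $A$ is itself the unique minimum data-consistent relaxation, and Theorem \ref{thm:data_consistent} yields $\Theta_I^* = [\underline{\gamma}, \overline{\gamma}]$ immediately.

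In the misspecified case $\overline{\gamma} < \underline{\gamma}$, I would first show that every minimum data-consistent relaxation $\widetilde{A}$ has a singleton identified set. The maximality of $\widetilde{A}$ implies $\widetilde{A} = \{h : M(h) \le \beta,\ N(h) \ge \alpha\}$ for $[\alpha, \beta] = \Theta_I(\widetilde{A})$. Assuming $\alpha < \beta$: taking $h_1 \in \widetilde{A}$ with $M(h_1)$ near $\alpha$ and $h_0 \in \mathcal{H}_1^+$ with $M(h_0)$ close to $\underline{\gamma}$ (so $M(h_0) > \beta$ when $\beta < \underline{\gamma}$), the continuity of $M$ along the convex path $\lambda h_0 + (1-\lambda) h_1$ and the intermediate value theorem produce $h^*$ with $M(h^*) \in (\alpha, \beta]$ and hence $N(h^*) \ge M(h^*) > \alpha$, placing $h^*$ in $\widetilde{A}$ and contradicting $\sup_{\widetilde{A}} M = \alpha$. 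The boundary cases $\beta = \underline{\gamma}$ and $\alpha = \overline{\gamma}$ are handled directly: then $\widetilde{A}$ reduces to $\{h : N(h) \ge \alpha\}$ or $\{h : M(h) \le \beta\}$, and the universal bounds $M(h) \le \underline{\gamma}$, $N(h) \ge \overline{\gamma}$ together with $E[\overline{Y}|Z] \ge E[\underline{Y}|Z]$ force $\alpha = \beta$. Hence $\Theta_I(\widetilde{A}) = \{\theta^*\}$ with $\widetilde{A} = \{h : \theta^* \in \widetilde{\Theta}(h)\}$.

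Next I would determine the admissible singletons. For $\theta^* \in (\overline{\gamma}, \underline{\gamma})$, Theorem \ref{thm:uncond_minothold} supplies $h \in \mathcal{H}_2^+$ with $\widetilde{\Theta}(h) = \{\theta^*\}$, and Theorem \ref{thm:existence_MDR} extends this two-element submodel to a min relaxation whose identified set is forced to be $\{\theta^*\}$. For the endpoint $\overline{\gamma}$, when $P(E[\underline{Y}|Z] \le \overline{\gamma}) > 0$, setting $G = \{E[\underline{Y}|Z] \le \overline{\gamma}\}$ and $D_n = \{E[\overline{Y}|Z] \le \overline{\gamma} + 1/n\}$, I would construct $h_n = c_n \mathbbm{1}_G + \mathbbm{1}_{D_n}$ with $c_n$ calibrated so that $M(h_n) = \overline{\gamma}$ for each $n$ (via $c_n = b_n/a$, where $b_n = E[\mathbbm{1}_{D_n}(E[\underline{Y}|Z] - \overline{\gamma})]$ and $a = E[\mathbbm{1}_G(\overline{\gamma} - E[\underline{Y}|Z])]$; a separate treatment covers $a=0$, where $E[\underline{Y}|Z] = \overline{\gamma}$ on $G$ and $\mathbbm{1}_G$ already delivers $M = \overline{\gamma}$). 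The calibration gives $N(h_n) - \overline{\gamma} = O(1/n)$, so $\{h_n\}$ has identified set exactly $\{\overline{\gamma}\}$, and its min-relaxation extension is a singleton at $\overline{\gamma}$. Conversely, when $P(E[\underline{Y}|Z] \le \overline{\gamma}) = 0$, $E[\underline{Y}|Z] > \overline{\gamma}$ almost surely, hence $E[h(Z)(E[\underline{Y}|Z] - \overline{\gamma})] > 0$ and $M(h) > \overline{\gamma}$ for every $h \in \mathcal{H}_1^+$, excluding $\overline{\gamma}$ from every $\widetilde{\Theta}(h)$ and thus from $\Theta_I^*$. The endpoint $\underline{\gamma}$ is handled symmetrically via $P(E[\overline{Y}|Z] \ge \underline{\gamma})$, and points outside $[\overline{\gamma}, \underline{\gamma}]$ are excluded because any singleton satisfies $\theta^* = \sup_{\widetilde{A}} M \le \underline{\gamma}$ and $\theta^* = \inf_{\widetilde{A}} N \ge \overline{\gamma}$.

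Finally, since every min relaxation has a singleton identified set in the misspecified case (and $A$ is the unique min relaxation in the well-specified case), condition \ref{enu:S6_3_2} (respectively \ref{enu:S6_3_1}) of Theorem \ref{thm:interpretation_iff} is satisfied, so $\Theta_I^*$ is the smallest element of $\Lambda$ and equals $\cap_{S \in \Lambda} S$. The main obstacle will be the endpoint construction: calibrating $c_n$ to preserve $M(h_n) \le \overline{\gamma}$ while forcing $N(h_n) \to \overline{\gamma}$, and gracefully handling the degenerate subcase where $E[\underline{Y}|Z]$ coincides with $\overline{\gamma}$ on the set $G$.
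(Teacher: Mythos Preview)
Your proposal is correct and lands close to the paper's argument, but with one genuinely different ingredient worth highlighting.

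\textbf{Where you diverge from the paper.} The paper never proves \emph{a priori} that every minimum data-consistent relaxation has a singleton identified set. Instead it works pointwise: for each $\theta$ in the target interval it exhibits a submodel $A'$ with $\Theta_I(A')=\{\theta\}$ (via Lemma~\ref{lem:lemma2} for interior points, via an explicit $\underline h$ together with the sequence $\overline h_i(z)=\indicator(E[\overline Y|Z=z]\in[\overline\gamma,\overline\gamma+1/i])$ for the endpoint $\overline\gamma$), extends by Theorem~\ref{thm:existence_MDR}, and only \emph{afterwards} observes that this forces every minimum relaxation to be a singleton. Your route---first proving the singleton property directly from the maximality characterization $\widetilde A=\{h:M(h)\le\beta,\ N(h)\ge\alpha\}$ and the intermediate value theorem along convex paths in $\mathcal H_1^+$---is a cleaner structural argument and makes the appeal to Theorem~\ref{thm:interpretation_iff} immediate. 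It also gives the exclusion of points outside $[\overline\gamma,\underline\gamma]$ for free, whereas the paper invokes Lemma~\ref{lem:lemma1} and a separate contradiction step to rule out $(-\infty,\overline\gamma)$.

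\textbf{One point to tighten.} In your endpoint construction $h_n=c_n\indicator_G+\indicator_{D_n}$ with $c_n=b_n/a$, the numerator $b_n=E[\indicator_{D_n}(E[\underline Y|Z]-\overline\gamma)]$ can be negative (nothing prevents $E[\underline Y|Z]<\overline\gamma$ on $D_n$), which would make $c_n<0$ and $h_n\notin\mathcal H_1^+$. The fix is easy: when $b_n\le 0$ you already have $M(\indicator_{D_n})\le\overline\gamma$, so take $h_n=\indicator_{D_n}$; when $b_n>0$ your calibration gives $c_n>0$ and, as you note, $N(h_n)-\overline\gamma=O(1/n)$. In either case supplement the family with a single $\underline h$ satisfying $M(\underline h)=\overline\gamma$ (obtained by the convex-combination trick, exactly as the paper does) so that $\sup M=\overline\gamma$ is pinned down. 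This is essentially the paper's decomposition into $\underline h$ and $\{\overline h_i\}$, and resolves the obstacle you already anticipated.
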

A direct implication of  Proposition \ref{prop:intersection_bound} is that if $P(E[\underline{Y}|Z]\le \overline{\gamma}) > 0$  and 
$P(E[\overline{Y}|Z]\ge \underline{\gamma}) > 0$ hold, which are mild technical requirements,  the misspecification robust bound simplifies to  $\Theta^*_I=[\min(\underline{\gamma},\overline{\gamma}), \max (\underline{\gamma},\overline{\gamma})]$ whether or not the full model is refuted.

\subsection{AMIV  example continued}
A direct implication of  Proposition \ref{prop:Monotone_Exclusion_ATE} is that the AMIV model  has a unique minimum data-consistent relaxation $A^*$ which can be summarized as follows:
$A^*=\{a^\dagger\}\cup\{a_z:   \text{ Equations } (\ref{Test1}) \text{ } \& \text{ } (\ref{Test2}) \text{ hold}\}$. Therefore,  the misspecification robust bound  for the AMIV model is:

\begin{equation}
\Theta^*_I = \begin{cases}
\Gamma_{1,z^*}\times \Gamma_{0, z^*} & \text{if } A^*\neq \{a^\dagger\},\\
\left[E[\underline{Y}_1], E[\overline{Y}_1]\right] \times
\left[E[\underline{Y}_0], E[\overline{Y}_0]\right] & \text{if } A^*=\{a^\dagger\},
\end{cases}
\end{equation}

where $z^* = \min\{z: a_z \in A^*\}$.

\subsection{Empirical interpretation of the misspecification robust bound}\label{sec:nonconflicting_statement}

As we pointed out earlier, $\Theta^*_I$ will depend on how the researcher decides to define the assumptions $a$ that constitute $A$. Different constructions of $A$ correspond to different ways to relax a refuted model. It is inevitable that there almost always exist multiple ways to relax a stringent and refuted model, and different ways of relaxations would lead to different results. Instead of drawing a general conclusion about which relaxation approach is superior, we believe it is important to offer an empirical interpretation of $\Theta^*_I$ for a given $A$. In this way, even if different researchers may construct $A$ based on their own economic interpretations of the model, they would have a clear interpretation of their results.

In Theorem \ref{thm:interpretation} in Appendix \ref{nc}, we show that the misspecification robust bound $\Theta^*_I$ is both \emph{rationalizable} and \emph{nonconflicting} in the following sense:

 \begin{itemize}
    \item (Rationalizable) The statement that $\Theta^*_I$ contains the true parameter is implied by some data-consistent submodel. That is, there exists some data-consistent submodel $A'\subseteq A$ such that $\Theta_I( {A}')\subseteq \Theta^*_I$. 
    \item (Nonconflicting) The statement that $\Theta^*_I$ contains the true parameter is not rejected by any data-consistent submodel. That is, there does not exist a data-consistent submodel $A'\subseteq A$  such that $\Theta_I(A')\cap \Theta^*_I = \emptyset$.
  \end{itemize}
  When the full model is refuted, different data-consistent submodels can imply different and potentially discordant statements on $\theta$. Among all possible statements on $\theta$, we think that being rationalizable and nonconflicting is a minimum requirement for a statement to be robust to model misspecification. If a statement fails to be rationalizable, then it is not implied by any of the data-consistent submodels. If a statement is not nonconflicting, then it is rejected by some data-consistent submodels.
  
The fact that $\Theta^*_I$ is both rationalizable and nonconflicting gives it an interesting empirical interpretation. Consider the simple case where $\theta$ is a scalar. Suppose we are interested in the sign of $\theta$. And, suppose $\Theta^*_I$ turns out to be within the positive real line, i.e., $\Theta^*_I\subseteq \real{++}$. Then, it means that some submodels identify the sign of $\theta$ to be positive, and whenever the sign of $\theta$ can be identified by a submodel, the sign of $\theta$ is always positive.

In some cases, $\Theta^*_I$ is the \textit{smallest} set that is both rationalizable and nonconflicting. That is, for any $\widetilde{\Theta}\subseteq \Theta$, $\widetilde{\Theta}$ is both rationalizable and nonconflicting if and only if $\Theta^*_I \subseteq \widetilde{\Theta}$. In this case, $\Theta^*_I$ could have richer interpretations. Consider the previous simple example again. Suppose it turns out that $\Theta^*_I\cap \real_{++} \neq \emptyset$ and $\Theta^*_I\cap \real_{--} \neq \emptyset$ so that $\theta\in \Theta^*_I$ does not imply the sign of $\theta$. If we know $\Theta^*_I$ is the smallest rationalizable and nonconflicting set, then we have the following conclusion: neither $\theta$ is positive nor $\theta$ is negative are rationalizable and nonconflicting statements. In other words, the value of $\Theta^*_I$ in this case implies that the data and the model cannot provide a clear statement on the sign of $\theta$. Finally, in Theorem \ref{thm:smallest_if} in Appendix \ref{nc}, we show that $\Theta^*_I$ would be the smallest rationalizable and nonconflicting set if there exists a unique minimum data-consistent relaxation or the identified set for each minimum data-consistent relaxation is a singleton.

\subsection{Discrete Relaxation versus Continuous Relaxation}\label{sec:comparison_discrete_continuous}

As can be seen, the misspecification robust bound relaxes a refuted model in a discrete way: an assumption is either fully kept or dropped during the relaxation. There are many other ways to relax and salvage a refuted model. One can also relax assumptions continuously as in \cite{Masten2020}. In general, different relaxations will lead to different results, and it is hard to compare all the possible approaches. However, there does exist a special case where discrete relaxation always leads to more informative results than any other ways of relaxations.

In order to make an adequate comparison, we need to introduce the terminology used in \cite{Masten2020}.
For any $\epsilon \in [0, 1]$ and any $a\in A$, let $\perturb{a}{\epsilon}$ denote the assumption after relaxing  assumption $a$.  The degree of relaxation is measured by $\epsilon$: when $\epsilon = 0$, $\perturb{a}{\epsilon} = a$; when $\epsilon\in (0, 1)$, the assumption $a$ is partially relaxed but the exact form of $\perturb{a}{\epsilon}$ would
depend on the specific way of relaxation chosen by the researcher; when $\epsilon = 1$, the assumption $a$ is completely relaxed and $\perturb{a}{\epsilon}$ is a null assumption which does not impose any restriction. Assume the relaxation is monotone: if $\epsilon_1 \le \epsilon_2$, $\perturb{a}{\epsilon_1}$ is stronger than  $\perturb{a}{\epsilon_2}$, in the sense that $\perturb{a}{\epsilon_1}$ implies $\perturb{a}{\epsilon_2}$.
For any $\delta: A\to [0, 1]$, define $A(\delta) \equiv \{\perturb{a}{\delta(a)}: a\in
A\}$ as the perturbed full model. For any two $\delta_1: A\to[0, 1]$ and $\delta_2: A\to[0, 1]$, we write $\delta_1
< \delta_2$ if $\delta_1(a) \le \delta_2(a)$ for all $a\in A$ and $\delta_1(a) < \delta_2(a)$ for some $a\in A$. 
Then, the \emph{falsification frontier} (FF) in \cite{Masten2020} can be defined as $FF = \{\delta: A\to [0, 1]\
  : \Theta_I(A(\delta))\neq \emptyset\text{ and there does not exist }\delta'\text{ such that
} \Theta_I(A(\delta'))\neq \emptyset, \Theta_I(A(\delta')) \subsetneq \Theta_I(A(\delta)) \text{ and } \delta'
< \delta\}$. We slightly modified the definition of the falsification frontier of \cite{Masten2020} to ensure the nonemptiness of $FF$ in some special cases.\footnote{
  The original definition in \cite{Masten2020}, written in our notation, is $FF = \{\delta: A\to [0, 1]\
  : \Theta_I(A(\delta))\neq \emptyset\text{ and there does not exist }\delta'\text{ such that
} \Theta_I(A(\delta'))\neq \emptyset \text{ and } \delta'
< \delta\}$. With our modified definition, we do not need to worry about the possibility that there is a sequence of $\{\delta_i:i \geq 1\}$ such that $\delta_n \rightarrow \delta^{\ast}$, $\Theta_I(A( \delta^{\ast}))= \emptyset$, $\Theta_I(A( \delta_n))=\Theta_I(A( \delta_1)) \neq \emptyset$ and  $\delta_{n+1} < \delta_{n}$ for all $n\geq 1$.}
Then, the  \emph{falsification adaptive set} $\Theta^\dagger_I$ is defined as
$\Theta^\dagger_I = \cup_{\delta\in FF}\Theta_I(A(\delta))$. 

Note that $\Theta^\dagger_I$ depends on the specific way that one chooses to relax the assumptions.
If one chooses to relax them discretely, i.e., if $a_{\epsilon} = a_{\indicator(\epsilon > 0)}$ for any $\epsilon$ and $a$, then $\Theta^\dagger_I$ is equal to the minimum data-consistent relaxation $\Theta^*_I$. If one chooses a different way of relaxation, the $\Theta^\dagger_I$ is generally different.  In some special cases, however, $\Theta^*_I$ is always included in $\Theta^\dagger_I$ no matter which way of relaxation is chosen. More precisely, 
whenever  for any minimum data-consistent relaxation $\widetilde{A}$, $\Theta_I(\widetilde{A})$ is a singleton, it can be shown that 
 $\Theta^*_I\subseteq \Theta^\dagger_I$ for any type of relaxation chosen by the researcher.
 We formally state and prove  this result, respectively, in Theorem  \ref{thm:compared_to_perturbation_corrected} in Appendix \ref{Disc}.

\section{Numerical and Empirical Illustrations}\label{App}

In this section, through numerical exercises, we illustrate two of our main theoretical results. In Section \ref{App:CT}, we consider the entry game model studied in \citet[CT]{Ciliberto2009}. We simulate a misspecified entry game model such that the sharp identification conditions deliver an empty identified set. In such a context, we show that it is possible to generate multiple non-empty conflicting outer sets by just selecting different sets of nonsharp identification conditions. This illustrates the discordancy issue raised in Theorem \ref{thm:misleading}. In Section \ref{Emp}, we revisit a return to college application and report the estimated identified set of the minimum data-consistent relaxation derived under the assumption that parental education satisfies the AMIV assumption.


%
%
%
%
%


\subsection{Numerical illustration of discordant outer sets in an entry game model}\label{App:CT}

Consider an entry game model with $m$ players. Each player $i$ chooses $Y_i\in \{0, 1\}$ to maximize its payoff:
\begin{equation}\label{eq:mis_model_entry_game}
	\pi_{i} = Y_i \left(\alpha_i + X_i\beta - \sum_{j\neq i}\delta^{i}_j Y_j  + \epsilon_{i}\right)
\end{equation}
where $\delta^{i}_j$ is player $j$'s competition impact on player $i$, $\epsilon = (\epsilon_1, ..., \epsilon_m)\sim N(0, I_m)$, and $I_m$ is the identity matrix. 
Denote $\alpha = (\alpha_i: i=1,...,m)$, $\delta = (\delta^{i}_j: i\neq j)$,  and, let $\theta = (\alpha, \beta, \delta)$ collect all the parameters. We model the players' behaviour as pure-strategy Nash equilibrium. Therefore, we restrict the parameter space $\Theta$ to be the set of parameters where pure-strategy equilibria exist with probability $1$. 
This is the same empirical model used in \cite{Ciliberto2009}, except for two simplifications: 
(i) we assume that $\beta$ is the same for all players; (ii) we assume that the distribution of $\epsilon$ is known. 

Let $\mathcal{K}$ be the collection of all subsets of $\mathcal{Y} \equiv \{0, 1\}^{m}$. Let $F$ denote the joint distribution of $(Y, X)$ in the data. Define $\Gamma(x, \epsilon;\theta)$ as  the set of  all pure-strategy Nash equilibria given $(x, \epsilon)$, the parameter $\theta$ and the payoff function in (\ref{eq:mis_model_entry_game}).
As shown in Galichon and Henry (2011) and equivalently in Beresteanu, Molchanov, and Molinari (2011), $\theta$ belongs to the identified set $\Theta_I(F)$ if and only if 
\begin{equation}\label{Arstein:CT}
 \forall K\in \mathcal{K},\  \mathbb{P}_{F}(Y\in K|X) \le \mathbb{P}(\Gamma(X, \epsilon;\theta) \cap K\neq \emptyset|X), \, X-a.s.
\end{equation}

Equation (\ref{Arstein:CT}) characterized the \cite{Artstein1983} inequalities associated to the entry game model under study. For each fix covariate $x$, we have $2^{2^m}$ inequalities to be checked. In this case, our full model $A$ is defined by the whole set of \citeauthor{Artstein1983}'s inequalities, and then its cardinality is $2^{2^m} \times Card(X)$. 
This becomes easily non-tractable even for a relatively small number of firms, i.e., for instance, for a fixed $x$, and 5 firms we have $2^{2^5}=4, 294, 967, 296$ inequalities to be checked. Therefore, in practice, outer sets are almost always used. Let $A'$ be a subset of $\mathcal{K}$. Then, the outer set associated with $A'$ is 
\begin{equation*}
\Theta_I(F, A') \equiv \big\{\theta\in \Theta: \forall K\in A',\  \mathbb{P}_{F}(Y\in K|X) \le \mathbb{P}(\Gamma(X, \epsilon;\theta) \cap K\neq \emptyset|X), \, X-a.s.\big\}
\end{equation*}
$A'$ is a nonsharp identification condition whenever  $\Theta_I(F, A')\neq \Theta_I(F, \mathcal{K})\equiv \Theta_I(F)$. 
In CT, for each fixed $x$, they considered the outer set associated with $A_{ct}$ defined as $A_{ct} = \{\{y\}: y\in \mathcal{Y}\}\cup  \{\{y\}^c: y\in \mathcal{Y}\}$
where $\{y\}^c$ stands for the complement set of $\{y\}$ in $\mathcal{Y}$. It is worth noting that whenever $m>2$, $A_{ct}$ is a nonsharp identification condition.

\subsubsection{Data generating process (DGP)} In order to illustrate the issue of non-reliability of the outer sets in presence of misspecification, we generate  a joint distribution $F$ from a game that might be different from the model \eqref{eq:mis_model_entry_game}. Assume that, in the data generating process, each player $i$ chooses $Y_i\in \{0, 1\}$ to maximize the following payoff instead of (\ref{eq:mis_model_entry_game}):

\begin{equation}\label{eq:true_model_entry_game}
	\pi_{i} = Y_i \left(\alpha_i + X_i\beta - \sum_{j\neq i}\delta^{i}_j Y_j - \sum_{j_1, j_2\neq i} \gamma^i_{j1, j2}Y_{j1}Y_{j2} + \epsilon_{i}\right).
\end{equation}

The extra vector of parameter $\gamma\equiv \{\gamma^{i}_{j_1, j_2}: \text{where }(i,j_1, j_2) \text{ are mutually different}\}$ captures the second-order competition effect. When $\gamma = 0$, the model in \eqref{eq:mis_model_entry_game} is correctly specified. If $\gamma\neq 0$, the model in \eqref{eq:mis_model_entry_game} is misspecified.
To complete the model, we assume that, whenever there are multiple pure-strategy Nash equilibria, players will choose each equilibrium with the same probability. We assume that the support of $X$ is a bounded interval $\mathcal{X}$ in $\real^{\dim(X)}$. Without loss of generality, we assume $X$ is distributed uniformly in its support.

In the simulation, we focus on the simple case where $\gamma^1_{j_1,j_2} = \gamma^*$ for all $(j_1, j_2)$, and $\gamma^i_{j_1,j_2} = 0$ for all $(i\neq 1, j_1, j_2)$. Then, the joint distribution of $(Y, X)$ generated from this data generating process is indexed by $(\theta, \gamma^*)$, which we write as $F_{\theta, \gamma^*}$. Note that $\gamma^*$ measures the degree of misspecification: the larger the value of $\gamma^*$ is, the larger the degree of misspecification of model in \eqref{eq:mis_model_entry_game} is.
In the simulation design we impose that $m=3$, $\dim X=1$, $X\sim U[-1,1]$, $\beta=0.1$, $\alpha=[1,1,1]$, and  $\delta^{i}_j =1$ for all $(i,j)$. In the following, we construct outer sets both for the parameter $\delta$ and also for a counterfactual outcome in a counterfactual experiment.
In both cases, we will show that we will be able to generate three outer sets (including the CT outer set) that are discordant with each other  and  none of them contains the true value.

%
%


\subsubsection{Discordant nonsharp identification conditions for $\delta$}
Our first objective is to illustrate the existence of discordant nonsharp identification conditions. Here, we focus on the (projected) identified set for $\delta^1_2$. Given the DGP with $(\theta, \gamma^*)$, the outer set for $\delta^1_2$ associated with the nonsharp identification conditions in $A'$ is characterized as follows: $\Lambda(F_{\theta, \gamma^*}, A') = \{t: \exists (\alpha, \beta, \delta) \in \Theta_I(F_{\theta, \gamma^*}, A') \text{ s.t. }\delta^{1}_2 = t\}$. Now, let us denote by $\overline{\gamma}^*_{ct}$ the maximum degree of misspecification that could not be detected by the submodel $A_{ct}$. More precisely, we define $\overline{\gamma}^*_{ct} \coloneqq \sup\{\gamma^*: \Lambda(F_{\theta, \gamma^*}, A_{ct}) \neq \emptyset\}$, therefore, $\Lambda(F_{\theta, \gamma^*}, A_{ct})$ is a nonempty outer set when $\gamma^* \leq  \overline{\gamma}^*_{ct}$, and it becomes empty whenever $\gamma^* > \overline{\gamma}^*_{ct}$.
%
%
%
%
%
In Figure \ref{fig:delta_outer_set_ct0}, we plot in blue the CT outer set, i.e., $\Lambda(F_{\theta, \gamma^*}, A_{ct})$, at various degree of model misspecification, for $\gamma^* \in [0,\overline{\gamma}^*_{ct}]$, where $0$ corresponds to no misspecification and $\overline{\gamma}^*_{ct}$ corresponds to the maximum degree of misspecification that is not detectable with $A_{ct}$.


\begin{figure}[h!]
	\centering
	\includegraphics[width=\linewidth]{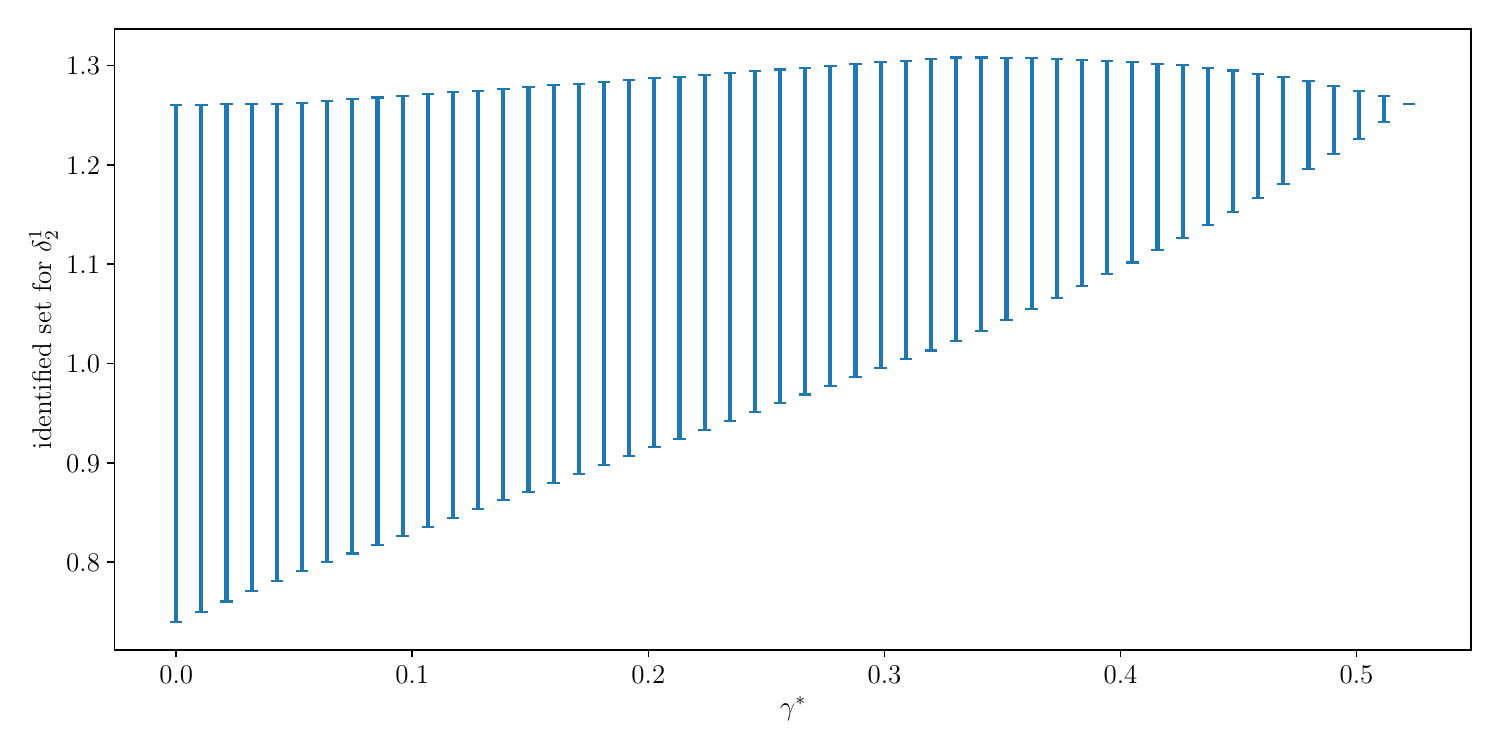}
	\caption{$\Lambda(F_{\theta, \gamma^*}, A_{ct})$ at various values of $\gamma^*$.}%
	\label{fig:delta_outer_set_ct0}
\end{figure}
A first remark is that the CT outer set shrinks when the degree of misspecification increases and at some point it no longer contains the true value. This illustrates that  the tightness of the outer set should not systematically be interpreted as a signal of an informative  identified set but it could just signal a presence of misspecification. 

In Appendix \ref{sec:random_set_capacity}, we explain why the findings of  Theorem \ref{thm:misleading} apply to the entry game example. So, according to Theorem \ref{thm:misleading}, there must exist some other $A'$ nonsharp identification condition which is discordant  with $A_{ct}$. Indeed, we are able to find two sets of nonsharp identification conditions, denoted as $A_1$ and $A_2$, which result in discordant identification results with $A_{ct}$. In  Figure \ref{fig:delta_outer_set_ct}, we plot $\Lambda(F_{\theta, \gamma^*}, A_1)$ in orange and $\Lambda(F_{\theta, \gamma^*}, A_2)$ in green. $\Lambda(F_{\theta, \gamma^*}, A_1)$ suggests values for $\delta^{1}_2$ that are higher than those suggested by  $\Lambda(F_{\theta, \gamma^*}, A_{ct})$ while $\Lambda(F_{\theta, \gamma^*}, A_2)$ suggests values that are lower.  When the degree of misspecification is higher than $0.4$, these three outer sets have no overlap.



\begin{figure}[h!]
	\centering
	\includegraphics[width=\linewidth]{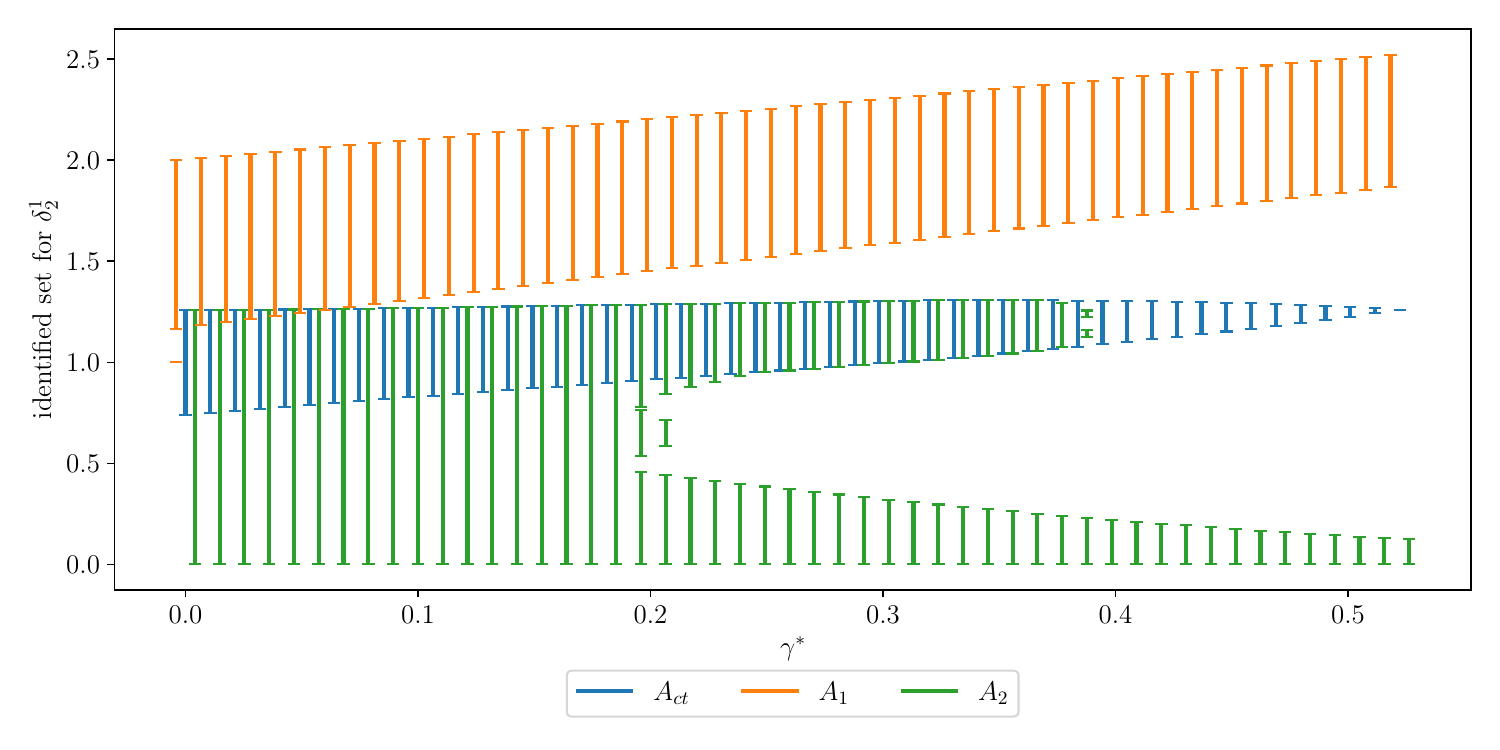}
	\caption{$\Lambda(F_{\theta, \gamma^*}, A_{1})$  and $\Lambda(F_{\theta, \gamma^*}, A_{2})$ at various values of  $\gamma^*$.}%
	\label{fig:delta_outer_set_ct}
\end{figure}



\subsubsection{Discordant counterfactual predictions}
In many of the empirical games applications, applied researchers are very often interested in implementing counterfactual analyses. Interestingly, we observe not only discordant results for the parameters, but we also observe this phenomenon for counterfactuals. 
Below, we illustrate a scenario where different outer sets lead to discordant counterfactual outcomes when the full model is misspecified. Therefore, we illustrate the fact that the discordancy issue also applies to counterfactual outcomes. 
Let us consider a counterfactual where firm $3$ is no longer a potential entrant of the market. This type of counterfactuals would arise, for example, when firm $3$ is a foreign firm and is banned from the home market due to a trade policy, or when firm $3$ is merged with other firms. 

Figure \ref{fig:counterfactual_outer_set_other} plots the different counterfactual predictions of submodels $A_{ct}$, $A_1$, and $A_2$ for the probability that only one firm enters the market, i.e., the probability of the presence of a monopoly in a market with characteristics $x_0 = (0, 0, 0)$.
As we can see clearly, three submodels give counterfactual predictions that are discordant with each other.  

\begin{figure}[h!]
	\centering
	\includegraphics[width=\linewidth]{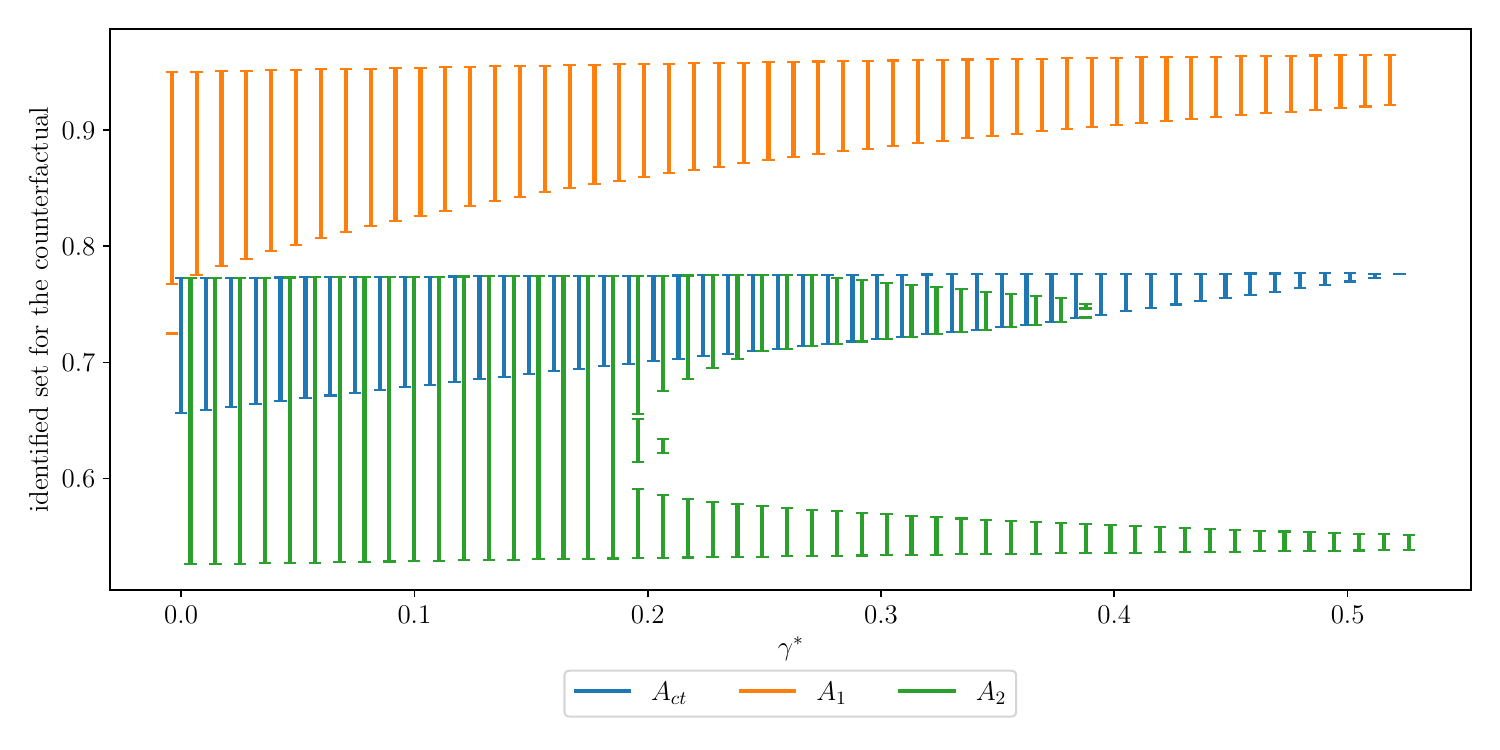}
	\caption{ $\mathbb{P}(Y_1 + Y_2 = 1 | X = x_0)$ at various $\gamma^*$ }%
	\label{fig:counterfactual_outer_set_other}
\end{figure}

\subsection{An empirical illustration for compatible submodels}\label{Emp}

\subsubsection{Context and Data} 
Estimating the causal impact of college education on later earnings has always been troublesome for economists because of the endogeneity of the level of education. To evaluate the returns to schooling, different approaches have been proposed, and most of them rely on the validity of instruments such as parental education, tuition fees, quarter of birth, distance to college, etc. The validity of all these IVs has been widely criticized because of their potential correlation with the children's unobserved skills.

In order to accommodate potentially invalid instruments, \cite{Manski2000, Manski2009} introduced the monotone IV (MIV) that does not require the IV to be valid but only imposes a positive dependence relationship between the IV and potential earnings. For instance, parental education may not be independent of potential wages, but plausibly does not negatively affect future earnings. In such a context, bounds on the average return to education can be derived.

In this application, we will consider the AMIV assumption introduced in Section \ref{sec:AMIV}. We consider that parental education can have a positive effect on children's future earnings, but this marginal positive effect could plausibly become null after some cut-off. The particularity of our method is to let this cut-off be determined by the data using our misspecification robust bounds.

We consider the data used in \citet[HTV]{Heckman2001}. The data consist of a sample of 1,230 white males taken from the National Longitudinal Survey of Youth of 1979 (NSLY79)\footnote{The NLSY79 survey is sponsored and directed by the U.S. Bureau of Labor Statistics, and managed by the Center for Human Resource Research (CHRR) at The Ohio State University. Interviews are conducted by the National Opinion Research Center (NORC) at the University of Chicago. See \cite{bureau_of_labor_statistics_national_2019} for more details.}. The data contain information on the log weekly wage, college education, father's education, mother's education, among many other variables. Following HTV, we consider the college enrollment indicator as the treatment: it is equal to 1 if the individual has completed at least 13 years of education and 0 otherwise. In this empirical exercise, we use the maximum of parental education as the candidate instrumental variable. Some summary statistics are reported in Table \ref{Table:statdes}.

\begin{table}[h] 
\caption{Summary Statistics}
  \begin{tabular}{l|cccccccc} \label{Table:statdes}\\
\\
 &  & Total  \\\hline\\
Observations &  & 1,230 \\ \\
log wage &  & 2.4138 (0.5937) \\
college &  & 0.4325 (0.4956) \\
father's education &  & 12.44715 (3.2638) \\
mother's education &  & 12.1781 (2.2781)\\
max(father's education, mother's education) &  & 13.1699 (2.7123)\\
\hline

\end{tabular}
\begin{center}
\footnotesize Average  and standard deviation (in the parentheses)
\end{center}
\end{table}
\subsubsection{Methodology and results}

We start by  constructing  the 95\% confidence region for the  identified sets of the average structural functions $E[Y_d]$, $d  \in \{0,1\}$ and the average treatment effect $E[Y_1-Y_0]$ under the \cite{Manski1990} mean independence assumption, denoted as $\Theta_I(MI)$, and under the MIV assumption, denoted as $\Theta_I(MIV)$. In addition, we construct an estimate of our misspecification robust bounds under the AMIV assumption, denoted as $\Theta^*_I(AMIV)$, using the following steps:

\begin{enumerate}
\item  \label{step1}

The support of our instrument is $\mathcal Z=\{0,1, \ldots, 20\}$. For each $z \in \{0,1, \ldots, 20\}$, we test the implications \eqref{Test1} and \eqref{Test2} using the intersection bounds method of \cite{CLR2013}, implemented in \cite{CKLRstata}. For each $d\in \{0,1\}$, we set  $z^*_d$ as the smallest $z$ for which we do not reject \eqref{Test1} and \eqref{Test2}. 
 The use of  \citeauthor{CKLRstata}'s (\citeyear{CKLRstata}) Stata package yields a 95\% confidence set for
$[\max(\underline{q}_{dt}: t \le z), \min(\overline{q}_{dt}: t \ge z)]$ for each $z<z^*_d$, and $[\max(\underline{q}_{dt}: t = 1,...,k ), \min(\overline{q}_{dt}: t \ge z^*_d)]$. 

\item \label{step2} 

We then plug the 95\% confidence bounds obtained from step (\ref{step1}) into the bounds in Equation \eqref{eq:idset_AMIV}, where we replace $\mathbb P(Z=z)$ by its sample analog. This procedure leads to an estimate of the identified set $\Gamma_{d, z^*}$ for each $d\in \{0,1\}$, which yields an estimate for the identified set $\Theta^*_I(AMIV)$.\footnote{Because our primarily focus in this paper is about identification, we do not attempt to study the statistical issues related to the derivation of a valid confidence region  for the misspecification robust bound. We leave this open question  for future research.}

\end{enumerate}
The same procedure is applied to get an estimate for $\Theta_I(MIV)$ except that $z^*_d$ is set to 20 for each $d \in \{0,1\}$.
Finally, since the identified set for the ATE   under the mean independence assumption denoted as $\Theta_I(MI)$ takes the form of standard intersection bounds, we use the \cite{CKLRstata} package to obtain its 95\% confidence bounds.

The results are summarized in Table \ref{Table:cs}. 
Column (1) shows that the 95\% confidence region for $\Theta_I(MI)$ is empty. 
In other words, the data shows clear evidence against the use of parental education as a valid IV. On the other hand,  column (4) shows the result for $\Theta_I(MIV)$.\footnote{We even test the validity of the MIV using the test proposed by \cite{Hsu2019}, we do not reject the MIV assumption even at 10 \% level.} As can be seen, we move from an empty identification region to a wide and non-informative identification region. In contrast, our misspecification robust bounds provide a nonempty yet relatively smaller set estimate for the ATE. 
Column (2) shows estimates of our misspecification robust bounds $\Theta_I^*(AMIV)$ when we allow the cut-offs to differ across potential outcomes as discussed in Remark \ref{remark:AMIV}, while column (3) shows estimates where the cut-offs are restricted to be the same for both potential outcomes as in Proposition \ref{prop:Monotone_Exclusion_ATE}. In the former case, we see that our proposed approach almost identifies the sign of the ATE.

\begin{table}[h] 
\caption{Results}
\begin{threeparttable}
  \begin{tabular}{l|cccccc} \label{Table:cs}
 &  & (1) &  (2)  &   (3)  & (4) \\ \\
Set estimates/   &  &  $\Theta_I(MI)$ &   $\Theta^*_I(AMIV)$  &   $\Theta^*_I(AMIV)$  & $\Theta_I(MIV)$ \\
 95\% Conf. Bounds                   &  	&  & $(z^*_1,z^*_0)=(0,11)$ & $(z^*_1,z^*_0)=(11,11)$  & \\\hline\\
$\theta_1\equiv\mathbb E[Y_1]$ &   &$[2.535,2.815]$  & $[2.535,2.815]$   &$[2.412,2.816]$   &$[0.933,2.815]$     \\
$\theta_0 \equiv \mathbb E[Y_0]$&     & Empty   & $[2.547,2.591]$ & $[2.547,2.591]$   &  $[2.548,2.814]$     \\
$ATE\equiv\mathbb E[Y_1-Y_0]$ &   & Empty  & $[-0.056,0.268]$ & $[-0.179,0.269]$  & $[-1.881,0.267]$      \\
\hline
\end{tabular}
\begin{tablenotes}
   \footnotesize
 \item[1] All values in column (1) are the $95\%$ confidence intervals.
 \item[2] All values in column (2)-(4) are set estimates based on the $95\%$ confidence interval of $\Theta_I(a_z)$.
\end{tablenotes}
\end{threeparttable}
\end{table}

\section{Discussion}

In this paper, we demonstrate the existence of discordant submodels in a wide range of models in the presence of model misspecification. This provides another reason why one should use the sharp characterization of the identified set whenever possible. The identified set not only exhausts all the identification restrictions in the model structure and assumptions but also is immune to the possible misleading conclusions of discordant submodels. Unlike an outer set, the identified set will be empty when the model is refuted by the data.

In empirical applications where a sharp characterization of the identified set is not tractable, our results suggest that empirical researchers should exercise caution when working with nonsharp identification conditions, especially when the bounds they obtain are very tight. For example, as a robustness check, one could construct the outer sets in different ways and check for any discordance between them.

Salvaging a refuted model is usually a challenging task, as it often involves some arbitrariness in how the model gets relaxed, and it could sometimes be computationally intractable. However, things get much easier when the minimum data-consistent relaxation is unique. In this case, it is apparent which assumptions are consistent with the data and which assumptions are not because all the data-consistent assumptions are compatible with each other (Theorems \ref{thm:compatible_submodel} and \ref{thm:compatible_submodel_advanced}). Moreover, the identified set of any data-consistent submodel can be viewed as a conservative bound for the misspecification robust bound in this case, making the computation a lot easier.

When the uniqueness of the minimum data-consistent relaxation is beyond reach, one can still choose to find the misspecification robust bound we proposed in this paper. It always leads to rationalizable and non-conflicting statements (Theorem \ref{thm:interpretation}), and it is sometimes the most informative rationalizable and non-conflicting statement (Theorem \ref{thm:smallest_if}). We work out the misspecification robust bound in some simple examples, but its exact solution could be too complicated to solve when the underlying model involves many structures. In those challenging cases, it might be possible to construct an outer set that always covers the misspecification robust bound proposed in this paper. This type of outer sets will be immune to the issue raised in this paper. It remains unclear how to construct such outer sets, but this could be one reasonable step beyond the findings in this paper.

\newpage

\bibliographystyle{jpe}
\bibliography{mybib}

\clearpage

\begin{appendix}

\section{Additional Results}\label{App:as}
This appendix collects some additional theoretical results. We put all the proofs in Appendix \ref{sec:additional_result_proof} except for very short ones.

\subsection{Example of intersection bounds}\label{intersec}
\begin{example}[Discrete treatment model]\label{ex:1}
  Consider a setting where $\mathcal{X} \equiv \{ x_1, ..., x_K \}$ is the set of all possible treatments.
  Let $Y_k$ be the potential outcome when the treatment is externally set to  $x_k$. The observed outcome $Y$ is defined as follows: $Y = \sum_k \indicator(X = x_k)Y_k$. Let us define $\theta_k\equiv E[Y_k]$ and assume that $Y_k$ has a bounded support $[\underline{y}_k, \overline{y}_k]$. The random bound for
  $Y_k$ can be constructed as follows $\underline{Y}_k \equiv Y\indicator(X = x_k)+ \underline{y}_k\indicator(X\neq x_k)$ and
  $\overline{Y}_k \equiv Y\indicator(X=x_k)+ \overline{y}_k\indicator(X\neq x_k)$. If we assume  the mean independence assumption $E[Y_k|Z] = E[Y_k]$ we obtain a special case of  \eqref{eq:intersection_bounds}.
\end{example}
Discrete treatment models with bounded potential outcomes are  usually considered in Manski's work. See for instance \cite{Manski1990, Manski1994} among many others.

\begin{example}[Smooth Treatment Model] \label{ex:2}
	Consider a smooth treatment model as in \cite{Kim2018}. When the treatment is $x$, the potential
  outcome is $Y(x) = g(x, \epsilon)$ where $g$ is an unknown function, and $\epsilon$ is individual
  heterogeneous characterization. Assume $g(x,\epsilon)$ is Lipschitz continuous in $x$ with Lipschitz
  constant equal to $\tau$.  Suppose we are interested in $\theta_x = E[Y(x)]$. The lower and upper bounds can be constructed as
  $\underline{Y}(x) = Y - \norm{X - x}\tau$ and $\overline{Y}(x) = Y + \norm{X - x}\tau$. As in the discrete treatment case,  if we assume  $E[Y(x)|Z] = E[Y(x)]$, we obtain model \eqref{eq:intersection_bounds}.  As a special case, one can also consider a linear model with heterogeneous coefficient, $Y = X'\beta + \epsilon$ where
  $\beta$ is a vector of an unobserved random coefficient. Suppose the coefficient space for $\beta$ is
  $[\underline{\beta}, \overline{\beta}]$. 
    Then, $\underline{Y}(x) = Y + \sum_{i}\min\left\{(x_{i} - X_{i})\underline{\beta}_{i}, (x_{i}
    - X_{i})\overline{\beta}_{i}\right\}$ where the subscript $i$ stands for the $i^{th}$ dimension of the corresponding
    variables. Similarly, $\overline{Y}(x) = Y + \sum_{i}\max\left\{(x_{i} - X_{i})\underline{\beta}_{i}, (x_{i}
    - X_{i})\overline{\beta}_{i}\right\}$.
\end{example}

 \subsection{Conditional Moment Inequalities}\label{sec:cond_moment_ineq}
Let us now consider a more general setting than the introductory example. Assume the full model is a conditional moment
inequality, 
\begin{equation}\label{eq:cond_moment_ineq}
  E[m(X;\theta)|Z] \le 0 \text{ almost surely} 
\end{equation}
where $X \in \real^{k_1}$ and $Z\in \real^{k_2}$ are observable random variables and $m(\cdot, \cdot\ ;\theta)$ is some known integrable function with $E\norm{m(X;\theta)} < \infty$ for each $\theta$. We focus on the case where $Z$ are continuous random variables. Random variables $X$ and $Z$ could have overlaps. In practice, empirical researchers sometimes use the following unconditional
model instead:
\begin{equation}\label{eq:uncond_moment_ineq}
  E[w(Z) m(X;\theta)] \le 0,
\end{equation}
where $w(\cdot)$ is some nonnegative weighting function. We want to understand what would happen when one conduct
empirical analysis based on \eqref{eq:uncond_moment_ineq} when \eqref{eq:cond_moment_ineq} happens to be refuted.

To answer this question, define $\mathcal{W}^+_m$ to be the set of all $m$-dimenstional nonnegative function $w$ which satisfies $0 < E\norm{w(Z)}^2 < \infty$ and $E \norm{w(Z) m(X;\theta) } < \infty$ for all $\theta\in \Theta$. Define $A$ as the collection of condition \eqref{eq:uncond_moment_ineq} for all $w\in \mathcal{W}^+_1$, i.e.
\begin{equation*}
A \coloneqq \{ \eqref{eq:uncond_moment_ineq} \text{ with }w: w \in \mathcal{W}^+_1\}
\end{equation*}
With this definition, any subset $B$ of $A$ with $m$ elements corresponds to the condition which \eqref{eq:uncond_moment_ineq} hold for some $w\in \mathcal{W}^+_m$. By the construction of $A$, Assumption \ref{assu:mis_2} are satisfied.

To verify Assumption \ref{assu:mis_1}, we need to construct a $\mathscr{C}$. Let $\mathcal{Z}$ be the support of $Z$. For any $z\in \mathcal{Z}$ and any $\epsilon > 0$, define function $h_{z,\epsilon}$ as $h_{z,\epsilon}(Z) = \indicator(\norm{Z - z} < \epsilon)$. Suppose that, for any $z \in \mathcal{Z}$, there exists some $\theta\in \Theta$ and some $\delta(z) > 0$ such that $E[m(X;\theta)|Z] \le 0$ for almost every $Z$ with $\norm{Z - z}\le \delta(z)$. Then, for each $z\in \mathcal{Z}$ and each $\epsilon\in (0, \delta(z))$, there exists some $\theta\in \Theta$ such that $E[h_{z,\epsilon}(Z)m(X,\theta)] \le 0$. Define the collection of functions $\mathcal{W}^*$ as $\mathcal{W}^* \coloneqq \{h_{z,\epsilon}: \epsilon\in (0, \delta(z)), z\in \mathcal{Z}\}$. Then, we can construct $\mathscr{C}$ as
\begin{equation}\label{eq:mathscrC_in_moment_ineq}
\mathscr{C} \coloneqq \{\{a\}: a\in A^*\},\text{ where } A^* \coloneqq \{\eqref{eq:uncond_moment_ineq} \text{ with }w: w \in \mathcal{W}^*\}.
\end{equation}
The following proposition shows that this $\mathscr{C}$ satisfies Assumption \ref{assu:mis_1} under some regularity conditions.
\begin{proposition}\label{prop:moment_ineq_misleading}
Assume that
  \begin{enumerate}[label={(\alph*)}]
    \item there exists a function $g(z;\theta)$ such that (\emph{i}) for every $\theta\in \Theta$, $E[m(X,Z;\theta)|Z] = g(Z;\theta)$ almost surely; (\emph{ii}) $g(z;\theta)$ is continuous in $z$ for any given $\theta$; (\emph{iii}) $g(z;\theta)$ is continuous in $\theta$ for any given $z$ 
    \item for any $z$ in the support of $Z$, there exists some $\delta(z) > 0$ and some $\theta\in \Theta$ such that $E[m(X;\theta)|Z] \le 0$ for almost every $Z$ satisfying $\norm{Z - z} \le \delta(z)$. 
    \item there exists some function $\gamma(\cdot)$ such that $ \sup_{\theta\in \Theta} \norm{E[m(X;\theta)|Z]} \le \gamma(Z)$ almost surely and $E|\gamma(Z)|^2 < \infty$.
    \item $\Theta$ is compact.
  \end{enumerate}
Then, Assumption \ref{assu:mis_1} is satisfied for $\mathscr{C}$ constructed in \eqref{eq:mathscrC_in_moment_ineq}. 
\end{proposition}
As a result, Theorem \ref{thm:misleading} can be applied here. In the context of moment inequalities, the result in Theorem \ref{thm:misleading} means that \eqref{eq:cond_moment_ineq} is refuted if and only if there exists $w_1 \in \mathcal{W}^+_{m_1}$ and $w_2 \in \mathcal{W}^+_{m_2}$ such that both \eqref{eq:uncond_moment_ineq} with $w=w_1$ and \eqref{eq:uncond_moment_ineq} with $w=w_2$ are not refuted but the identified sets of these two sets of identification conditions have empty intersection. Moreover, whenever \eqref{eq:cond_moment_ineq} is refuted, for any $\tilde{w}\in \mathcal{W}^+_{\tilde{m}}$ with which \eqref{eq:uncond_moment_ineq} is data-consistent, there exists some $w_1\in \mathcal{W}^+_{m_1}$ and $w_2\in \mathcal{W}^+_{m_2}$ such that both the \eqref{eq:uncond_moment_ineq} with $w=(\tilde{w}, w_1)$ and \eqref{eq:uncond_moment_ineq} with $w=w_2$ are data-consistent but their identified sets have empty intersection. 

This result  complements the findings in \cite{AS2013}. In \cite{AS2013}, they propose an inference procedure for models like \eqref{eq:cond_moment_ineq}. Their inference transform \eqref{eq:cond_moment_ineq} into \eqref{eq:uncond_moment_ineq} by selecting $w$ in a sub-family of $\mathcal{W}^+_m$ and letting $m\to\infty$ as the sample size increases. Our result shows that increasing $m$ to infinity is crutial to ensure the robustness of the result if \eqref{eq:cond_moment_ineq} could be misspecified. If the dimension of $w$ is fixed, then the empirical result for \eqref{eq:uncond_moment_ineq} could be misleading even if the inference controls the size uniformly.

\subsection{Random Sets and Choquet Capacity}\label{sec:random_set_capacity}
In this section, we consider models whose identified set can be described with random sets and choquet capacity functions. Let $Y$ be a vector of endogenous random variables, and let $X$ be a vector of exogenous observable covariates. Let $\mathcal{Y}$ and $\mathcal{X}$ denote the support of $Y$ and $X$ respectively. Here, the parameter $\theta$ of interest could be of infinite dimensions, and its parameter space $\Theta$ need not be compact.

Let $\Gamma(\theta)$ be some random closed set in $\mathcal{Y}$ which could depend on $\theta$, $X$ and some latent random variables. Assume $P(Y\in \Gamma(\theta)) = 1$. \cite{Artstein1983} shows that the conditional distribution of $Y$ given $X$ equals $F_{Y|X}$ almost surely if and only if for any compact subset $K$ of $\mathcal{Y}$, the following inequality holds:
\begin{equation}\label{eq:artstein_ineq}
  P_{F}(Y\in K |X) \le L(K, X;\theta)\text{ almost surely},\text{ where }L(K, X;\theta)\coloneqq P(\Gamma(\theta)\cap K \neq \emptyset|X)
\end{equation}
where $P_F$ refers to the probability measure corresponding to $F_{Y|X}$. The $L(\cdot, X;\theta)$ is often known as the \emph{Choquet capacity function}. This type of models plays an important role in the partial identification literature. We refer to \cite{Molinari2020} for more background introductions. Often in practice, either $P_F(Y\in K|X)$ or $L(K, X;\theta)$ can be identified from the data, and the other one can typically be derived or simulated from some additional assumptions. For the purpose of illustration, we consider the case where $Y$ and $X$ are observable so that $P_F(Y\in K|X)$ can be identified from the data,  and assume $L(K, X;\theta)$ is a known function of $K$ and $X$ given $\theta$. 

In general, one needs to check \eqref{eq:artstein_ineq} for all compact sets of $\mathcal{Y}$ in order to ensure this collection of moment inequalities is a sharp identification condition. In some circumstances, checking the inequalities for all compact sets is equivalent to checking the inequalities only for a subcollection of compact sets,  in which case, this subcollection is called the \textit{core determining class} in the language of \cite{henry2011}. However, in practice, researchers often pre-select some finite collection $\mathcal{K}$ of compact sets that are not core-determining, and they only check \eqref{eq:artstein_ineq} for compact sets in this $\mathcal{K}$. For instance,  in the treatment effect literature, the well-known \cite{Manski1994} bounds on the potential outcome distributions implemented in various applications such as in \cite{Blundell2007}, or  \cite{Peterson1976} bounds on competing risk, use only a finite and not sufficient  collection  of Artstein inequalities. See, respectively,  \cite{Molinari2020}, and  \cite{Mourifie2020} for a detailed discussion. In empirical games, auction and  network applications we can also cite \cite{Ciliberto2009},  \cite{Haile2003},  \cite{Sheng2020},  \cite{Chesher2020}, among many others who also  focused on a finite and not sufficient  collection of Artstein inequalities.\footnote{See \cite{Molinari2020} for a detailed discussion.}

We want to explore the consequences of this pre-selection procedure when the original model might be refuted by the data. For simplicity, we focus on the case where $Y$ only takes a finite number of possible values. In this case, the support $\mathcal{Y}$ of $Y$ is a finite set and the collection of all compact sets in $\mathcal{Y}$ is simply the power set of $\mathcal{Y}$, i.e. the collection of all subsets of $\mathcal{Y}$. 

To fit this model into the general framework in Section \ref{sec:misleading_general}, define $A$ as the collection of all Artsein's inequality, i.e.
\begin{equation*}
A \coloneqq \{\eqref{eq:artstein_ineq} \text{ with }K: K \subseteq \mathcal{Y}\}.
\end{equation*}
With this definition, any subset $A'$ of $A$ corresponds to testing \eqref{eq:artstein_ineq} only for a pre-selected collection of compact sets. In order to apply Theorem \ref{thm:misleading}, we need to verify Assumptions \ref{assu:mis_2} and \ref{assu:mis_1}.  By the construction of $A$, Assumption \ref{assu:mis_2} is satisfied. To verify Assumption \ref{assu:mis_1}, let us construct $\mathscr{C}$ as 
\begin{equation}\label{eq:mathscrC_in_random_set}
\mathscr{C} \coloneqq \{\{a\}: a\in A\}.
\end{equation}
The following proposition provides a sufficient condition under which Assumption \ref{assu:mis_1} is satisfied with this choice of $\mathscr{C}$.

\begin{proposition}\label{prop:capcity_ineq_prop}
Let $\mathcal{X}$ and $\mathcal{Y}$ be the supports of $X$ and $Y$, respectively. Suppose $\mathcal{Y}$ is a finite set. The parameter space $\Theta$ may or may not be compact. Suppose that, for each $y\in \mathcal{Y}$, the following assumptions hold:
\begin{enumerate}[label={(L\ref{prop:capcity_ineq_prop}.C\arabic*)}]
\item\label{enu:finite_y_1}$\inf_{x\in \mathcal{X}} P(Y = y|X=x) > 0$,
\item\label{enu:finite_y_2}there exists a sequence $\theta_1, \theta_2,...$ in $\Theta$ such
that $ \inf_{x\in \mathcal{X}} L(\{y\}, x;\theta_k)\to 1$ as $k\to\infty$, 
\end{enumerate}
where the $\inf$ in the above two conditions refers to the essential infimum. Then, Assumption \ref{assu:mis_1} holds for $\mathscr{C}$ defined in \eqref{eq:mathscrC_in_random_set}.
\end{proposition}

As a result, Theorem \ref{thm:misleading} could be applied here. For any pre-selected collection $\mathcal{K}$ of compact subsets, define $\Theta_I(\mathcal{K})$ as the set of parameters which satisfy \eqref{eq:artstein_ineq} for all $K\in \mathcal{K}$. In this context, the result in Theorem \ref{thm:misleading} means that the model is refuted if and only if there exists two $\mathcal{K}_1$ and $\mathcal{K}_2$ such that $\Theta_I(\mathcal{K}_1)\neq \emptyset$, $\Theta_I(\mathcal{K}_2)\neq \emptyset$ and $\Theta_I(\mathcal{K}_1)\cap\Theta_I(\mathcal{K}_2)= \emptyset$. Moreover, for any pre-selected collection $\mathcal{K}$ of compact sets with $\Theta_I(\mathcal{K})\neq \emptyset$, there always exist two finite collections $\mathcal{K}_1$ and $\mathcal{K}_2$ such that $\Theta_I(\mathcal{K}\cup \mathcal{K}_1)\neq \emptyset$, $\Theta_I(\mathcal{K}_2)\neq \emptyset$ and $\Theta_I(\mathcal{K}\cup \mathcal{K}_1)\cap\Theta_I(\mathcal{K}_2) = \emptyset$.

We conclude this subsection with the entry game model as an example. We are going to verify all the conditions in Proposition \ref{prop:capcity_ineq_prop} for this example. 

\begin{example}[Entry game]\label{ex:entry_game}
  Consider an $m$-player complete information entry game as in Ciliberto and
  Tamer (2009). Assume there are $m$ players, where player $i$'s payoff function is specified as
  \begin{equation*}
    \pi_i = Y_i\left(\gamma_i +  X_i'\beta_i - \sum_{j\neq i}\delta^{i}_j Y_j + \epsilon_i\right)
  \end{equation*}
  where the $X_i$s are some covariates which might be player $i$ specific, $Y_i\in \{0, 1\}$ stands for player $i$'s entry decision, and $Y_j$ stands for the decision of player $j$. Here, $\gamma_i$ and $\beta_i$ are player-specific parameter coefficient, and $\delta^i_j > 0$ is the parameter that describes the strategic interaction between player $i$ and $j$. We assume that $Y = (Y_i: i=1,...,m)$ is always a pure-strategy Nash equilibirum. 

  Assume $\epsilon = (\epsilon_1,..., \epsilon_m)$ is independent of $X$ and $\epsilon$ follows the normal distribution $N(0, \Sigma)$. Let $\gamma=(\gamma_1, ..., \gamma_m)$, $\beta = (\beta_1,...,\beta_m)$, and $\delta = (\delta^{i}_j: i\neq j)$. Let $\theta = (\gamma, \beta, \delta, \Sigma)$ be the vector of all parameters.  Let $\mathcal{Y} = \{y=(y_1,...,y_m): y_i\in \{0, 1\},i=1,...,m\}$ be the set of all possible
  entry decisions. For any $K\subseteq \mathcal{Y}$, define $L(K, X,\theta)$ to be the probability that at least one $y\in K$ is a pure-strategy Nash
  equilibrium given $X$ and $\theta$. In practice, $L(K,X, \theta)$ can often be solved from numerical simulations.
  
  In \cite{henry2011}, the identified set of this model is shown to be the set of all $\theta$ which satisfies \eqref{eq:artstein_ineq} for every subset $K$ of $\mathcal{Y}$. The number of these inequalities increases with $m$ very quickly in the order of $2^{2^m}$. \cite{henry2011} provide some methods to reduce the number of inequalities by removing redundant inequalities in \eqref{eq:artstein_ineq}, but, in general, sharp characterization of the identified set involves a large number of inequalities. In practice for the sake of computational feasibility, emprical researchers often pre-select a finite collection $\mathcal{K}$ of subsets and only check \eqref{eq:artstein_ineq} for each $K\in \mathcal{K}$. See, for example, \cite{Ciliberto2009}, and \cite{Ciliberto2020}. 

  Let us now check conditions in Proposition \ref{prop:capcity_ineq_prop}. Let $\Theta = \mathbb{R}^{\dim(\theta)}$. Condition \ref{enu:finite_y_1} in Proposition \ref{prop:capcity_ineq_prop} is a low-level condition that can be directly verified by the data. In theory, this condition would hold, for example, if the true data generating process has the following properties: (\emph{i}) the support of $\epsilon$ is $\real^m$ conditinal on almost every $X$, and (\emph{ii}) for each $i$, player $i$'s payoff function is 
  \begin{equation*}
    \pi_i = Y_i\left(g(X_i, Y_{-i}) + \epsilon_i\right)
  \end{equation*}
where $Y_{-i} = \{Y_j: j\neq i)$ and $g$ can be an arbitrary function of $(X_i, Y_{-i})$ that is bounded in their support. This class of data generating processes nests the model that we imposed above, but the true data generating process need not be the same as our model. That is, Condition \ref{enu:finite_y_1} in Proposition \ref{prop:capcity_ineq_prop} holds even if the model is misspecified. Condition \ref{enu:finite_y_2} in Proposition \ref{prop:capcity_ineq_prop} also holds, because for each $y \in \mathcal Y$, one can have  $L(\{y\}, x;\theta_k)\to 1$ by simply fixing $\beta = 0$, $\delta = 0$ and let $\gamma\to \gamma^*$ where $\gamma^*_i = \infty$ if $y_i = 1$ and $\gamma^*_i = -\infty$ if $y_i = 0$.
\end{example}

\subsection{Existence and Uniqueness of Minimum Data-consitent relaxation}\label{sec:existence_of_minimum_relaxation}

\begin{theorem}\label{thm:existence_MDR}
  Suppose one of the following two conditions is satisfied,
  \begin{enumerate}[label={(T\ref{thm:existence_MDR}.C\arabic*)}]
    \item\label{enu:T3_1} $A$ is a finite set.
    \item\label{enu:T3_2} For any $a\in A$, $\Theta_I(a)$ is compact. Moreover, for any $B\subseteq A$, $\Theta_I(B)=\cap_{a\in B}\Theta_I(a)$.
  \end{enumerate}
  Then, there exists some minimum data-consistent relaxation of $A$. Moreover, for any data-consistent $A'\subseteq A$, there exists some minimum data-consistent relaxation $\widetilde{A}$ such that $A'\subseteq \widetilde{A}$.
\end{theorem}

Theorem \ref{thm:existence_MDR} not only establishes the existence of a minimum data-consistent relaxation, but also shows that any data-consistent subset $A' \subseteq A$ can be further strengthened into a minimum data-consistent relaxation by including additional assumptions. It is worth noting that, when $A$ is a finite set, the result of Theorem \ref{thm:existence_MDR} does not require any additional conditions. When $A$ is an infinite set, we need $\Theta_I(a)$ to be compact for each $a\in A$. In addition, we need that, for any $B\subseteq A$, $\Theta_I(B)=\cap_{a\in B}\Theta_I(a)$, which would hold, for example, if $\theta$ fully describes the distribution of both observed and latent random variables as in a maximum likelihood setting.

The following theorem studies the uniqueness of the minimum data-consistent relaxation, and is a generalized version of Theorem \ref{thm:compatible_submodel} in the main text. 
\begin{theorem}\label{thm:compatible_submodel_advanced}
Statement \ref{enu:ind_assumptions} implies \ref{enu:unique_MDR}. If either \ref{enu:T3_1} or \ref{enu:T3_2} holds, then \ref{enu:unique_MDR} implies \ref{enu:ind_assumptions}.
\end{theorem}

\subsection{Discordancy and multiplicity of minimum data-consistent relaxations}\label{DMDCR}
\begin{proposition}\label{prop:DMDCR}
Whenever Condition \ref{enu:T3_1}  or \ref{enu:T3_2} hold we have the following result: There exists two data-consistent $A_1, A_2 \subseteq A$ with $\Theta_I(A_1)\cap\Theta_I(A_2)=\emptyset$ if and only if there exists two minimum data-consistent relaxations $\widetilde A_1$ and $\widetilde A_2$ such that  $\Theta_I(\widetilde A_1)\cap\Theta_I(\widetilde A_2)=\emptyset$.
\end{proposition}
\begin{proof}[Proof for Proposition \ref{prop:DMDCR}]
We first prove the {\bf if} part. Suppose there exists two minimum data-consistent relaxations $\widetilde A_1$ and $\widetilde A_2$ such that  $\Theta_I(\widetilde A_1)\cap\Theta_I(\widetilde A_2)=\emptyset$. By the definition of minimum data-consistent relaxation, both $\widetilde{A}_1$ and $\widetilde{A}_2$ are data-consistent subsets of $A$. Thus, this proves the existence of two data-consistent $A_1, A_2 \subseteq A$ with $\Theta_I(A_1)\cap\Theta_I(A_2)=\emptyset$. 

Next, we want to prove the {\bf only if} part. Suppose that there exists two data-consistent $A_1, A_2 \subseteq A$ with $\Theta_I(A_1)\cap\Theta_I(A_2)=\emptyset$. By Theorem \ref{thm:existence_MDR}, there exists two minimum data-consistent relaxations $\widetilde{A}_1$ and $\widetilde{A}_2$ such that $A_1 \subseteq \widetilde{A}_1$ and $A_2 \subseteq \widetilde{A}_2$. Because $\widetilde{A}_1 \subseteq A_1$, we have $\Theta_I(\widetilde{A}_1) \subseteq \Theta_I(A_1)$. Similarly, we have $\Theta_I(\widetilde{A}_2) \subseteq \Theta_I(A_2)$. Because $\Theta_I(A_1) \cap \Theta_I(A_2) = \emptyset$, we must have $\Theta_I(\widetilde{A}_1) \cap \Theta_I(\widetilde{A}_2)  = \emptyset$.
\end{proof}

\subsection{Empirical Interpretation of Misspecification Robust Bound}\label{nc}
As discussed in Section \ref{sec:nonconflicting_statement}, a rationalizable and nonconflicting set has rich interpretations. The following theorem shows that the misspecification robust bound $\Theta^*_I$ is both rationalizable and nonconflicting. 

\begin{theorem}\label{thm:interpretation}
    Suppose either \ref{enu:T3_1} or \ref{enu:T3_2} holds. Then, $\Theta^*_I$ is both rationalizable and nonconflicting. That is, 
 \begin{itemize}
    \item(rationalizable) there exists some submodel $A'\subseteq A$ such that $\Theta_I( {A}')\subseteq \Theta^*_I$ and $\Theta_I({A}')\neq \emptyset$. 
    \item(nonconflicting) there does not exist a submodel $A'\subseteq A$ with $\Theta_I(A')\neq \emptyset$ such that $\Theta_I(A')\cap \Theta^*_I = \emptyset$.
  \end{itemize}
\end{theorem}

As discussed in Section \ref{sec:nonconflicting_statement}, $\Theta^*_I$ have even richer explanations when it is the smallest rationalizable and nonconflicting set. Recall that a set $S^*$ is the smallest rationalizable and nonconflicting set, if $S^*$ is rationalizable and nonconflicting, and $S^*\subseteq S$ for every rationalizable and nonconflicting set $S$. The smallest rationalizable and nonconflicting set does not always exists. However, the following theorem shows that, under mild conditions, whenever the smallest rationalizable and nonconflicting set exists, it is equal to $\Theta^*_I$.

\begin{theorem}\label{thm:smallest_iff_simple}
Suppose either \ref{enu:T3_1} or \ref{enu:T3_2} holds.  Assume 
    \begin{enumerate}[label=(T\ref{thm:smallest_iff_simple}.C\arabic*)]
      \item \label{enu:no_nested_MDR} there does not exist two different minimum data-consistent relaxation $\tilde{A}_1$ and $\tilde{A}_2$ such that $\Theta_I(\tilde{A}_1)\subsetneq \Theta_I(\tilde{A}_2)$.
    \end{enumerate}
    Then, whenever the smallest rationalizable and nonconflicting set exists, it is equal to $\Theta^*_I$.
\end{theorem}

Condition \ref{enu:no_nested_MDR} could be verified from the data. Note that, for any two different minimum data-consistent relaxations $\tilde{A}_1$ and $\tilde{A}_2$, we always have $\Theta_I(\tilde{A}_1\cup \tilde{A}_2) = \emptyset$, i.e. $\tilde{A}_1$ and $\tilde{A}_2$ are not compatible with each other. As it is unlikely that for two sets $\tilde{A}_1$ and $\tilde{A}_2$ to satisfy $\Theta_I(\tilde{A}_1)\subsetneq \Theta_I(\tilde{A}_2)$ while being incompatible with each other, we consider \ref{enu:no_nested_MDR} as a mild condition. Finally, Condition \ref{enu:no_nested_MDR} would hold, if there is a unique minimum data-consistent relaxation, or if the identified set of every minimum data-consistent relaxation is a singleton set. In fact, these two conditions are also sufficient conditions for $\Theta^*_I$ being the smallest rationalizable and nonconflicting set, as shown in the following theorem.

\begin{theorem}\label{thm:smallest_if}
Suppose either \ref{enu:T3_1} or \ref{enu:T3_2} holds. Assume one of the following cnoditions is satisfied:
        \begin{enumerate}[label=(T\ref{thm:smallest_if}.C\arabic*), itemindent=16pt, start=1]
          \item\label{enu:S6_3_1} there exists a unique minimum data-consistent relaxation,
          \item\label{enu:S6_3_2} for any minimum data-consistent relaxation $\widetilde{A}$, $\Theta_I(\widetilde{A})$ is a singleton.
        \end{enumerate}
Then, $\Theta^*_I$ is the smallest set that are both rationalizable and nonconflicting.
\end{theorem}

\subsection{Discrete Relaxation versus Continuous Relaxation}\label{Disc}
%

\begin{theorem}\label{thm:compared_to_perturbation_corrected}
  Suppose \ref{enu:S6_3_2} holds. Then, $\Theta^*_I\subseteq \Theta^\dagger_I$ for any type of relaxation chosen by the researcher. 
\end{theorem}

%
%

%

\section{Proof of the main results}\label{App:proof}
\subsection{Proof of Theorem \ref{thm:misleading}}

First of all, note that Assumption \ref{assu:mis_2} implies that for any $A', A''\subseteq A$, $\Theta_I(A'\cup A'') = \Theta_I(A') \cap \Theta_I(A'')$.

If there exists $A', A'' \subseteq A$ such that $\Theta_I(A') \neq \emptyset$, $\Theta_I(A'')\neq \emptyset$ and $\Theta_I(A')\cap \Theta_I(A'') = \emptyset$, then, $\Theta_I(A) \subseteq \Theta_I(A'\cup A'') = \Theta_I(A')\cap \Theta_I(A'') = \emptyset$. Hence, $\Theta_I(A) = \emptyset$ if there exists $A', A'' \subseteq A$ such that $\Theta_I(A') \neq \emptyset$, $\Theta_I(A'')\neq \emptyset$ and $\Theta_I(A')\cap \Theta_I(A'') = \emptyset$.

Reversely, if $\Theta_I(A) = \emptyset$, we want to show that there exists two finite subsets $A', A'' \subseteq A$ such that $\Theta_I(A') \neq \emptyset$, $\Theta_I(A'')\neq \emptyset$ and $\Theta_I(A')\cap \Theta_I(A'') = \emptyset$. More specifically, we are going to show the following statement:

\begin{equation}\label{eq:statement1_in_proof}
\begin{array}{c}
\textnormal{when }\Theta_I(A)=\emptyset, \textnormal{ there exists }A'\in \mathscr{C}\textnormal{ and } \{A_1,...,A_n\}\subseteq \mathscr{C}\textnormal{ for some finite }n \\
\textnormal{ such that both }A' \text{ and } A'' = \cup_{i=1}^n A_i\textnormal{ are data-consistent, but }\Theta_I(A')\cap \Theta_I(A'') = \emptyset.
\end{array}
\end{equation}
To show \eqref{eq:statement1_in_proof}, we consider two cases.

\noindent \emph{\bf Case 1: the $\mathscr{C}$ in Assumption \ref{assu:mis_1} has infinite elements}.
In this case, $\Theta_I(A')$ is compact for all $A'\in \mathscr{C}$. Define $\mathscr{D} \coloneqq \{B: \Theta_I(B)\neq \emptyset, \text{ and }\exists \mathscr{C}'\subseteq \mathscr{C}, B = \cup_{A'\in \mathscr{C}'}A'\}$. Because $\Theta_I(A')\neq \emptyset$ for all $A'\in \mathscr{C}$,  $\mathscr{C}\subseteq \mathscr{D}$. Hence, $\mathscr{D}$ is nonempty. Moreover, because intersection of compact sets is compact, we know $\Theta_I(A')$ is compact for any $A'\in \mathscr{D}$. 

Note that $\subseteq$ can be viewed as a partial order for elements within $\mathscr{D}$. We are going to show $\mathscr{D}$ has a maximal element in terms of $\subseteq$, i.e. there exists some $A'\in \mathscr{C}$ such that you cannot find an $A''\in \mathscr{D}$ satisfying $A' \subseteq A''$ and $A' \neq A''$.

To show that $\mathscr{D}$ has a maximal element in terms of $\subseteq$, we are going to invoke the Zorn's lemma. Let $\mathscr{Z}$ be an arbitrary nonempty chain in $\mathscr{D}$. That is, $\mathscr{Z}\neq \emptyset$, $\mathscr{Z}\subseteq \mathscr{D}$ and, for any $A', A''\in \mathscr{Z}$, either $A' \subseteq A''$ or $A'' \subseteq A'$. Define $A^\dagger \coloneqq \cup_{A'\in \mathscr{Z}}A'$. Then, $\Theta_I(A^\dagger) = \cap_{A'\in \mathscr{Z}}\Theta_I(A')$. Because $\mathscr{Z}$ is a chain, $\{\Theta_I(A'): A'\in \mathscr{Z}\}$ is also a chain in terms of $\subseteq$. Because $\mathscr{Z}\subseteq \mathscr{D}$,  $\Theta_I(A')$ is nonempty and compact for any $A'\in \mathscr{Z}$. Hence, Lemma \ref{lem:chain_compact} (stated and proved below) implies that $\Theta_I(A^\dagger)$ is nonempty. As a result, $A^\dagger \in \mathscr{D}$. Moreover, for any $A'\in \mathscr{Z}$, $A' \subseteq A^\dagger$. Thus, $\mathscr{D}$, as a partially ordered set in terms of $\subseteq$, has the following property: every nonempty chain $\mathscr{Z}$ in $\mathscr{D}$ has an upper bound $A^\dagger$ in $\mathscr{D}$. By Zorn's lemma, this implies that $\mathscr{D}$ has a maximal element in terms of $\subseteq$.

Let $A^*$ be a maximal element of $\mathscr{D}$ in terms of $\subseteq$. Because $A^*\in \mathscr{D}$, $\Theta_I(A^*) \neq \emptyset$. Because we have $\cap_{\widetilde{A}\in \mathscr{C}}\Theta_I(\widetilde{A}) = \Theta_I\left(\cap_{\widetilde{A}\in \mathscr{C}}\widetilde{A}\right) = \emptyset$ when $\Theta_I(A) = \emptyset$, $A^* \neq \cup_{\widetilde{A}\in \mathscr{C}}\widetilde{A}$. Therefore, there must exist some $A'\in \mathscr{C}$ such that $A'$ is not a subset of $A^*$. Moreover, because $\mathscr{C}\subseteq \mathscr{D}$, and because $A^*$ is a maximal element of $\mathscr{D}$ in terms of $\subseteq$, $A^*\cup A' \notin \mathscr{D}$. This implies that $\Theta_I(A^*\cup A') = \emptyset$. Because $\Theta_I(A^*\cup A')=\Theta_I(A^*)\cap \Theta_I(A')$, $\Theta_I(A^*)\cap \Theta_I(A') = \emptyset$. Because $A^*\in \mathscr{D}$, there exists $\mathscr{C}'\subseteq \mathscr{C}$ such that $A^* = \cup_{\widetilde{A}\in \mathscr{C}'}\widetilde{A}$ and, hence, $\Theta_I(A^*) = \cap_{\widetilde{A}\in \mathscr{C}'}\Theta_I(\widetilde{A})$. Because $\Theta_I(\widetilde{A})$ is compact for each $\widetilde{A}\in \mathscr{C}'$, Lemma \ref{lem:compact_disjoint} (shown and proved below) implies that there exists $\{A_1,...,A_n\}\subseteq \mathscr{C}'\subseteq \mathscr{C}$ for some finite $n$ such that $\Theta_I(\cup_{i=1}^nA_i)\cap \Theta_I(A') = \emptyset$ and $\Theta_I(\cup_{i=1}^nA_i) \neq \emptyset$. This proves \eqref{eq:statement1_in_proof}.

\noindent \emph{\bf Case 2: the $\mathscr{C}$ in Assumption \ref{assu:mis_1} has finite elements}. Enumerate $\mathscr{C}$ as $\mathscr{C} = \{A_1, ..., A_K\}$. For any $k \in \{1, ..., K\}$, define $B_k = \cup_{i=1}^k A_i$. For any $k$, $\Theta_I(B_k) = \cap_{i=1}^k \Theta_I(A_i)$. Define $\mathcal{N}= \{k\in \{1,...,K\}: \Theta_I(B_k) \neq \emptyset\}$. Because $\Theta_I(A_1)\neq \emptyset$, $\mathcal{N}$ is nonempty and $1\in \mathcal{N}$. Let $n$ be the largest element in $\mathcal{N}$. By construction, $\Theta_I(B_{n}) \neq \emptyset$. By Assumption \ref{assu:mis_1}, $\Theta_I(B_K) = \Theta_I(\cup_{i=1}^KA_i) = \emptyset$. Therefore, we know $n<K$ and $\Theta_I(B_{n + 1}) = \emptyset$. Because $B_{n + 1} = B_{n}\cup A_{n + 1}$, we know that $\Theta_I(B_{n}\cup A_{n + 1})  = \Theta_I(B_{n})\cap \Theta_I(A_{n + 1})= \emptyset$.
This proves \eqref{eq:statement1_in_proof} with $A' = A_{n+1}$.

We have proven \eqref{eq:statement1_in_proof} in both above cases, which completes the proof for the first result of the theorem.

For the second part of the results, when $\Theta_I(A) = \emptyset$, for any $B\subseteq A$ with $\Theta_I(B)\neq \emptyset$, we want to show there exists two finite subsets $B', B''\subseteq A$ such that $\Theta_I(B\cup B')\neq \emptyset$, $\Theta_I(B'')\neq \emptyset$ and $\Theta_I(B\cup B')\cap \Theta_I(B'') = \emptyset$. Let $A'$ and $A''$ be the two finite subsets of $A$ stated in \eqref{eq:statement1_in_proof}. Consider the following two cases:

\begin{enumerate}
\item Suppose $\Theta_I(A')\cap \Theta_I(B) = \emptyset$. Then, let $B' = \emptyset$, $B'' = A'$. We have that $\Theta_I(B\cup B')\neq \emptyset$, $\Theta_I(B'')\neq \emptyset$ and $\Theta_I(B\cup B')\cap \Theta_I(B'') = \emptyset$. 
\item Suppose $\Theta_I(A')\cap \Theta_I(B) \neq \emptyset$. Then, $\Theta_I(A'\cup B) =  \Theta_I(A')\cap \Theta_I(B) \neq \emptyset$. Moreover, $\Theta_I(A'\cup B) \cap \Theta_I(A'') = \Theta_I(B) \cap (\Theta_I(A') \cap \Theta_I(A'')) = \emptyset$. Let $B' = A'$ and $B'' = A''$. Then, we have that $\Theta_I(B\cup B')\neq \emptyset$, $\Theta_I(B'')\neq \emptyset$ and $\Theta_I(B\cup B')\cap \Theta_I(B'') = \emptyset$. 
\end{enumerate}

This completes the proof of Theorem \ref{thm:misleading}.

\begin{lemma}\label{lem:chain_compact}
Let $\mathscr{B}$ be a collection of nonempty compact sets within metric space $\mathcal{T}$. Moreover, suppose $\mathscr{B}$ is a nonempty chain in terms of $\subseteq$, i.e. for any $B, B'\in \mathscr{B}$, either $B\subseteq B'$ or $B' \subseteq B$. Then, $\cap_{B\in \mathscr{B}}B$ is nonempty.
\end{lemma}

\begin{proof}
For the purpose of contradiction, suppose $\cap_{B\in \mathscr{B}}B$ is empty. For any $B\in \mathscr{B}$, let $B^C$ denote the complement of $B$. Because the complement of $\cap_{B\in \mathscr{B}}B$ is $\cup_{B\in \mathscr{B}}B^C$, empty $\cap_{B\in \mathscr{B}}B$ implies that $\cup_{B\in \mathscr{B}}B^C = \mathcal{T}$. Pick an arbitrary $B'\in \mathscr{B}$. The fact that $\cup_{B\in \mathscr{B}}B^C = \mathcal{T}$ implies that $\{B^C: B\in \mathscr{B}\}$ is an open cover of $B'$. Because $B'$ is compact, there exists a finite $\{B_1, ..., B_n\} \subseteq \mathscr{B}$ with $n< \infty$ such that $B' \subseteq \cup_{i=1}^n B^C_i$. This is equivalent to $B' \cap (\cap_{i=1}^n B_i) = \emptyset$. In other words, we can find $n + 1$ elements in $\mathscr{B}$ whose intersection is empty. This contradicts the assumptiton that $\mathscr{B}$ is a chain in terms of $\subseteq$ and that every set in $\mathscr{B}$ is nonempty.
\end{proof}

\begin{lemma}\label{lem:compact_disjoint}
Let $\mathscr{B}$ be a collection of nonempty compact sets within metric space $\mathcal{T}$ and $\cap_{B\in \mathscr{B}}B \neq \emptyset$. Let $B'$ be a nonempty compact set in $\mathcal{T}$ such that $B' \cap (\cap_{B\in \mathscr{B}}B) = \emptyset$. Then, there exists $\{B_1,...,B_n\}\subseteq \mathscr{B}$ for some finite $n$ such that $\cap_{i=1}^n B_i \neq \emptyset$ and $B' \cap (\cap_{i=1}^n B_i) = \emptyset$.
\end{lemma}
\begin{proof}
Because $B' \cap (\cap_{B\in \mathscr{B}}B) = \emptyset$, we know $\{B^C: B\in \mathscr{B}\}$ is an open cover of $B'$. Because $B'$ is compact, there must exist a finite subcover $\{B^C_1,..., B^C_n\}\subseteq \{B^C: B\in \mathscr{B}\}$ for $B'$. This implies that $B'\cap (\cap_{i=1}^n B_i) = \emptyset$. Finally, because $\cap_{i=1}^n B_i \supseteq (\cap_{B\in \mathscr{B}}B)$, $\cap_{i=1}^n B_i\neq \emptyset$.
\end{proof}

  \subsection{Proof of Proposition \ref{thm:uncond_minothold}}
  Proposition \ref{thm:uncond_minothold} is an immediate result of the following two lemmas.

  \begin{lemma}\label{lem:lemma1}
    Suppose Assumption \ref{assu:reg} hold and $\overline{\gamma} < \underline{\gamma}$. Define the interval $\mathcal{W}$
    as the following:
    \begin{equation}\label{eq:W_def}
    \mathcal{W} \equiv
      \begin{cases}
        [\overline{\gamma}, \underline{\gamma}]&\text{ if } P(E[\overline{Y}|Z] = \overline{\gamma}) > 0 \text{ and }
          P(E[\underline{Y}|Z] = \underline{\gamma}) > 0\\
        [\overline{\gamma}, \underline{\gamma})&\text{ if } P(E[\overline{Y}|Z] = \overline{\gamma}) > 0 \text{ and }
          P(E[\underline{Y}|Z] = \underline{\gamma}) = 0\\
        (\overline{\gamma}, \underline{\gamma}]&\text{ if } P(E[\overline{Y}|Z] = \overline{\gamma}) = 0 \text{ and }
          P(E[\underline{Y}|Z] = \underline{\gamma}) > 0\\
        (\overline{\gamma}, \underline{\gamma})&\text{ if } P(E[\overline{Y}|Z] = \overline{\gamma}) = 0 \text{ and }
          P(E[\underline{Y}|Z] = \underline{\gamma}) = 0
      \end{cases}
    \end{equation}
    For any integer $m$ and any $h\in \mathcal{H}_m^+$, if $\widetilde{\Theta}(h)$ is nonempty, then
    $\widetilde{\Theta}(h)\cap \mathcal{W}$ is nonempty. 
\end{lemma}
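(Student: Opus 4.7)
The plan is to reduce the lemma to three elementary observations about weighted averages of $E[\underline{Y}|Z]$ and $E[\overline{Y}|Z]$.

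First, I would characterize $\widetilde{\Theta}(h)$ explicitly as an interval. For each coordinate $i=1,\ldots,m$, the pair of moment inequalities $E[h_i(Z)(\theta - \underline{Y})] \ge 0$ and $E[h_i(Z)(\overline{Y} - \theta)] \ge 0$, together with $E[h_i(Z)] > 0$ (which is part of the definition of $\mathcal{H}^+_m$), is equivalent to $\theta \in [L_i, U_i]$ where $L_i \coloneqq E[h_i(Z)\underline{Y}]/E[h_i(Z)]$ and $U_i \coloneqq E[h_i(Z)\overline{Y}]/E[h_i(Z)]$. Thus $\widetilde{\Theta}(h) = [\max_i L_i,\, \min_i U_i]$, with nonemptiness amounting to $\max_i L_i \le \min_i U_i$.

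Next, I would bound the endpoints. Iterated expectations writes $L_i = E[w_i(Z)\, E[\underline{Y}|Z]]$ with $w_i(Z) \coloneqq h_i(Z)/E[h_i(Z)]$ a nonnegative probability weight, and since $E[\underline{Y}|Z] \le \underline{\gamma}$ almost surely this yields $L_i \le \underline{\gamma}$. Symmetrically, $U_i \ge \overline{\gamma}$. The strict versions are the key point: when $P(E[\underline{Y}|Z] = \underline{\gamma}) = 0$, one has $\underline{\gamma} - E[\underline{Y}|Z] > 0$ a.s., so the nonnegative integrand $h_i(Z)(\underline{\gamma} - E[\underline{Y}|Z])$ is strictly positive on the set $\{h_i(Z) > 0\}$, which has positive measure because $E|h_i(Z)| > 0$; integrating yields $L_i < \underline{\gamma}$ strictly. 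Similarly, $P(E[\overline{Y}|Z] = \overline{\gamma}) = 0$ forces $U_i > \overline{\gamma}$.

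Finally, I would intersect with $\mathcal{W}$. Combining the above with the hypothesis $\overline{\gamma} < \underline{\gamma}$, I have the four inequalities $\max_i L_i \le \min_i U_i$, $\max_i L_i \le \underline{\gamma}$, $\overline{\gamma} \le \min_i U_i$, and $\overline{\gamma} \le \underline{\gamma}$, which together imply
\begin{equation*}
\max(\max_i L_i,\, \overline{\gamma}) \le \min(\min_i U_i,\, \underline{\gamma}).
\end{equation*}
Therefore $\widetilde{\Theta}(h) \cap [\overline{\gamma}, \underline{\gamma}]$ is a nonempty closed interval, and the strict versions from the previous step upgrade this intersection to the correct half-open or open version of $\mathcal{W}$ according to which of the four cases in \eqref{eq:W_def} applies. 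The hardest step will be the strict-inequality argument: checking that nonnegativity of $h_i$ together with $E|h_i(Z)| > 0$ suffices to turn the almost-sure strict inequality $E[\underline{Y}|Z] < \underline{\gamma}$ into a strict inequality for the $h_i$-weighted average. Once that is in place, the rest is just bookkeeping over the four cases defining $\mathcal{W}$.
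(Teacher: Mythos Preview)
Your proposal is correct and follows essentially the same approach as the paper: characterize $\widetilde{\Theta}(h)$ as $[\max_i L_i,\min_i U_i]$, bound $L_i\le\underline{\gamma}$ and $U_i\ge\overline{\gamma}$ (strictly when the relevant essential sup/inf is not attained), and conclude the intersection with $\mathcal{W}$ is nonempty. The only cosmetic difference is that the paper proves $\max_i L_i\le\underline{\gamma}$ by contradiction, whereas you argue directly via iterated expectations and the weighted-average interpretation; both are the same idea.
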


\begin{proof}[Proof of Lemma \ref{lem:lemma1}]
  Since $h$ has $m$ dimensions, we can write $h=(h_1, ..., h_m)$. Then, $\widetilde{\Theta}(h)$ can be characterized as 
  $\widetilde{\Theta}(h) = [\underline{\theta}, \overline{\theta}]$, where 
  \begin{equation*}
    \underline{\theta}  =  \max_i \frac{E[h_i(Z)\underline{Y}]}{E[h_i(Z)]} \quad \text{ and }\quad
    \overline{\theta}  =  \min_i \frac{E[h_i(Z)\overline{Y}]}{E[h_i(Z)]}.
  \end{equation*}

  Let us first prove $\underline{\theta} \le \underline{\gamma}$. Suppose, for the purpose of contradiction,  $\delta \equiv \underline{\theta} - \underline{\gamma} > 0$.  Let $i' \in \argmax_i E[h_i(Z)\underline{Y}] / E[h_i(Z)]$. Then, we
  have
  \begin{equation}\label{eq:temp_nawoihx81}
    E[h_{i'}(Z)(E[\underline{Y}|Z] - \underline{\theta}]) = 0
  \end{equation}
  Because $\delta \equiv \underline{\theta} - \underline{\gamma} > 0$, $E[\underline{Y}|Z] - \underline{\theta} \le -\delta$. In addition, because $h_{i'}$ is nonnegative, we have $E[h_{i'}]\delta \le
  0$, which contradicts to the fact that $\delta > 0$ and $E[h_{i'}(Z)] > 0$. Moreover, if
  $P(E[\underline{Y}|Z] = \underline{\gamma}) = 0$, then $E[h_i(Z)\underline{Y}] < \underline{\gamma}\cdot E[h_i(Z)]$
  for all $i$ so that $\underline{\theta} < \underline{\gamma}$.

  Similarly, we can show $\overline{\theta} \ge \overline{\gamma}$, and that $\overline{\theta} > \overline{\gamma}$ if
  $P(E[\overline{Y}|Z] = \overline{\gamma}) = 0$. These result then implies that $\widetilde{\Theta}(h)\cap \mathcal{W}
  \neq \emptyset$ whenever $\widetilde{\Theta}(h) \neq \emptyset$.
\end{proof}

\begin{lemma}\label{lem:lemma2}
Suppose Assumption \ref{assu:reg} hold and $\overline{\gamma} < \underline{\gamma}$. Let $\mathcal{W}$ be the interval
defined as in \eqref{eq:W_def}. Then, for any $\theta\in \mathcal{W}$, there exists some $h\in \mathcal{H}^+_2$ such
that $\widetilde{\Theta}(h) = \{\theta\}$.
\end{lemma}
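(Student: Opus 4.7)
The plan is to reverse-engineer the closed-form expression $\widetilde{\Theta}(h) = [\max_i E[h_i\underline{Y}]/E[h_i],\ \min_i E[h_i\overline{Y}]/E[h_i]]$ already derived in the proof of Lemma \ref{lem:lemma1}. Specifically, I will build $h_1$ so that $E[h_1(Z)\underline{Y}] = \theta\,E[h_1(Z)]$ and $h_2$ so that $E[h_2(Z)\overline{Y}] = \theta\,E[h_2(Z)]$. The two remaining inequalities $E[h_1\overline{Y}]/E[h_1] \geq \theta$ and $E[h_2\underline{Y}]/E[h_2] \leq \theta$ will then come for free: since each $h_i\ge 0$ and $E[\underline{Y}|Z]\le E[\overline{Y}|Z]$ a.s.\ by Assumption \ref{assu:reg}, the equality $E[h_1(E[\underline{Y}|Z]-\theta)] = 0$ forces $E[h_1(E[\overline{Y}|Z]-\theta)]\ge 0$, and symmetrically for $h_2$. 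Combining these four conditions gives $\widetilde{\Theta}(h) = [\theta,\theta] = \{\theta\}$.

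In the generic case $\overline{\gamma} < \theta < \underline{\gamma}$, I first verify two positivity facts: (i) $P(E[\underline{Y}|Z] > \theta) > 0$, because $\underline{\gamma}$ is the essential supremum of $E[\underline{Y}|Z]$; and (ii) $P(E[\underline{Y}|Z] \le \theta) > 0$, because otherwise $E[\overline{Y}|Z] \ge E[\underline{Y}|Z] > \theta$ a.s., giving $\overline{\gamma} \ge \theta$ and contradicting $\theta>\overline{\gamma}$. I then set
\[
 h_1(Z) = \alpha\,\indicator\{E[\underline{Y}|Z] > \theta\} + \beta\,\indicator\{E[\underline{Y}|Z] \leq \theta\},
\]
and pick $(\alpha,\beta)\in \real^2_+\setminus\{(0,0)\}$ so that
\[
\alpha\,E\big[(E[\underline{Y}|Z]-\theta)\,\indicator\{E[\underline{Y}|Z]>\theta\}\big] = \beta\,E\big[(\theta - E[\underline{Y}|Z])\,\indicator\{E[\underline{Y}|Z]\le\theta\}\big].
\]
The left-hand coefficient is strictly positive by (i); if the right-hand one is also strictly positive, I take both $\alpha,\beta > 0$, otherwise I set $\alpha = 0$, $\beta = 1$ (the degenerate sub-case where $E[\underline{Y}|Z]=\theta$ a.s.\ on $\{E[\underline{Y}|Z]\le\theta\}$). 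Either way $E[h_1]>0$, $h_1\in\mathcal{H}^+_1$, and $E[h_1(E[\underline{Y}|Z]-\theta)]=0$ as required. The construction of $h_2$ is symmetric, using $\{E[\overline{Y}|Z]<\theta\}$ and $\{E[\overline{Y}|Z]\ge\theta\}$; positivity of their probabilities follows, respectively, from $\theta>\overline{\gamma}$ and from $E[\overline{Y}|Z]\ge E[\underline{Y}|Z]$ on the set where $E[\underline{Y}|Z]>\theta$.

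The boundary endpoints require only a minor modification. If $\theta = \underline{\gamma}$, the definition of $\mathcal{W}$ in \eqref{eq:W_def} forces $P(E[\underline{Y}|Z]=\underline{\gamma}) > 0$, so I replace the formula above by $h_1 = \indicator\{E[\underline{Y}|Z] = \underline{\gamma}\}$, which satisfies $E[h_1]>0$ and $E[h_1\underline{Y}]/E[h_1] = \underline{\gamma} = \theta$ by iterated expectations; the construction of $h_2$ still proceeds as in the generic case since $\theta>\overline{\gamma}$. The case $\theta = \overline{\gamma}$ is perfectly symmetric. The main (modest) obstacle is the case-by-case book-keeping across the four branches in \eqref{eq:W_def}: one must check that each of the level-set probabilities entering the construction is strictly positive, so that both components of $h$ genuinely belong to $\mathcal{H}^+_1$. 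Once this is verified, plugging $(h_1,h_2)$ into the formula for $\widetilde{\Theta}(h)$ yields $\widetilde{\Theta}(h) = \{\theta\}$ in a single line.
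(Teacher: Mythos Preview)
Your proof is correct and follows essentially the same strategy as the paper: construct $h_1$ so that $E[h_1(Z)\underline{Y}]/E[h_1(Z)]=\theta$ via a nonnegative combination of indicators of the upper and lower level sets of $E[\underline{Y}|Z]$, build $h_2$ symmetrically from $E[\overline{Y}|Z]$, and use $E[\underline{Y}|Z]\le E[\overline{Y}|Z]$ to close the interval. The only cosmetic difference is that the paper defines its level sets with weak inequalities $\underline{S}^{\pm}=\{z:E[\underline{Y}|Z=z]\gtrless\theta\}$, which makes all four sets have positive probability in one stroke and lets the interior and boundary cases be handled uniformly, whereas your strict/weak split forces a separate (but easy) check at $\theta\in\{\overline{\gamma},\underline{\gamma}\}$; in particular, at $\theta=\underline{\gamma}$ the positivity of $P(E[\overline{Y}|Z]\ge\theta)$ should be read off the set $\{E[\underline{Y}|Z]=\underline{\gamma}\}$ rather than $\{E[\underline{Y}|Z]>\theta\}$.
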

\begin{proof}[Proof of Lemma \ref{lem:lemma2}]
  Fix any $\theta\in \mathcal{W}$.
Define $\underline{S}^+ = \left\{z: E[\underline{Y}|Z=z] \ge \theta\right\}$, 
$\underline{S}^- = \left\{z: E[\underline{Y}|Z=z] \le \theta\right\}$,
$\overline{S}^+ = \{z: E[\overline{Y}|Z=z] \ge \theta\}$  and
$\overline{S}^- = \{z: E[\overline{Y}|Z=z] \le \theta\}$.
Note that, for any $\vartheta > \overline{\gamma}$, the definition of $\overline{\gamma}$ implies that $P(\vartheta
\ge E[\overline{Y}|Z]) > 0$. When $P(E[\overline{Y}|Z] = \overline{\gamma}) > 0$, we also have $P(\vartheta \ge E[\overline{Y}|Z]) > 0$ for any $\vartheta \ge \overline{\gamma}$. Since $\theta\in \mathcal{W}$, we
conclude that $P(Z\in \overline{S}^-) > 0$. Similarly, that $\theta\in \mathcal{W}$ also implies that $P(Z \in
\underline{S}^+) > 0$. 
 Moreover, since $E[\underline{Y}|Z] \le
E[\overline{Y}|Z]$ almost surely, we know $\underline{S}^+ \subseteq \overline{S}^+$ and $\overline{S}^-
\subseteq \underline{S}^-$ almost surely. Therefore, $P(Z\in \underline{S}^-) > 0$ and $P(Z\in \overline{S}^+)
> 0$. 

Next, we show there exists some nonnegative function $h_{1}$ which satisfies
$E[\underline{Y}h_1(Z)] = \theta$ and $E[h_1(Z)] = 1$. Define $h_{1}^{+}(z) = \indicator(z\in
\underline{S}^{+})/P(Z\in \underline{S}^{+})$ and $h_{1}^{-}(z) = \indicator(z\in
\underline{S}^{-})/P(Z\in \underline{S}^{-})$. By construction, $h_{1}^{+}$ and $h_{1}^{-}$ are
nonnegative, and $E[h_{1}^{+}(Z)]= 1$ and $E[h_{1}^{-}(Z)]= 1$. Moreover,
$E[\underline{Y}h_{1}^{+}(Z)]\ge  \theta \ge E[\underline{Y}h_{1}^{-}(Z)]$. Hence, there must exists some
$q\in [0, 1]$ such that $E[\underline{Y} (qh_{1}^{-}(Z) + (1-q)h_{1}^{+}(Z))] = \theta$. Let $h_{1}
= qh_{1}^{-}(Z) + (1-q)h_{1}^{+}(Z)$. Then, such $h_{1}$ satisfies $E[\underline{Y}h_1(Z)] = \theta$ and
$E[h_1(Z)] = 1$. Similarly, there exists some nonnegative function $h_{2}$ which satisfies
$E[\overline{Y}h_2(Z)] = \theta$ and $E[h_2(Z)] = 1$.

Then, $E[h_1(Z)(\tilde{\theta} - \underline{Y})] \ge 0$  is equivalent to $\tilde{\theta} \ge \theta$. To see
this, note that
\begin{eqnarray*}
&& E[h_1(Z)(\tilde{\theta} - \underline{Y})] \ge 0\\
  \Leftrightarrow && E[h_1(Z)] \tilde{\theta} \ge E[h_1(Z)\underline{Y}]  \\
  \Leftrightarrow && \tilde{\theta} \ge \theta
\end{eqnarray*}
where the second equivalence follows from $E[\underline{Y}h_1(Z)] = \theta$ and $E[h_1(Z)] = 1$.
Similarly, we can show $E[h_2(Z)(\overline{Y} - \tilde{\theta})] \ge 0$ is equivalent to $\tilde{\theta} \le
\theta$. 
Let $h=(h_1, h_2)$. These equivalence relation implies that if $\tilde{\theta}\in \widetilde{\Theta}(h)$, then
$\tilde{\theta} = \theta$. 

Moreover, we have
\begin{eqnarray*}
  E[h_2(Z)\theta] & = & \theta \\
  & = & E[h_2(Z) \overline{Y}]\\
  & \ge & E[h_2(Z) \underline{Y}]
\end{eqnarray*}
where the first equality follows from $E[h_2(Z)] = 1$, and the second equality follows from $\theta = E[h_2(Z)
\overline{Y}]$, and the last inequality comes from $E[\underline{Y}|Z] \le E[\overline{Y}|Z]$ almost surely.
Similarly, we can show $E[h_1(Z)\theta] \le E[h_1(Z) \overline{Y}]$. Therefore, $\theta\in
\widetilde{\Theta}(h)$. As a result, $\widetilde{\Theta}(h)= \left\{\theta\right\}$. 
\end{proof}

\subsection{Proof of Theorem \ref{thm:compatible_submodel}} \label{sec:proof-comp-submodels} 
Theorem \ref{thm:compatible_submodel} is a corollary of Theorem \ref{thm:compatible_submodel_advanced} which is  proved below in Section \ref{sec:proof-comp-submodels-advanced}.

\subsection{Proof of Proposition \ref{prop:Monotone_Exclusion_ATE}}

Recall the notation used in this example: $\underline{Y}_d \equiv Y\indicator(D = d) + \underline{y}_d \indicator(D
\neq d)$, $\overline{Y}_d \equiv Y\indicator(D = d) + \overline{y}_d \indicator(D \neq d)$,
$\underline{q}_{dt}\equiv E[\underline{Y}_d|Z = t]$ and $\overline{q}_{dt} \equiv E[\overline{Y}_{d}|Z = t]$. 
Proposition \ref{prop:Monotone_Exclusion_ATE} is an immediate corollary of the following two lemmas.

\begin{lemma}\label{lem:monotone_IV_idset}
In model $Y = \sum_{z\in \mathcal{Z}}\indicator(Z = z)[Y_{1z}D + Y_{0z}(1-D)]$ where $\mathcal{Z}=\left\{ 1,2,...,k
\right\}$. Fix an arbitrary $z^* = 1,...,k$. 
Let $\Theta_{I, z^*}$ be the identified set of $a_{z^*}$, i.e. the identified set of \ref{enu:support}, \ref{enu:MI} and
\ref{enu:monotone} for $z = z^*$. Then,
\begin{enumerate}
  \item $\Theta_{I, z^*} \neq \emptyset$ if and only if the following two conditions hold for each $d\in \{0, 1\}$:
  \begin{equation}\label{eq:criteria_1}
    \forall z < z^*,\quad \max(\underline{q}_{dt}: t \le z) \le \min(\overline{q}_{dt}: t \ge z)
  \end{equation}
  and 
  \begin{equation}\label{eq:criteria_2}
   \max(\underline{q}_{dt}: t = 1,...,k ) \le \min(\overline{q}_{dt}: t \ge z^*)
  \end{equation}
\item if $\Theta_{I, z^*}\neq \emptyset$, then $\Theta_{I, z^*} = \Gamma_{1,z^*}\times \Gamma_{0, z^*}$.
\end{enumerate}
\end{lemma}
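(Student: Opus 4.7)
The plan is to prove the lemma in three steps: (i) derive necessary pointwise bounds on $\mu_{dz'} \coloneqq E[Y_{dz'}]$ and show they collapse to conditions \eqref{eq:criteria_1}--\eqref{eq:criteria_2}; (ii) establish sufficiency by constructing, for every monotone-then-flat mean vector inside these bounds, a valid data-generating process satisfying \ref{enu:support}--\ref{enu:monotone} and matching the observed law of $(Y, D, Z)$; and (iii) characterize $\Theta_{I,z^*}$ by optimizing the linear objective $\theta_d = \sum_{z'} P(Z = z')\mu_{dz'}$ over the resulting feasible polytope. Throughout I treat $d \in \{0,1\}$ separately, since none of \ref{enu:support}--\ref{enu:monotone} couples $Y_{1z}$ with $Y_{0z}$.

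For necessity, fix any valid data-generating process. Because $\underline{Y}_d \le Y_{dz'} \le \overline{Y}_d$ on $\{Z = z'\}$ by \ref{enu:support}, taking conditional expectations given $Z = z'$ and invoking \ref{enu:MI} yields $\underline{q}_{dz'} \le \mu_{dz'} \le \overline{q}_{dz'}$. Chaining \ref{enu:MI} with the monotone-then-flat structure from \ref{enu:monotone} gives $\mu_{dz'} \le \mu_{dz''}$ for $z' \le z'' \le z^*$ and $\mu_{dz'} = \mu_{dz^*}$ for $z' \ge z^*$. Combining these, $\mu_{dz'} \ge \max_{t \le z'} \underline{q}_{dt}$ and $\mu_{dz'} \le \min_{t \ge z'} \overline{q}_{dt}$ for $z' < z^*$, while on the flat region $\mu_{dz^*} \in [\max_{t} \underline{q}_{dt},\ \min_{t \ge z^*} \overline{q}_{dt}]$. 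Non-emptiness of $\Theta_{I,z^*}$ thus forces exactly \eqref{eq:criteria_1} and \eqref{eq:criteria_2}.

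For sufficiency, assume \eqref{eq:criteria_1}--\eqref{eq:criteria_2} hold and fix any monotone-then-flat vector $(\mu_{dz'})_{z'}$ satisfying the pointwise bounds above. I would exhibit random variables $(Y_{1z'}, Y_{0z'})_{z'=1}^{k}$ compatible with the observed law of $(Y, D, Z)$ by setting $Y_{dz'} = Y$ on $\{Z = z', D = d\}$ and, on the complementary events, drawing $Y_{dz'}$ from a distribution supported in $[\underline{y}_d, \overline{y}_d]$ with conditional mean calibrated so that $E[Y_{dz'}] = \mu_{dz'}$; this is feasible thanks to $\mu_{dz'} \in [\underline{q}_{dz'}, \overline{q}_{dz'}]$. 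The pathwise restriction $Y_{dz'} \le Y_{dz''}$ for $z' \le z'' \le z^*$ and equality for $z' \ge z^*$ is enforced via a comonotonic coupling of the quantile functions across $z'$. Consequently, the feasible set of mean vectors coincides with the convex polytope $\Pi_d$ cut out by the pointwise bounds and the monotone-then-flat pattern, and the image of $\Pi_d$ under the linear map $(\mu_{dz'})_{z'} \mapsto \theta_d$ is an interval. The lower endpoint of $\Gamma_{d,z^*}$ is attained by the pointwise lower-envelope vector $\mu_{dz'} = \max_{t \le z'} \underline{q}_{dt}$ for $z' < z^*$ and $\mu_{dz'} = \max_{t} \underline{q}_{dt}$ for $z' \ge z^*$, which is itself monotone-then-flat hence in $\Pi_d$; the upper endpoint follows symmetrically. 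Because the constraints decouple across $d$, $\Theta_{I,z^*} = \Gamma_{1,z^*} \times \Gamma_{0,z^*}$.

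The main obstacle is the coupling construction in the sufficiency step: prescribing marginals is easy, but one must exhibit a joint coupling of $(Y_{dz'})_{z'=1}^{k}$ such that the pathwise monotone-then-flat inequalities hold almost surely \emph{and} the observed conditional law of $Y$ given $(D, Z)$ is preserved. I expect the cleanest resolution is a quantile-based coupling inside $[\underline{y}_d, \overline{y}_d]$ that absorbs the gap $\mu_{dz'} - E[Y\indicator(D = d) \mid Z = z']$ into the counterfactual piece $Y_{dz'} \indicator(D \ne d)$ in an order-preserving fashion across $z'$, with the feasibility of this absorption being exactly the slack guaranteed by $\mu_{dz'} \in [\underline{q}_{dz'}, \overline{q}_{dz'}]$.
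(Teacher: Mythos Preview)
Your overall architecture matches the paper's: both split into a necessity half that bounds $\mu_{dz}=E[Y_{dz}]$ via the support, mean independence, and monotone-then-flat structure, and a sufficiency half that explicitly builds potential outcomes attaining the endpoints of $\Gamma_{d,z^*}$, then fills in the interval by convexity. Your step (i) and step (iii) are essentially the paper's Part~1 and its convex-combination closing argument.

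The one substantive difference is precisely the point you flag as the main obstacle, and it is a genuine gap in your writeup. The paper does not use an abstract comonotonic or quantile coupling. Instead, on each event $\{Z=t\}$ it writes $Y_{dz}$, for $t<z$, as the explicit convex combination
\[
\phi_t(\alpha_{tz}) = (1-\alpha_{tz})\big[\indicator(D=d)Y+\indicator(D\ne d)\gamma_t\big] + \alpha_{tz}\,\overline{y}_d,
\]
and symmetrically for $t>z$ with $\underline{y}_d$ replacing $\overline{y}_d$. The scalar $\alpha_{tz}\in[0,1]$ is solved from the single mean-matching equation $E[Y_{dz}\mid Z=t]=\mu_{dz}$, and pathwise monotonicity in $z$ reduces to monotonicity of $\alpha_{tz}$ in $z$, which follows immediately because $\mu_{dz}$ is nondecreasing and $\phi_t$, $\psi_t$ are monotone in $\alpha$. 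This construction sidesteps the difficulty your sketch runs into: on $\{Z=t,D=d\}$ the observed $Y=Y_{dt}$ anchors the whole family, and the convex-combination form \emph{automatically} forces $Y_{dz}\ge Y$ for $z>t$ and $Y_{dz}\le Y$ for $z<t$, without any separate coupling argument across events. Your comonotonic idea might be completed along similar lines, but as written it does not explain how the counterfactual $Y_{dz'}$ on $\{Z=t,D=d\}$ is tied to the observed $Y$ there, which is exactly where the pathwise ordering must be verified; the paper's device of interpolating between the realized $Y$ and the support endpoint is what makes this step go through cleanly.
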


\begin{lemma}\label{lem:MI_IV_idset}
In model $Y = \sum_{z\in \mathcal{Z}}\indicator(Z = z)[Y_{1z}D + Y_{0z}(1-D)]$ where $\mathcal{Z}=\left\{ 1,2,...,k
\right\}$.  Let $\Theta_{I}$ be the identified set of $a^\dagger$, i.e. the identified set of \ref{enu:support} and \ref{enu:MI}. Then,
\begin{enumerate}
  \item $\Theta_{I}\neq \emptyset$ if and only if $P\big(Y\in [\underline{y}_d, \overline{y}_d]|D = d\big) = 1$ for any
    $d\in \{0, 1\}$.
  \item when $\Theta_{I}\neq \emptyset$, $\Theta_{I} = \left[E[\underline{Y}_1], E[\overline{Y}_1]\right] \times
  \left[E[\underline{Y}_0], E[\overline{Y}_0]\right]$.
\end{enumerate}
\end{lemma}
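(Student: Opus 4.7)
My plan is to prove the outer bound and the only-if direction of (1) via a conditioning argument, and then to establish sharpness (together with the if direction of (1)) by explicitly constructing potential-outcome distributions.

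For the outer bound, I would first observe that $Y=Y_{dZ}$ on $\{D=d\}$, so \ref{enu:support} immediately forces $P(Y\in[\underline{y}_d,\overline{y}_d]|D=d)=1$, giving the only-if part of (1). Conditioning on $Z=z$, I would decompose
\begin{equation*}
E[Y_{dz}|Z=z]=E[Y\indicator(D=d)|Z=z]+E[Y_{dz}\indicator(D\neq d)|Z=z],
\end{equation*}
with the second term lying in $[\underline{y}_d P(D\neq d|Z=z),\overline{y}_d P(D\neq d|Z=z)]$ by \ref{enu:support}; this yields $\underline{q}_{dz}\le E[Y_{dz}|Z=z]\le \overline{q}_{dz}$. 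Using \ref{enu:MI} to equate $E[Y_{dz}|Z=z]$ with $E[Y_{dz}]$ and averaging over $Z$ via the law of iterated expectations then gives $E[\underline{Y}_d]\le \theta_d\le E[\overline{Y}_d]$, so $\Theta_I\subseteq [E\underline{Y}_1,E\overline{Y}_1]\times[E\underline{Y}_0,E\overline{Y}_0]$.

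For sharpness, I would fix any $(\theta_1,\theta_0)$ in this rectangle and, for each $d$, pick $\lambda_d\in[0,1]$ so that $\theta_d=\sum_z P(Z=z)\mu^\ast_{dz}$ with $\mu^\ast_{dz}\coloneqq(1-\lambda_d)\underline{q}_{dz}+\lambda_d\overline{q}_{dz}$; such $\lambda_d$ exists because $\theta_d\in[E\underline{Y}_d,E\overline{Y}_d]$ and the weighted sum is continuous in $\lambda_d$. Then I would construct $(Y_{dz})_{d,z}$ by prescribing its conditional law given $(Y,D,Z)$: set $Y_{dz}=Y$ on $\{D=d,Z=z\}$; on $\{Z=z,D\neq d\}$ take $Y_{dz}$ to be a two-point distribution on $\{\underline{y}_d,\overline{y}_d\}$ with mean $(\mu^\ast_{dz}-E[Y\indicator(D=d)|Z=z])/P(D\neq d|Z=z)$, which lies in $[\underline{y}_d,\overline{y}_d]$ precisely because $\mu^\ast_{dz}\in[\underline{q}_{dz},\overline{q}_{dz}]$; and on $\{Z=z'\neq z\}$ let $Y_{dz}$ be a point mass at $\mu^\ast_{dz}\in[\underline{y}_d,\overline{y}_d]$ (the containment uses $[\underline{q}_{dz},\overline{q}_{dz}]\subseteq[\underline{y}_d,\overline{y}_d]$). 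Independent auxiliary randomization across distinct $(d,z)$ pairs lets all these conditional distributions coexist on a common probability space; a direct check then verifies \ref{enu:support}, \ref{enu:MI}, and $\sum_z P(Z=z)E[Y_{dz}]=\theta_d$, and the product form follows because $\lambda_1$ and $\lambda_0$ are chosen independently.

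The hard part will be making the sharpness construction fully rigorous: I need to ensure the piecewise conditional laws assemble into a valid joint law that simultaneously satisfies the observability constraint $Y=Y_{dZ}$ on $\{D=d\}$ and unconditional mean independence across all $Z$-values. Routine edge-case treatments handle $P(D\neq d|Z=z)=0$ (where $\underline{q}_{dz}=\overline{q}_{dz}$ pins down $\mu^\ast_{dz}$ uniquely and the construction on $\{D\neq d,Z=z\}$ is vacuous) and $P(Z=z)=0$ (where that $z$ contributes nothing to $\theta_d$ and $Y_{dz}$ is free on the null set).
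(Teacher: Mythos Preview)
Your proposal is correct and follows essentially the same line as the paper's proof: the outer bound and only-if direction are obtained identically (condition on $Z=z$, bound $E[Y_{dz}\indicator(D\neq d)|Z=z]$ via \ref{enu:support}, invoke \ref{enu:MI}, average over $z$), and sharpness is achieved by explicit construction of $Y_{dz}$ that equals $Y$ on $\{D=d,Z=z\}$, takes a deterministic value in $[\underline{y}_d,\overline{y}_d]$ on $\{D\neq d,Z=z\}$, and is a point mass on $\{Z\neq z\}$ chosen so that the mean is constant across $Z$-values.

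The only difference worth noting is cosmetic: the paper constructs $Y_{dz}$ and $Y'_{dz}$ attaining the lower and upper endpoints $E[\underline{Y}_d]$ and $E[\overline{Y}_d]$ separately (using constants $\underline{y}_d$ and $\overline{y}_d$ on $\{D\neq d,Z=z\}$) and then obtains the four corners of the rectangle, leaving the interior points implicit; you instead build the convex-combination parameter $\lambda_d$ directly into the construction via a two-point mixture on $\{D\neq d,Z=z\}$, which yields every $(\theta_1,\theta_0)$ in the rectangle in one stroke. Your version is thus slightly more explicit about interior points, but the two arguments are interchangeable and your two-point law degenerates to the paper's constants at $\lambda_d\in\{0,1\}$.
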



\begin{proof}[Proof of Lemma \ref{lem:monotone_IV_idset}]
  The results of this lemma can be divided into the following two parts:
\begin{enumerate}
  \item For any $z^*=1,...,k$, $\Theta_{I, z^*}\neq \emptyset$ only if that \eqref{eq:criteria_1} and
    \eqref{eq:criteria_2} hold for each $d=0,1$.  Moreover, $\Theta_{I, z^*} \subseteq \Gamma_{1,z^*}\times
    \Gamma_{0,z^*}$.
\item if \eqref{eq:criteria_1} and \eqref{eq:criteria_2} hold, then $\Theta_{I, z^*}\neq \emptyset$ and
  $ \Theta_{I, z^*} \supseteq \Gamma_{1,z^*}\times \Gamma_{0,z^*}$. 
 \end{enumerate}
 Let us now prove these two parts one by one.

 \paragraph{\underline{Part 1}}
Fix any $d\in \{0, 1\}$. Suppose assumption $a_{z^*}$ hold, i.e. assumptions \ref{enu:support}, \ref{enu:MI} and
\ref{enu:monotone} hold for $z = z^*$. Assumption
\ref{enu:monotone} implies that for any $z' < {z^*}$ and $t \le z'$ we have $Y_{dt} \le Y_{dz'}$ so that $E[Y_{dt}|Z
= z']\le E[Y_{dz'} | Z = z']$. Due to \ref{enu:MI}, we know $E[Y_{dt}|Z = z'] = E[Y_{dt}|Z = t]$, so that
$E[Y_{dt}|Z = t]\le E[Y_{dz'} | Z = z']$. Since $\underline{q}_{dt} \le E[Y_{dt}|Z = t]$, we conclude that
$\max_{t\le z'}\underline{q}_{dt} \le E[Y_{dz'}|Z = z']$. Similarly, \ref{enu:monotone} implies that for any $z' < {z^*}$
and $t \ge z'$, we have $Y_{dz'} \le Y_{dt}$ so that $E[Y_{dz} | Z = z'] \le E[Y_{dt}|Z = z']$. Because of
\ref{enu:MI}, and because $\overline{q}_{dt} \ge E[Y_{dt}|Z = t]$,  we know that $E[Y_{dz'} | Z = z'] \le \min_{t\ge z}
\overline{q}_{dt}$. Hence, for any $d\in \{0, 1\}$,
\begin{equation}\label{eq:bound_1}
    \forall z' < {z^*},\quad \max(\underline{q}_{dt}: t \le z') \le E[Y_{dz'} | Z = z'] 
    \le \min(\overline{q}_{dt}: t \ge z')
  \end{equation}

  Now, for any $z' \ge {z^*}$, \ref{enu:monotone} implies that $Y_{dt}\le Y_{dz'}$ for any $t\in \{1, ...,k\}$. 
  Hence, $E[Y_{dt}|Z = z'] \le E[Y_{dz'}|Z = z']$ for all $t$. Because \ref{enu:MI} implies that 
  $E[Y_{dt}|Z = t]  =  E[Y_{dt}|Z = z']$, we have $E[Y_{dt}|Z = t] \le E[Y_{dz'}|Z = z']$ for all $t$, 
  so that $\max(\underline{q}_{dt}:  t=1,...,k)
\le E[Y_{dz'}|Z = z']$. For any $z' \ge {z^*}$, assumption \ref{enu:monotone} implies that $Y_{dt}\ge Y_{dz'}$  for all $t\ge z^*$. 
Hence, $E[Y_{dt}|Z = z] \ge E[Y_{dz}|Z = z]$ for all $t \ge {z^*}$. 
Assumption \ref{enu:MI} then implies
 that $E[Y_{dt}|Z = t] \ge E[Y_{dz'}|Z = z']$ for all $t \ge {z^*}$, so that
$\min(\overline{q}_{dt}:  t \ge {z^*}) \ge E[Y_{dz}|Z = z]$. Hence, we conclude that for any $d\in \{0, 1\}$: 
  \begin{equation}\label{eq:bound_2}
  \forall z' \ge {z^*},\quad \max(\underline{q}_{dt}: t = 1,...,k ) \le E[Y_{dz'} | Z = z'] \le \min(\overline{q}_{dt}: t \ge {z^*}).
  \end{equation}

  Combine \eqref{eq:bound_1} and \eqref{eq:bound_2}, we conclude that for any $d$, $\theta_d \in \Gamma_{d,{z^*}}$, so that 
$\Theta_{I,z^*}\subseteq \Gamma_{1,{z^*}}\times \Gamma_{0,{z^*}}$.
  Moreover, because Assumption \ref{enu:support}, \ref{enu:MI} and \ref{enu:monotone} imply \eqref{eq:bound_1} and \eqref{eq:bound_2},
  the violation of \eqref{eq:criteria_1} and \eqref{eq:criteria_2} implies that  $\Theta_{I,z^*} = \emptyset$.
  Equivalently, $\Theta_{I, z^*} \neq \emptyset$ only if \eqref{eq:criteria_1} and \eqref{eq:criteria_2} hold for any $d\in
  \{0, 1\}$.

  \paragraph{\underline{Part 2}} We want to prove that \eqref{eq:criteria_1} and \eqref{eq:criteria_2}
implies that $\Theta_{I, z^*}\neq \emptyset$ and $\Theta_{I, z^*} \supseteq \Gamma_{1,z^*}\times
\Gamma_{0,z^*}$. Fix an arbitrary $d\in \{0, 1\}$. First of all, we are going to prove that one can construct $Y_{dz}$
which achieves the lower bound in $\Gamma_{d,z^*}$, satisfies assumptions \ref{enu:support}-\ref{enu:monotone}, and is compatible with the data at the same time.

Define $\gamma_z$ for each $z=1,...,k$ as follows:
\begin{itemize}
  \item for $z < z^*$, let $\gamma_z$ be the value which solves 
    \begin{equation*}
      \max(\underline{q}_{dt}:t\le z) = E[\indicator(D = d) Y|Z = z] + E[\indicator(D\neq d)Y|Z = z] \gamma_z
    \end{equation*}
    Then, $\gamma_z \in [\underline{y}_d, \overline{y}_d]$ if $\overline{q}_{dz} \ge \max(\underline{q}_{dt}:t\le z)$, which is
    implied by \eqref{eq:criteria_1}. 
  \item for $z\ge z^*$, let $\gamma_z$ be the value which solves
    \begin{equation*}
    \max( \underline{q}_{dt}: t = 1, ..., k) = E[\indicator(D = d)Y|Z = z] + E[\indicator(D\neq d)Y|Z
      = z] \gamma_z
    \end{equation*}
    Then, $\gamma_z \in [\underline{y}_d, \overline{y}_d]$ if $\max( \underline{q}_{dt}: \forall t) \le
    \overline{q}_{dz}$ which is implied by \eqref{eq:criteria_2}. 
\end{itemize}
Define $W_{dz}\equiv\indicator(D = d, Z = z)Y + \indicator(D\neq d, Z = z)\gamma_z$. Then, by construction,
\begin{equation}
  \label{eq:expt_W_dz}
    E[W_{dz}|Z = z] = 
    \begin{cases}
      \max(\underline{q}_{dt}:t\le z) & \text{if } z < z^*\\
      \max( \underline{q}_{dt}: t = 1,...,k)  & \text{if } z \ge z^*
    \end{cases}
\end{equation}
which implies that
\begin{eqnarray}
  \forall z \le t, &&  E[W_{dz}|Z = z] \le E[W_{dt}|Z = t]  \label{eq:W_monotone}\\
  \forall z \ge z^*, && E[W_{dz}|Z = z] = \max( \underline{q}_{dt}: t = 1,...,k) \label{eq:maofwie_2}
\end{eqnarray}
Moreover, because $\gamma_z \in [\underline{y}_d, \overline{y}_d]$ for any $z\in \{1, ..., k\}$, we know $P(W_{dz}\in
[\underline{y}_z, \overline{y}_z]) = 1$ for all $z\in \{1, ..., k\}$. And, $P(W_{dz} = Y | D = d, Z = z) = 1$ for any
$d$ and $z$. 

Now, for any $t\in \{1,..., k\}$, define, $\phi_{dt}(\alpha)\equiv (1-\alpha) W_{dt} + \alpha \overline{y}_d$ and
$\psi_{dt}(\alpha)\equiv(1-\alpha)\underline{y}_d + \alpha W_{dt}$.  We claim that, 
for any $t\neq z$, there exists $\alpha_{tz}\in [0, 1]$ which solves the following equations:
\begin{equation}\label{eq:alpha_equation}
  \begin{array}{ll}
    \forall t < z, & E[W_{dz}|Z = z] = E[\phi_{dt}(\alpha_{tz})|Z = t] \vspace{0.4em} \\
  \forall t > z, & E[W_{dz}|Z = z] = E[\psi_{dt}(\alpha_{tz})|Z = t].
\end{array}
\end{equation}
To see why it is so, note that
\begin{equation}
  \label{eq:mfqiobn23}
  \begin{array}{rl}
    \forall t < z, & E[W_{dt}|Z = t] = E[\phi_{dt}(0)|Z = t] \text{ and } E[\phi_{dt}(1)|Z = t] = \overline{y}_{d},\vspace{0.5em}\\
  \forall t > z, & \underline{y}_d = E[\phi_{dt}(0)|Z = t]  \text{ and } E[\phi_{dt}(1)|Z = t] = E[W_{dt}|Z = t].
  \end{array}
\end{equation}
These results, combined with  \eqref{eq:W_monotone}, imply that 
\begin{eqnarray*}
  \forall t < z, && E[\phi_{dt}(0)|Z = t] \le E[W_{dz}|Z = z] \le E[\phi_{dt}(1)|Z = t],\\
  \forall t > z, && E[\psi_{dt}(0)|Z = t] \le E[W_{dz}|Z = z] \le E[\phi_{dt}(1)|Z = t].
\end{eqnarray*}
which implies the existence of $\alpha_{tz}\in [0, 1]$ satisfying \eqref{eq:alpha_equation} for all $t\neq z$. 

In addition, $(\alpha_{tz}: t\neq z)$ has some extra properties. Because \eqref{eq:W_monotone} holds and  $E[\phi_{dt}(\alpha)|Z = t]$ is an increasing function of $\alpha$, 
\begin{equation}\label{eq:Q1}
\forall t < z< z',\quad \alpha_{tz} \le \alpha_{tz'}.
\end{equation}
Because \eqref{eq:W_monotone} holds and $E[\psi_t(\alpha)|Z = t]$ is an increasing function of $\alpha$,
\begin{equation}\label{eq:Q_2}
\forall z < z' < t,\quad \alpha_{tz} \le \alpha_{tz'}. 
\end{equation}

Construct $Y_{dz} \equiv \sum_{t < z} \phi_{dt}(\alpha_{tz}) + W_{dz} + \sum_{t > z} \psi_{dt}(\alpha_{tz})$.
Because $P(W_{dz}\in [\underline{y}_d, \overline{y}_d]) = 1$, assumption \ref{enu:support} holds for this $Y_{dz}$. 
Because of \eqref{eq:alpha_equation}, assumption \ref{enu:MI} holds for this $Y_{dz}$, 
i.e. $E[Y_{dz}|Z = t] = E[Y_{dz}|Z = z]$ for any $t$, $z$ with $t\neq z$.  

To show assumption \ref{enu:monotone} also holds for this $Y_{dz}$,  note that, for any $z_1$, $z_2$ with $1 \le z_1 < z_2 \le k$, 
\begin{itemize}
  \item If $Z < z_1$, $Y_{dz_1} = \phi_{dZ}(\alpha_{Zz_1}) \le \phi_{dZ}(\alpha_{Z z_2}) = Y_{dz_2}$ because of
    \eqref{eq:Q1} and because $\phi_{dZ}(\alpha)$ is increasing in $\alpha$. 
  \item If $Z = z_1$, $Y_{dz_1} = W_{dz_1} \le \phi_{dZ}(\alpha_{Z z_2}) = Y_{dz_2}$ because
    of the definition of $\phi_{dZ}(\alpha)$.
  \item If $z_1 < Z < z_2$, $Y_{dz_1} = \psi_{dZ}(\alpha_{Zz_1}) \le W_{dZ} \le \phi_{dZ}(\alpha_{Zz_2})
    = Y_{dz_2}$ because of the definition of $\phi_{dZ}(\alpha)$ and $\psi_{dZ}(\alpha)$.
  \item If $Z = z_2$, $Y_{dz_1} = \psi_{dZ}(\alpha_{Zz_1}) \le W_{dZ} = Y_{dz_2}$ because of the definition of
    $\psi_{dZ}(\alpha)$. 
  \item If $z_2 < Z$, $Y_{dz_1} = \psi_{dZ}(\alpha_{Zz_1}) \le \psi_{dZ}(\alpha_{Zz_2}) = Y_{dz_2}$ because of
    \eqref{eq:Q_2} and because $\psi_{dZ}(\alpha)$ is increasing in $\alpha$. 
\end{itemize}
As a result, $Y_{dz_1} \le Y_{dz_2}$ almost surely for any $z_1 \le z_2$. Moreover, because of \eqref{eq:maofwie_2},
$\alpha_{tz} = \alpha_{tz'}$ for any $t$, $z$ and $z'$ with $t < \min(z, z')$ and $z^* \le \min(z,z')$. Because of 
\eqref{eq:maofwie_2} and \eqref{eq:mfqiobn23},  $\alpha_{tz} = 0$ for any $z^* \le t < z$, and $\alpha_{tz} = 1$ for any
$t > z \ge z^*$. Given these results, one can show that for any $z' \ge z^*$,
$Y_{dz'} = \sum_{t < z^*}\phi_{dt}(\alpha_{tz^*}) + \sum_{ t= z^*}^k W_{dt}$.  
This implies that assumption \ref{enu:monotone} also holds. So far, we have shown that $Y_{dz}$ constructed above
satisfies assumption $a_{z^*}$.

Finally, because $E[Y_{dz}] = E[Y_{dz}|Z = z] = E[W_{dz}|Z = z]$ and because of \eqref{eq:expt_W_dz}, we know $\sum_{z}P(Z
= z)E[Y_{dz}]$ achieves the lower bound in $\Gamma_{d,z}$. Moreover, because $P[Y_{dz} = Y|D = d, Z =z] = 1$, this
construction of $Y_{dz}$ is consistent with the data.  
Combine all the above results, for an arbitrary $d\in \{0, 1\}$, we have constructed $Y_{dz}$ which satisfies
assumption $a_{z^*}$ and, at the same time,  $\sum_{z}P(Z = z)E[Y_{dz}]$ achieves the lower bound of
$\Gamma_{d,z}$. 

  Similarly, one can construct $Y_{dz}$ which satisfies
  assumption $a_{z^*}$ and $\sum_{z}P(Z = z)E[Y_{dz}]$ achieves the upper bound of
  $\Gamma_{d,z}$, by defining $\gamma'_z$ as follows:
  \begin{itemize}
    \item for $z < z^*$, let $\gamma'_z$ be the value which solves 
      \begin{equation*}
        \min(\overline{q}_{dt}:t \ge z) = E[\indicator(D = d) Y|Z = z] + E[\indicator(D\neq d)Y|Z = z] \gamma_z
      \end{equation*}
    \item for $z\ge z^*$, let $\gamma'_z$ be the value which solves
      \begin{equation*}
       \min(\overline{q}_{dt}:  t\ge z^*) = E[\indicator(D = d)Y|Z = z] + E[\indicator(D\neq d)Y|Z
        = z] \gamma_z
      \end{equation*}
  \end{itemize}
  Following the same steps as before except replacing $\gamma_{z}$ with $\gamma'_z$, one can show that the constructed
  $Y_{dz}$ satisfies \ref{enu:support}-\ref{enu:monotone} and $\sum_{z}P(Z = z)E[Y_{dz}]$ achieves the upper
  bound of $\Gamma_{d,z}$. 

  Taking convex combinations of the constructions which achieve the upper and lower bound, every point in $\Gamma_{d,z}$
  can be achieved under assumption \ref{enu:support}-\ref{enu:monotone}. This completes the proof.
\end{proof}

\begin{proof}[Proof of Lemma \ref{lem:MI_IV_idset}]
  Suppose \ref{enu:support} and \ref{enu:MI} hold. For any $z\in \{1, 2, ..., k\}$ and any $d\in \{0, 1\}$, we have
  \begin{equation*}
    \indicator(Z = z, D = d)Y + \indicator(Z \neq z \text{ or } D\neq d)\underline{y}_d \le Y_{dz} \le 
\indicator(Z = z, D = d)Y + \indicator(Z \neq z \text{ or } D\neq d)\overline{y}_d
  \end{equation*}
  Therefore, $\underline{q}_{dz}\le E[Y_{dz}|Z = z] \le \overline{q}_{dz}$. Because of \ref{enu:MI}, this implies
  that $\underline{q}_{dz}\le E[Y_{dz}] \le \overline{q}_{dz}$. As a result, $E[\underline{Y}_d] \le
  \sum_{z}P(Z = z) E Y_{dz} \le E[\overline{Y}_d]$, which proves that $\Theta_I \subseteq
  \left[E[\underline{Y}_1], E[\overline{Y}_1]\right] \times \left[E[\underline{Y}_0],
  E[\overline{Y}_0]\right]$. Moreover, when $P\big(Y\in [\underline{y}_d, \overline{y}_d]|D = d\big) = 1$ for
  any $d\in \{0, 1\}$ fails to hold, \ref{enu:support} will fail to hold.  Hence, $\Theta_I \neq \emptyset$ only if
  $P\big(Y\in [\underline{y}_d, \overline{y}_d]|D = d\big) = 1$ for any $d\in \{0, 1\}$. 

  Suppose that $P\big(Y\in [\underline{y}_d, \overline{y}_d]|D = d\big) = 1$ for any
  $d\in \{0, 1\}$ hold.  Then, we know that for each
 $z=1,...,k$ and each $d$, $\underline{y}_d \le \underline{q}_{dz} \le \overline{q}_{dz} \le \overline{y}_d$. Construct
 $Y_{dz}$ as the following for each $z$ and $d$:
 \begin{equation*}
   Y_{dz} = \indicator(Z = z, D = d)Y + \indicator(Z = z, D\neq d) \underline{y}_d + \indicator(Z\neq z)
   \underline{q}_{dz}. 
 \end{equation*}
 By construction, $\theta_d = \sum_{z}P(Z = z) E Y_{dz} = \sum_{z}P(Z = z)\underline{q}_{dz} = E[\underline{Y}_d]$.
 Moreover, one can check that this construction also satisfies assumptions \ref{enu:support} and \ref{enu:MI}. 
 Similarly, for each $d$,  we can construct $Y'_{dz}$ as 
 \begin{equation*}
   Y'_{dz} = \indicator(Z = z, D = d)Y + \indicator(Z = z, D\neq d) \overline{y}_d + \indicator(Z\neq z)
   \overline{q}_{dz}. 
 \end{equation*}
 Again, $Y'_{dz}$ satisfies assumptions \ref{enu:support} and \ref{enu:MI} by construction. In addition,  $\theta_d
 = \sum_{z}P(Z = z) E Y'_{dz}  = E[\overline{Y}_d]$. By considering $(Y_{1z}, Y'_{0z})$, $(Y'_{1z}, Y_{0z})$, $(Y_{1z},
 Y_{0z})$ and $(Y'_{1z}, Y'_{0z})$, we conclude that $\Theta_I$ is nonempty and $\Theta_I = \left[E[\underline{Y}_1],
 E[\overline{Y}_1]\right] \times \left[E[\underline{Y}_0], E[\overline{Y}_0]\right]$.
\end{proof}

\subsection{Proof of Proposition \ref{prop:intersection_bound}}
Recall the $A$ in the introductory example is the set of all condition \eqref{eq:uncond_implementation} indexed by $h\in \mathcal{H}^+_1$.

Let us first show $\Theta^*_I$ is equal the interval specified in \eqref{eq:introductory_example_mfaoiw}. By Lemma \ref{lem:lemma2}, we know that for each
$\theta\in (\overline{\gamma}, \underline{\gamma})$, there exists some $A' \subseteq A$ such that $\Theta_I(A')
= \{\theta\}$. By Theorem \ref{thm:existence_MDR}, there exists some minimum data-consistent relaxation $A^*$ such that
$A'\subseteq A^*$. Since $\Theta_I(A')$ is singleton, we know $\Theta_I(A^*) = \Theta_I(A') = \{\theta\}$. Therefore,
$(\overline{\gamma}, \underline{\gamma})\subseteq \Theta^*_I$. 

We claim that if $P(E[\underline{Y}|Z]\le \overline{\gamma}) > 0$, then $\overline{\gamma}\in \Theta^*_I$ and there
exists some $A'\subseteq A$ with $\Theta_I(A') = \{\overline{\gamma}\}$. To see why it
is so, suppose $P(E[\underline{Y}|Z]\le \overline{\gamma}) > 0$. Then, define $S_1 = \{z: E[\underline{Y}|Z=z] \le
\overline{\gamma} \}$ and $S_2= \{z: E[\underline{Y}|Z=z] \ge \overline{\gamma} \}$. Since $P(E[\underline{Y}|Z]\le
\overline{\gamma}) > 0$, we know $P(Z\in S_1) > 0$. Since $\underline{\gamma} > \overline{\gamma}$, we know $P(Z\in S_2)
>  0 $. Now, define $h_1(z) = \indicator\left( z\in S_1\right) /P(Z\in S_1) $ and $h_2(z) = \indicator(z\in S_2)
/ P(Z\in S_2)$. Then, $Eh_1(Z) = 1$, $Eh_2(Z) = 1$, $E[h_1(Z)\underline{Y}] \le \overline{\gamma}$ and
$E[h_2(Z)\overline{Y}] \ge \overline{\gamma}$. Therefore, there must exists $\underline{h}$ as a convex combination of
$h_1$ and $h_2$ such that $E \underline{h}(Z) = 1$ and $E[\underline{h}(Z) \underline{Y}] = \overline{\gamma}$. Hence,
$\overline{\gamma}\in \widetilde{\Theta}(\underline{h})$ and $\widetilde{\Theta}(\underline{h})\cap (-\infty,
\overline{\gamma}) = \emptyset$. 
Moreover, for each $i=1,2,...$, construct $\overline{h}_i(z)$ as
$\overline{h}_i(z) = \indicator(E[\overline{Y}|Z=z]\in [\overline{\gamma}, \overline{\gamma}
+ 1/i])$. By the definition of $\overline{\gamma}$, we know $E\overline{h}_i(Z) > 0$ for each $i \ge 1$. Note that
the identified set of \eqref{eq:uncond_implementation} of $\overline{h}_i$, $\widetilde{\Theta}(\overline{h}_i)$ is
\begin{equation*}
  \left[ \frac{E[\overline{h}_i(Z)\underline{Y}]}{E\overline{h}_i(Z)},
  \frac{E[\overline{h}_i(Z)\overline{Y}]}{E\overline{h}_i(Z)} \right]. 
\end{equation*}
Because $E[\underline{Y}|Z] \le E[\overline{Y}|Z]$ almost surely, the law of iterated expectation implies that
$\overline{\gamma} \in \widetilde{\Theta}(\overline{h}_i)$. Moreover, by construction, for any $\theta
> \overline{\gamma}$, $\theta\notin \cap_i \widetilde{\Theta}(\overline{h}_i)$. Therefore, if we define $\mathcal{H}'
= \{\overline{h}_i: i\ge 1\}\cup \{\underline{h}\}$, we have $\cap_{h\in \mathcal{H}'}\widetilde{\Theta}(h)
= \{\overline{\gamma}\}$. This implies that $\overline{\gamma}\in \Theta^*_I$ and there
exists some $A'\subseteq A$ with $\Theta_I(A') = \{\overline{\gamma}\}$.

Next, we claim that if $P(E[\underline{Y}|Z] \le \overline{\gamma}) > 0$, then $\Theta^*_I \cap (-\infty,
\overline{\gamma}) = \emptyset$. To see this, note that Lemma \ref{lem:lemma1} implies that for any $a\in A$,
$\Theta_I(a)\cap [\overline{\gamma}, \underline{\gamma}] \neq \emptyset$. Let $A'$ be an arbitrary minimum data-consistent
relaxation $A'$ of $A$.  Because Assumption \ref{assu:mis_2} holds in this example, the preceding result implies that  for
any minimum data-consistent relaxation $A'$ of $A$, we know $\Theta_I(A')\cap [\overline{\gamma}, \underline{\gamma}] \neq
\emptyset$.  Our claim will be verified if we can prove $\Theta_I(A')\cap (-\infty, \overline{\gamma}) = \emptyset$.
Suppose not, i.e suppose $\Theta_I(A') \cap (-\infty, \overline{\gamma}) \neq \emptyset$. Because $\Theta_I(A')$ is a closed
interval, the fact that $\Theta_I(A')\cap [\overline{\gamma}, \underline{\gamma}] \neq \emptyset$ implies that
$\overline{\gamma}\in \Theta_I(A')$. Because we've proven that $\cap_{h\in \mathcal{H}'}\widetilde{\Theta}(h)
= \{\overline{\gamma}\}$, and because $A'$ is a minimum data-consistent relaxation, we know $\Theta_I(A')
= \{\overline{\gamma}\}$ which leads to contradiction.

Next, we claim that if $P(E[\underline{Y}|Z] \le \overline{\gamma}) = 0$, then $\Theta^*_I \cap (-\infty,
\overline{\gamma}] = \emptyset$. To
see this, note that $P(E[\underline{Y}|Z] \le \overline{\gamma}) = 0$ implies $P(E[\underline{Y}|Z] > \overline{\gamma})
= 1$. Therefore, for any $h\in \mathcal{H}^+_1$, $E[h(Z) (\theta - \underline{Y})] \ge 0$ implies that 
\begin{eqnarray*}
&&E[h(Z) (\theta - \underline{Y})] \ge 0\\
&\Rightarrow & E[h(Z) \underline{Y}] \le E[h(Z)] \theta \\
& \Leftrightarrow & E[h(Z)E[\underline{Y}|Z]] \le E[h(Z)] \theta \\
& \Rightarrow & E[h(Z)] \overline{\gamma} < E[h(Z)]\theta
\end{eqnarray*}
where the last inequality follows from the fact that $P(E[\underline{Y}|Z] > \overline{\gamma}) = 1$. Therefore, we know
for any $h\in \mathcal{H}^+_1$, $\widetilde{\Theta}(h)\cap (-\infty, \overline{\gamma}] = \emptyset$. This implies
$\Theta^*_I \cap (-\infty, \overline{\gamma}] = \emptyset$.

Following similar steps as above, we can also prove the following results:
\begin{itemize}
  \item If $P(E[\overline{Y}|Z] \ge \underline{\gamma}) > 0$, then $\underline{\gamma}\in \Theta^*_I$ and there exists
    some $A'\subseteq A$ with $\Theta_I(A') = \{\underline{\gamma}\}$.
  \item If $P(E[\overline{Y}|Z] \ge \underline{\gamma}) > 0$, then $\Theta^*_I \cap (\underline{\gamma}, +\infty)
    = \emptyset$. 
  \item If $P(E[\overline{Y}|Z] \ge \underline{\gamma}) = 0$, then $\Theta^*_I \cap [\underline{\gamma}, +\infty)
    = \emptyset$. 
\end{itemize}

Combining these results and that $(\overline{\gamma}, \underline{\gamma})\subseteq \Theta^*_I$, we conclude that 
$\Theta^*_I$ is equals to the interval specified in \eqref{eq:introductory_example_mfaoiw}. 


\section{Proofs for Additional Results}\label{sec:additional_result_proof}
\subsection{Proof for Proposition \ref{prop:moment_ineq_misleading}}
We need to verify that the $\mathscr{C}$ constructed in \eqref{eq:mathscrC_in_moment_ineq} satisfies all three requirements in Assumption \ref{assu:mis_1}. 

First, we are going to show that $\forall A'\in \mathscr{C}$, $A'$ is data-consistent and consists of finite elements in $A$.  Fix an arbitrary $A'\in \mathscr{C}$.
By the construction of $\mathscr{C}$, $A'$ only contains one element in $A$: \eqref{eq:uncond_moment_ineq} holds for some $w = h_{z,\epsilon}\in \mathcal{W}^*$. By assumption, there exists some $\theta\in \Theta$ such that $E[m(X;\theta)|Z] \le 0$ for all most every $Z$ satisfying $\norm{Z - z} \le \delta(z)$. This implies that this $\theta$ must also satisfy \eqref{eq:uncond_moment_ineq} with $w = h_{z,\epsilon}$ for all $0 < \epsilon < \delta(z)$. This proves $\Theta_I(A') \neq \emptyset$. 

Second, we need to prove that $\Theta_I(\cup_{A'\in \mathscr{C}}A') = \Theta_I(A)$. Because $\mathcal{W}^* \subseteq \mathcal{W}^+_1$, we know $\cup_{A'\in \mathscr{C}}A' \subseteq A$ so that $\Theta_I(\cup_{A'\in \mathscr{C}}A') \supseteq \Theta_I(A)$. Hence, we only need to show $\Theta_I(\cup_{A'\in \mathscr{C}}A') \subseteq \Theta_I(A)$. By assumption, there exists a function $g(z;\theta)$ such that (\emph{i}) for every $\theta\in \Theta$, $E[m(X,Z;\theta)|Z] = g(Z;\theta)$ almost surely; (\emph{ii}) $g(z;\theta)$ is continuous in $z$ for any given $\theta$; (\emph{iii}) $g(z;\theta)$ is continuous in $\theta$ for any given $z$. Because $\theta\in \Theta_I(A)$ if and only if $\theta$ satisfy \eqref{eq:cond_moment_ineq}, we know that 
\begin{equation*}
\theta \notin \Theta_I(A) \text{ if and only if } P(Z\in \mathcal{Z}_\theta) > 0 \text{ where } \mathcal{Z}_\theta \coloneqq \{z \in \mathcal{Z}: g(z;\theta) > 0\}.
\end{equation*}
Fix an arbitrary $\theta\notin \Theta_I(A)$. Because $g(\cdot,\theta)$ is a continuous function of $z$ given $\theta$, $\mathcal{Z}_\theta$ is an open set of $z$. Therefore, there must exists some $z\in \mathcal{Z}$ and $\epsilon \in (0, \delta(z))$ such that $\{z': \norm{z' - z} < \epsilon \} \subseteq \mathcal{Z}_\theta$. As a result, the law of iterated expectation implies that
\begin{equation*}
E[h_{z,\epsilon}(Z) g(Z;\theta)] = E[h_{z,\epsilon}(Z)m(X;\theta)] > 0.
\end{equation*}
This means that $\theta\notin \Theta_I(\cup_{A'\in \mathscr{C}}A')$. Thus, we have proven that $\theta\notin \Theta_I(\cup_{A'\in \mathscr{C}}A')$ if $\theta\notin \Theta_I(A)$, which is equivalent to $\Theta_I(\cup_{A'\in \mathscr{C}}A') \subseteq \Theta_I(A)$.

Finally, we need to show that $\Theta_I(A')$ is compact for any $A'\in \mathscr{C}$. Fix an arbitrary $A' \in \mathscr{C}$. By the construction of $\mathscr{C}$, $A'$ only contains one element in $A$: \eqref{eq:uncond_moment_ineq} holds for some $w  \in \mathcal{W}^*$. Define $\kappa(\theta) \coloneqq E[w(Z)g(Z;\theta)]$. Because $g(z;\theta)$ is continuous in $\theta$ for any given $z$, and because, by assumption, $\sup_{\theta\in \Theta}\norm{g(z;\theta)} = \sup_{\theta\in \Theta}\norm{E[m(X;\theta)|Z = z]} \le \gamma(z)$ and $E|\gamma(Z)| < \infty$, and because $w(\cdot)$ is a bounded function, the dominated convergence theorem implies that $\kappa(\theta)$ is a continuous at any $\theta\in \Theta$. The law of iterated expectations implies that $E[w(Z)m(X;\theta)] = E[w(Z)g(Z;\theta)] \eqqcolon \kappa(\theta)$.  Therefore, $\Theta_I(A') = \{\theta\in \Theta: \kappa(\theta) \le 0\}$. Because $\kappa$ is continuous in $\theta$, we know $\Theta_I(A')$ is a closed set. Because $\Theta$ is compact by assumption, $\Theta_I(A')$ is compact. 

\subsection{Proof of Proposition \ref{prop:capcity_ineq_prop}}
We need to verify that $\mathscr{C}$ constructed in \eqref{eq:mathscrC_in_random_set} satisfies all three requirements in Assumption \ref{assu:mis_1}

First, we are going to show that $\forall A'\in \mathscr{C}$, $A'$ is data-consistent and consists of finite elements in $A$.  Fix an arbitrary $A'\in \mathscr{C}$. By the construct of $\mathscr{C}$ in \eqref{eq:mathscrC_in_random_set}, $A'$ is a singleton set which only contains one element in $A$: \eqref{eq:artstein_ineq} holds for some $K\subseteq \mathcal{Y}$. If $K = \mathcal{Y}$, \eqref{eq:artstein_ineq} holds for any $\theta\in \Theta$ because $L(K, X;\theta) = 1$ almost surely in this case. If $K = \emptyset$, \eqref{eq:artstein_ineq} holds for any $\theta\in \Theta$ because $P_F(Y\in K|X) = 0$ almost surely in this case.

If $K \subsetneq \mathcal{Y}$ and $K \neq \emptyset$, pick an arbitrary $y'\in \mathcal{Y}\backslash K$. By \ref{enu:finite_y_1}, we know $\inf_{x\in \mathcal{X}} P(Y = y'|X=x) > 0$. Therefore, we know 
\begin{equation*}
\sup_{x\in \mathcal{X}} P_F(Y \in K|X = x)= 1 - \inf_{x\in \mathcal{X}} P_F(Y \notin K|X = x)
\le  1 - \inf_{x\in \mathcal{X}} P_F(Y = y'|X = x) < 1
\end{equation*}
On the other hand, pick an arbitrary $y''\in K$, \ref{enu:finite_y_2} implies that there exists some sequence $\theta_k\in \Theta$ such that 
\begin{equation*}
\inf_{x\in \mathcal{X}} L(\{y''\}, x;\theta_k) \to 1 \text{ as }k\to \infty.
\end{equation*}
Therefore, there must exist some $\theta^*\in \Theta$ such that $\inf_{x\in \mathcal{X}} L(\{y''\}, x;\theta^*) \ge \sup_{x\in \mathcal{X}} P_F(Y \in K|X = x)$. This implies that 
\begin{equation*}
\sup_{x\in \mathcal{X}} P_F(Y \in K|X = x) \le \inf_{x\in \mathcal{X}} L(\{y''\}, x;\theta^*) \le \inf_{x\in \mathcal{X}} L(K, x;\theta^*) 
\end{equation*}
the last equality hold because $y''\in K$ implies that $L(\{y''\}, x;\theta^*) \le L(K, x;\theta^*)$ for any $x$. The above inequality implies that $\theta^*\in \Theta_I(A')$, because
\begin{equation*}
P_F(Y \in K | X) \le \sup_{x\in \mathcal{X}} P_F(Y \in K|X = x) \le \inf_{x\in \mathcal{X}} L(K, x;\theta^*)  \le L(K, X;\theta^*)  \text{ almost surely}
\end{equation*}
Hence, $\Theta_I(A')$ is nonempty. That is, we have shown that every $A'\in \mathscr{C}$ is data-consistent. 

Second, we need to prove $\Theta_I(\cup_{A'\in \mathscr{C}}A') = \Theta_I(A)$. This is trivial, because $\cup_{A'\in \mathscr{C}}A' = A$ by the construction of $\mathscr{C}$ in \eqref{eq:mathscrC_in_random_set}. 

Finally, because $\mathcal{Y}$ is a finite set, $A$ is a finite set. By the construction of $\mathscr{C}$ in \eqref{eq:mathscrC_in_random_set}, $\mathscr{C}$ is also a finite set. This completes the proof.

\subsection{Proof for Theorem \ref{thm:existence_MDR}}
Let us start with two trivial cases: (\emph{i}) suppose $\Theta_I(A')= \emptyset$ for any $A'\subseteq A$. Then, $\emptyset$ is the minimum data-consistent relaxation in this case; (\emph{ii}) suppose $\Theta_I(A) \neq \emptyset$. Then, $A$ is the minimum data-consistent relaxation. Next, let us consider the following nontrivial case.

Suppose $\Theta_I(A)= \emptyset$ and there exits some $A_0\subseteq A$ such that $\Theta_I(A_0) \neq \emptyset$. We are going to show that there exists some minimum data-consistent relaxation $\widetilde{A}$ such that $A_0\subseteq \widetilde{A}$. Let's consider two cases:

\paragraph{\bf Case 1: \ref{enu:T3_1} holds.} Because $\Theta_I(A) = \emptyset$ while $\Theta_I(A') \neq \emptyset$, $A_0$ cannot be $A$ so that $A\backslash A_0$ is nonempty. Because of \ref{enu:T3_1}, $A \backslash A_0$ is a finite set. Enumerate it as $A \backslash A_0 = \{a_1, ..., a_k\}$. Construct $A_1, ..., A_k$ iteratively as follows: For any $i=1,...,k$, define $A_i = A_{i-1}\cup\{a_i\}$ if 
$\Theta_I(A_{i-1}\cup \{a_i\})\neq \emptyset$, and define $A_i = A_{i-1}$ if otherwise. By construction, for each $i=1,...,k$, $\Theta_I(A_i) \neq \emptyset$. Moreover, if $a_i\notin A_k$, we must have $\Theta_I(A_k \cup \{a_i\}) = \emptyset$ because $\Theta_I(A_{i}\cup \{a_i\}) = \emptyset$ and $\Theta_I(A_k \cup \{a_i\}) \subseteq \Theta_I(A_{i}\cup \{a_i\})$. Therefore, $A_k$ must be a minimum data-consistent relaxation, and $A_0 \subseteq A_k$ by construction.

\paragraph{\bf Case 2: \ref{enu:T3_2} holds.} 
Define $\mathscr{A} = \{A': A'\subseteq A,\  A_0\subseteq A'\text{ and }\Theta_I(A')\neq
\emptyset\}$. $\mathscr{A}$ is not empty because $A_0\in \mathscr{A}$. We are going to prove that there exists some minimum data-consistent relaxation $\widetilde{A}$ such that $A_0\subseteq \widetilde{A}$, which is equivalent to show the following statement:
\begin{equation}\label{eq:statement2_in_proof}
\begin{aligned}
&\mathscr{A} \textnormal{ has a maximum element }\widetilde{A}\in \mathscr{A} \textnormal{ in terms of partial order }\subseteq, \\
&\textnormal{ i.e. there is no }A'\in \mathscr{A} \textnormal{ such that } \widetilde{A} \subsetneq A'.
\end{aligned}
\end{equation}

To prove \eqref{eq:statement2_in_proof}, we are going to invoke Zorn's lemma. Let $\mathscr{Z}$ be an arbitrary nonempty chain in $\mathscr{A}$ in terms of $\subseteq$. Because $\mathscr{Z}$ is a chain, for any $A'$ and $A''$ in $\mathscr{Z}$, there is either $A'\subseteq A''$ or $A''\subseteq A'$, so that there is either $\Theta_I(A')\subseteq \Theta_I(A'')$ or $\Theta_I(A'')\subseteq \Theta_I(A')$. Define $A^{\dagger} =\cup_{A'\in \mathscr{Z}}A'$. Because of \ref{enu:T3_2}, $\Theta_I(A^{\dagger}) = \cap_{A'\in \mathscr{Z}}\Theta_I(A')$. By Lemma \ref{lem:chain_compact}, $\Theta_I(A^{\dagger})$ is nonempty. This implies that $A^{\dagger}\in \mathscr{A}$. Moreover, by construction, $A'\subseteq A^{\dagger}$ for any $A' \in \mathscr{Z}$. Therefore, we have shown that there exists an upper bound $A^{\dagger}$ in $\mathscr{A}$ in terms of partial order $\subseteq$ for any nonempty chain $\mathscr{Z}$ in $\mathscr{A}$. Zorn's lemma then implies \eqref{eq:statement2_in_proof}.

\subsection{Proof of Theorem \ref{thm:compatible_submodel_advanced}} \label{sec:proof-comp-submodels-advanced} 

We first prove the first part of the theorem. 

\paragraph{\bf \ref{enu:ind_assumptions} $\Rightarrow$ \ref{enu:unique_MDR}:} 

Construct $A^* = \{a\in A: \Theta_I(a) \neq \emptyset\}$. We are going to show that $A^*$ is the only minimum data-consistent relaxation. For any $a\notin A^*$, we have $\Theta_I(a) = \emptyset$ by the construction of $A^*$. Hence, for any $a\notin A^*$, $\Theta_I(A^*\cup\{a\}) = \emptyset$ because $\Theta_I(A^*\cup\{a\}) \subseteq \Theta_I(a)$. Moreover, \ref{enu:ind_assumptions} implies that $\Theta_I(A^*) \neq \emptyset$ because every $a\in A^*$ is data-consistent. Therefore, $A^*$ is a minimum data-consistent relaxation. 

Suppose, for the purpose of contradiction, there exists another minimum data-consistent relaxation $A'$ different from $A^*$. Because $A'$ is a minimum data-consistent relaxation, there is no $A'\subseteq A^*$. Therefore, $A'\backslash A^*$ must be nonempty. Pick an arbitrary $a'\in A'\backslash A^*$. Because $a'\notin A^*$, we know $\Theta_I(a') = \emptyset$ by the construction of $A^*$. Therefore, $\Theta_I(A') = \emptyset$ because $\Theta_I(A') \subseteq \Theta_I(a')$.  This contradicts to the fact that $A'$  is a minimum data-consistent relaxation.

\paragraph{\bf If either \ref{enu:T3_1} or \ref{enu:T3_2} holds, \ref{enu:unique_MDR} $\Rightarrow$ \ref{enu:ind_assumptions}:} Suppose either \ref{enu:T3_1} or \ref{enu:T3_2} hold. Let $A^*$ denote the unique minimum data-consistent relaxation. First of all, we are going to prove the following statement:
\begin{equation}\label{eq:statement3_in_proof}
A^* = \{a\in A: \Theta_I(a) \neq \emptyset\}.
\end{equation}
Because $A^*$ is a minimum data-consistent relaxation, $\Theta_I(A^*) \neq \emptyset$. Because $\Theta_I(A^*) \subseteq \Theta_I(a)$ for every $a\in A^*$, we know $\Theta_I(a)$ is nonempty for every $a\in A^*$. This implies that $A^* \subseteq \{a\in A: \Theta_I(a) \neq \emptyset\}$. To show, $A^* \supseteq \{a\in A: \Theta_I(a) \neq \emptyset\}$, note that
 \begin{itemize}
 \item when $A^* = A$, we have $A^* \supseteq \{a\in A: \Theta_I(a) \neq \emptyset\}$ trivially.
 \item when $A^* \subsetneq A$. Pick an arbitrary $a'\notin A^*$. Suppose, for the purpose of contradiciton, $\Theta_I(a') \neq \emptyset$. Then, Theorem \ref{thm:existence_MDR} implies that there exists some minimum data-consistent relaxation $\widetilde{A}$ with $a'\in \widetilde{A}$. Because $a'\in \widetilde{A}$ and $a'\notin A^*$, we must have $\widetilde{A}\neq A^*$, which contradicts to $A^*$ being the unique minimum data-consistent relaxation. Hence, $\Theta_I(a') = \emptyset$ for any $a'\notin A^*$, which is equivalent to $A^* \supseteq \{a\in A: \Theta_I(a) \neq \emptyset\}$.
 \end{itemize}
This proves \eqref{eq:statement3_in_proof}.
 
Because of  \eqref{eq:statement3_in_proof}, $a\in A$ is data-consistent if and only if $a\in A^*$. Therefore, for any $A'\subseteq A$, all $a\in A'$ are data-consistent if and only if $A' \subseteq A^*$. As a result, we can show \ref{enu:ind_assumptions} if the following statement is true:
\begin{equation}\label{eq:statement4_in_proof}
A' \subseteq A^* \text{if and only if }\Theta_I(A') \neq \emptyset.
\end{equation}
To see why \eqref{eq:statement4_in_proof} is indeed true, note that
\begin{itemize}
\item Because $\Theta_I(A^*) \neq \emptyset$, and because $\Theta_I(A^*) \subseteq \Theta_I(A')$ for any $A'\subseteq A^*$, we know $\Theta_I(A') \neq \emptyset$ if $A' \subseteq A^*$. 
\item Fix an arbitrary set $A'$ with $\Theta_I(A') \neq \emptyset$. Because $\Theta_I(A') \subseteq \Theta_I(a)$ for every $a\in A'$, we know $\Theta_I(a)\neq \emptyset$ for every $a\in A'$. Because of \eqref{eq:statement3_in_proof}, this implies that $a\in A^*$ for every $a\in A'$, i.e. $A' \subseteq A^*$. Thus, we have shown that $A'\subseteq A^*$ if $\Theta_I(A') \neq \emptyset$.
\end{itemize}
This completes the proof.

\subsection{Proof of Theorem \ref{thm:interpretation}}
Recall that $\mathscr{A}_R$ denote the collection of all minimum data-consistent relaxations. Because either \ref{enu:T3_1} or \ref{enu:T3_2} holds, $\mathscr{A}_R$ is nonempty. 

First of all, we are going to prove $\Theta^*_I$ is rationalizable. Because $\Theta^*_I = \cup_{A'\in \mathscr{A}_R}\Theta_I(A')$, for any minimum data-consitent relaxation $\widetilde{A}$, we must have $\Theta_I(\widetilde{A})\subseteq \Theta^*_I$. Becuase $\mathscr{A}_R$ is nonempty, we know $\Theta^*_I$ is rationalizable.

Second, we are going to prove $\Theta^*_I$ is nonconflicting. Fix an arbitrary data-consistent subset $A'$ of $A$.  By Theorem \ref{thm:existence_MDR}, there exists a minimum data-consistent relaxation $\widetilde{A}$ such that $A'\subseteq \widetilde{A}$. Because $A'\subseteq \widetilde{A}$, we know $\Theta_I(\widetilde{A})\subseteq \Theta_I(A')$, so that $\Theta_I(A')\cap \Theta_I(\widetilde{A}) \neq \emptyset$. Because $\widetilde{A}\in \mathscr{A}_R$ and $\Theta^*_I = \cup_{\widetilde{A}'\in \mathscr{A}_R}\Theta_I(\widetilde{A}')$, we know $\Theta_I(A')\cap \Theta^*_I \neq \emptyset$.

\subsection{Proof for Theorem \ref{thm:smallest_iff_simple}}
Suppose the smallest rationalizable and nonconflicting set exists. Denote it as $S^*$. We are going to show that $S^* = \Theta^*_I$ when \ref{enu:no_nested_MDR} is true. By Theorem \ref{thm:interpretation}, $\Theta^*_I$ is both rationalizable and nonconflicting. Therefore, $S^* \subseteq \Theta^*_I$. What left to show is $\Theta^*_I \subseteq S^*$.

Define $\mathscr{A}_R$ to be the collection of all minimum data-consistent relaxation. Define $\mathscr{A}_1 = \{A'\in \mathscr{A}_R: \Theta_I(A')\subseteq S^*\}$ and $\mathscr{A}_2 = \mathscr{A}_R \backslash \mathscr{A}_1$. Because $S^*$ is rationalizable, there exists some data-consistent $A'\subseteq A$ such that $\Theta_I(A') \subseteq S^*$. By Theorem \ref{thm:existence_MDR}, there exists some $\tilde{A}\in \mathscr{A}_R$ such that $A'\subseteq \tilde{A}$. Therefore, $\Theta_I(\tilde{A})\subseteq \Theta_I(A') \subseteq S^*$. Hence, $\mathscr{A}_1$ is not empty. 

Because $\mathscr{A}_R = \mathscr{A}_1\cup \mathscr{A}_2$, we know
\begin{equation*}
\Theta^*_I = \left(\cup_{A'\in \mathscr{A}_1}\Theta_I(A') \right) \cup \left(\cup_{A'\in \mathscr{A}_2} \Theta_I(A')\right). 
\end{equation*}
Therefore, to show $\Theta^*_I \subseteq S^*$,  we only need to show $\mathscr{A}_2$ is an empty set. We discuss two cases:

\begin{itemize}
\item Suppose there exists some $A^*\in \mathscr{A}_1$ such that $\Theta_I(A^*)$ contains at least two different elements. We are going to show that $\mathscr{A}_2 = \emptyset$ in this case. Suppose, for the purpose of contradiction, $\mathscr{A}_2$ is nonempty. Pick an arbitrary $A^\dagger$ within $\mathscr{A}_2$. Fix, also, an arbitrary element $a^*$ in $\Theta_I(A^*)$. Define $S'$ as follows
	\begin{equation*}
		S' = 
		\begin{cases}
			\emptyset & \text{if }\Theta_I(A^*)\cap \Theta_I(A^\dagger) \neq \emptyset\\
			\{a^*\} &  \text{if }\Theta_I(A^*)\cap \Theta_I(A^\dagger) = \emptyset
		\end{cases}
	\end{equation*}
	Define $S^{\dagger}$ as
	\begin{equation*}
	S^\dagger \coloneqq  \Theta_I(A^\dagger) \cup \left( \cup_{A'\in \mathscr{A}_R} \Theta_I(A')\backslash \Theta_I(A^*) \right) \cup S'.
	\end{equation*}
Because $\Theta_I(A^\dagger) \subseteq S^\dagger$, $S^\dagger$ is rationalizable. Because of \ref{enu:no_nested_MDR}, $\Theta_I(A')\backslash \Theta_I(A^*)\neq \emptyset$ for each $A'\in \mathscr{A}_R$ with $\Theta_I(A')\neq \Theta_I(A^*)$, otherwise we would have $\Theta_I(A')\subsetneq \Theta_I(A^*)$ for some $A'\in \mathscr{A}_R$ which violates \ref{enu:no_nested_MDR}. As a result, $S^\dagger\cap \Theta_I(A')\neq \emptyset$ for any $A'\in \mathscr{A}_R$ with $\Theta_I(A')\neq \Theta_I(A^*)$. Next, because $\Theta_I(A^\dagger)\cup S'$ and $\Theta_I(A^*)$ has nonempty intersection by the construction of $S'$, we know that $S^\dagger \cap \Theta_I(A') \neq \emptyset$ 
for any $A'\in \mathscr{A}_R$ with $\Theta_I(A') = \Theta_I(A^*)$. In total, we know $S^\dagger \cap \Theta_I(A') \neq \emptyset$ for any $A'\in \mathscr{A}_R$. Because of this result, and because, by Theorem \ref{thm:existence_MDR}, for any $A$$\subseteq A$, there exists some $A' \in \mathscr{A}_R$ such that $\Theta_I(A')\subseteq \Theta_I(A'')$, we know $S^\dagger$ is nonconflicting. So far, we have shown that $S^\dagger$ is both rationalizable and nonconflicting.

Then, we claim that there is no $\Theta_I(A^*)\subseteq S^\dagger$. By the construction of $S^\dagger$, $ S^\dagger \cap \Theta_I(A^*) = (\Theta_I(A^\dagger) \cup S') \cap \Theta_I(A^*)$. So, this claim is also equivalent to that there is no $\Theta_I(A^*) \subseteq \Theta_I(A^\dagger) \cup S'$. To see why this claim is true, discuss to cases:
\begin{itemize}
	\item when $\Theta_I(A^*)\cap \Theta_I(A^\dagger) \neq \emptyset$, $S' = \emptyset$. Because of \ref{enu:no_nested_MDR}, there is no $\Theta_I(A^*) \subsetneq \Theta_I(A^\dagger)$. Because $A^*\in \mathscr{A}_1$ and $A^\dagger \in \mathscr{A}_2$, there is no $\Theta_I(A^*) = \Theta_I(A^\dagger)$. In total, there is no $\Theta_I(A^*) \subseteq \Theta_I(A^\dagger)$. Because $S' = \emptyset$, we know there is no $\Theta_I(A^*)\subseteq \Theta_I(A^\dagger) \cup S'$ is this case.
	\item when $\Theta_I(A^*)\cap \Theta_I(A^\dagger) = \emptyset$, $(\Theta_I(A^\dagger) \cup S') \cap \Theta_I(A^*) = S'\cap \Theta_I(A^*) = \{a^*\}$. Because $\Theta_I(A^*)$ contains at least two different elements, there is no $\Theta_I(A^*)\subseteq S'$, which implies there is no $\Theta_I(A^*)\subseteq \Theta_I(A^\dagger) \cup S'$ is this case.
\end{itemize}
In total, we have verify the claim that there is no $\Theta_I(A^*)\subseteq S^\dagger$. However, because $S^*$ is the smallest rationalizable and nonconflicting set, and because $S^\dagger$ is both rationalizable and nonconflicting, there must be $S^* \subseteq S^\dagger$, which further implies $\Theta_I(A^*) \subseteq S^\dagger$ because $A^*\in \mathscr{A}_1$. This leads to the contradiction. As a result, $\mathscr{A}_2$ must be empty in this case.

\item Suppose that, for any $A'\in \mathscr{A}_1$,  $\Theta_I(A')$ only contains one element. We are going to show that $\mathscr{A}_2 = \emptyset$ in this case. Suppose, for the purpose of contradiction, that $\mathscr{A}_2$ is nonempty. By the construction of $\mathscr{A}_1$ and $\mathscr{A}_2$, there is no $\Theta_I(A_1) = \Theta_I(A_2)$  for any $A_1\in \mathscr{A}_1$ and $A_2\in \mathscr{A}_2$. Because of \ref{enu:no_nested_MDR}, there is no $\Theta_I(A_1) \subsetneq \Theta_I(A_2)$  for any $A_1\in \mathscr{A}_1$ and $A_2\in \mathscr{A}_2$. In total, there is no $\Theta_I(A_1) \subseteq \Theta_I(A_2)$ for any $A_1\in \mathscr{A}_1$ and $A_2\in \mathscr{A}_2$. Because  $\Theta_I(A')$ only contains one element for any $A'\in \mathscr{A}_1$, for any $S \subseteq \Theta$, $\Theta_I(A') \cap S \neq \emptyset$ would mean $\Theta_I(A')\subseteq S$. Therefore, we must have $\Theta_I(A')\cap \Theta_I(A'') = \emptyset$ for any $A'\in \mathscr{A}_1$ and $A''\in \mathscr{A}_2$. Define $S_1$ as 
\begin{equation*} 
S_1 = \cup_{A'\in \mathscr{A}_1}\Theta_I(A').
\end{equation*} 
And, define $S_2$ as
\begin{equation*}
S_2 = \left( \cup_{A'\in \mathscr{A}_2} \Theta_I(A') \right) \cap S^*.
\end{equation*}
Because $\Theta_I(A')\cap \Theta_I(A'') = \emptyset$ for any $A'\in \mathscr{A}_1$ and $A''\in \mathscr{A}_2$, $S_1\cap S_2 = \emptyset$. Moreover, because $S^*$ is nonconflicting, $S^*\cap \Theta_I(A')\neq \emptyset$ for each $A'\in \mathscr{A}_2$. Therefore, $S_2\cap \Theta_I(A')\neq \emptyset$ for any $A'\in \mathscr{A}_2$. This implies that $S_1\cup S_2$ is nonconflicting. Moreover, because $\mathscr{A}_1$ is nonempty, $S_1$ is rationalizable so that $S_1\cup S_2$ is also rationalizable. As a result, $S_1\cup S_2$ is both nonconflicting and rationalizable. Next, define 
	\begin{equation*}
	S_3 = \left(\cup_{A'\in \mathscr{A}_2} \Theta_I(A') \right) \backslash S^*.
	\end{equation*}
By the definition of $\mathscr{A}_1$, $S_1\subseteq S^*$. Therefore, $S_1\cap S_3 = \emptyset$. Also, we have $S_2 \cap S_3 = \emptyset$ by the construction of $S_2$ and $S_3$. In addition, by the construction of $\mathscr{A}_2$, for each $A'\in \mathscr{A}_2$, $\Theta_I(A')\backslash S^* \neq \emptyset$. Therefore, $S_3\cap \Theta_I(A') \neq \emptyset$ for each $A'\in \mathscr{A}_2$. This implies that $S_1\cup S_3$ is nonconflicting. Moreover, because $\mathscr{A}_1$ is nonempty, $S_1$ is rationalizable so that $S_1\cup S_3$ is also rationalizable. As a result, $S_1\cup S_3$ is both nonconflicting and rationalizable.

So far, we have shown that $S_1\cup S_2$ is both rationalizable and nonconflicting. And, we have shown that $S_1\cup S_3$ is both rationalizable and nonconflicting. Because $S^*$ is the smallest rationalizable and nonconflicting set, we must have $S^* \subseteq S_1\cup S_2$ and $S^* \subseteq S_1\cup S_3$. In other words, 
\begin{equation*}
S^* \subseteq (S_1\cup S_2) \cap (S_1 \cup S_3)
\end{equation*}
However, because $S_1\cap S_2 = \emptyset$, $S_1\cap S_3 = \emptyset$ and $S_2 \cap S_3 = \emptyset$, we know $(S_1\cup S_2) \cap (S_1 \cup S_3) = S_1$.  As a result, we have
\begin{equation}\label{eq:nafwi23}
S^* \subseteq S_1
\end{equation}
We have already shown that we must have $\Theta_I(A')\cap \Theta_I(A'') = \emptyset$ for any $A'\in \mathscr{A}_1$ and $A''\in \mathscr{A}_2$. Therefore, $S_1 \cap  \Theta_I(A'')  = \emptyset$ for any $A''\in \mathscr{A}_2$. Because $\mathscr{A}_2$ is nonempty, this means that $S_1$ is not nonconflicting. Because of \eqref{eq:nafwi23}, this implies that $S^*$ is not nonconflicting, which contradicts to the fact that $S^*$ is both rationalizable and nonconflicting. 
\end{itemize}
We have shown that $\mathscr{A}_2$ must be empty in both of the above cases. This completes the proof.

\subsection{Proof for Theorem \ref{thm:smallest_if}}
By Theorem \ref{thm:interpretation}, $\Theta^*_I$ is both rationalizable and nonconflicting. If we could show $\Theta^*_I\subseteq S$ for an arbitrary set $S$ that is both rationalizable and nonconflicting, then 
 $\Theta^*_I$ would be the smallest rationalizable and nonconflicting set. Fix an arbitrary set $S$ that is both rationalizable and nonconflicting.

We first prove that \ref{enu:S6_3_1} implies $\Theta^*_I\subseteq S$.  Because $S$ is rationalizable, there exists some data-consistent $A'\subseteq A$ such that $\Theta_I(A')\subseteq S$. By Theorem \ref{thm:existence_MDR}, there exits a minimum data-consistent relaxation $\tilde{A}$ such that $A'\subseteq \tilde{A}$. Because $A' \subseteq \tilde{A}$, we know $\Theta_I(\tilde{A})\subseteq \Theta_I(A')$. Because of \ref{enu:S6_3_1}, $\Theta^*_I = \Theta_I(\tilde{A})$. Therefore, $\Theta^*_I = \Theta_I(\tilde{A})\subseteq \Theta_I(A') \subseteq S$. 

Next, we prove that \ref{enu:S6_3_2} implies $\Theta^*_I \subseteq S$. Let $\mathscr{A}_R$ denote the collection of all minimum data-consistent relaxation. Because $S$ is nonconflicting, we must have $\Theta_I(A')\cap S\neq \emptyset$ for each $A'\in \mathscr{A}_R$. For each $A'\in \mathscr{A}_R$, because $\Theta_I(A')$ only contains one element under \ref{enu:S6_3_2}, $\Theta_I(A')\cap S\neq \emptyset$ is equivalent to $\Theta_I(A')\subseteq  S$. As a result, we know $\Theta^*_I \subseteq S$ because $\Theta^*_I = \cup_{A'\in \mathscr{A}_R}\Theta_I(A')$.

\subsection{Proof for Theorem \ref{thm:compared_to_perturbation_corrected}}
Let $\mathscr{A}_R$ be the set of all minimum data-consitent relaxations. Suppose for each $\tilde{A}\in \mathscr{A}_R$, $\Theta_I(\tilde{A})$ is singleton. Then, we want to show that $\Theta^*_I = \cup_{\widetilde{A}\in \mathscr{A}_R}\Theta_I(\widetilde{A})$ is a subset of $\Theta^\dagger_I$ no matter which type of relaxation is chosen by the researcher. 

To show this result, pick an arbitrary $\tilde{A}\in \mathscr{A}_R$. Define $\delta^*: A\to [0, 1]$ as follows: $\delta^*(a) = 0$ if $a\in \tilde{A}$, and $\delta^*(a) = 1$ if $a\notin \tilde{A}$. By construction, $\Theta_I(A(\delta^*)) = \Theta_I(\tilde{A})$ only contains one element. Therefore, we must have $\delta^*\in FF$, because there cannot exist some $\delta < \delta^*$ with $\Theta_I(A(\delta))\neq \emptyset$ and $\Theta_I(A(\delta)) \subsetneq \Theta_I(A(\delta^*))$. Because this holds for any $\tilde{A}\in \mathscr{A}_R$, we know $\Theta^*_I \subseteq \Theta^\dagger_I$.
\end{appendix}

%




\end{document}